\newcommand{\constant}{\gamma}
\newcommand{\state}{P}
\newcommand{\statePhase}{P^r}
\newcommand{\CP}{\mathcal{P}}
\newcommand{\ord}{\tau}
\newcommand{\potentialDPhase}[1]{\Phi_{D_r}\left( #1 \right)}
\newcommand{\potentialAll}[1]{\Phi^N\left( #1 \right)}
\newcommand{\potentialFixed}[1]{\Phi^{N\backslash D_r}\left( #1 \right)}
\newcommand{\costshare}[1]{\chi_{ie}\left( #1 \right)}
\newcommand{\weightsSum}[1]{f_e\left(#1\right)}
\newcommand{\weightsSumSet}[2]{f^{#1}_e\left(#2\right)}
\newcommand{\movingPlayersPhase}{D_r}
\newcommand{\allPlayers}{N}
\newcommand{\fixedBPlayers}{\allPlayers \backslash B}
\newcommand{\playerCosts}[1]{X_i\left( #1 \right)}
\newcommand{\playerCostsMax}{X_{\text{max}}}
\newcommand{\playerCostsMin}{X_{\text{min}}}
\newcommand{\bestresponse}[1]{\mathcal{BR}_i\left( #1 \right)}
\newcommand{\sMove}{$s$-move}
\newcommand{\tMove}{$t$-move}
\newcommand{\tStretch}{t\text{-}\Omega_D}
\theoremstyle{plain}
\newtheorem{theorem}{Theorem}
\newtheorem{lemma}[theorem]{Lemma}
\newtheorem{corollary}[theorem]{Corollary}
\newtheorem{proposition}[theorem]{Proposition}
\newtheorem{claim}[theorem]{Claim}
\begin{document}
\pagestyle{headings}  

\title{Computing Approximate Pure Nash Equilibria in Shapley Value Weighted Congestion Games
	\thanks{This work was partially supported by the German Research Foundation (DFG) within the Collaborative Research Centre ``On-The-Fly Computing'' (SFB 901) and by EPSRC grant EP/L011018/1.\newline\newline The final publication is available at Springer via http://dx.doi.org/10.1007/978-3-319-71924-5\_14.}}

%
%
\author[1]{Matthias Feldotto}
\author[2]{Martin Gairing}
\author[3]{Grammateia Kotsialou}
\author[1]{Alexander Skopalik}
%
%
%
\affil[1]{Paderborn University, Paderborn, Germany}
\affil[ ]{\{feldi,skopalik\}@mail.upb.de}
\affil[2]{University of Liverpool, Liverpool, UK}
\affil[ ]{m.gairing@liverpool.ac.uk}
\affil[3]{King's College London, London, UK}
\affil[ ]{grammateia.kotsialou@kcl.ac.uk}

\date{}

\maketitle              

\begin{abstract}
	We study the  computation of approximate pure Nash equilibria in  Shapley value (SV) weighted congestion games, introduced in \cite{DBLP:journals/teco/KolliasR15}. This class of games considers weighted congestion games in which Shapley values are used  as an alternative (to proportional shares) for distributing the total cost of each resource among its users. We focus on the interesting subclass of such games with polynomial resource cost functions and  present an algorithm that computes approximate pure Nash equilibria with a polynomial number of strategy updates. Since computing a single strategy update is hard, we apply sampling techniques which allow us to achieve polynomial running time. The algorithm builds on the algorithmic ideas of \cite{DBLP:journals/teco/CaragiannisFGS15}, however, to the best of our knowledge, this is the first algorithmic result on computation of approximate equilibria using  other than proportional shares as player costs  in this setting. We present a novel relation that approximates the Shapley value of a player by her proportional share and vice versa. As side results, we upper bound the approximate price of anarchy of such games and  significantly improve 
	the best known factor for computing approximate pure Nash equilibria in weighted congestion games  of \cite{DBLP:journals/teco/CaragiannisFGS15}.
\end{abstract}
\section{Introduction}
In many applications the state of a system depends on the behavior of individual participants
that act selfishly in order to minimize their own private cost. Non-cooperative game theory uses the concept of Nash equilibria as a tool for the theoretical analysis of such systems.  
A Nash equilibrium is a state in which no participant has an incentive to deviate to another strategy.
While mixed Nash equilibria, i.e., Nash equilibria in randomized strategies, are guaranteed to
exist under mild assumptions on the players’ strategy spaces and the private cost functions
 they are often hard to interpret. As a consequence, attention
is often restricted to pure Nash equilibria, i.e., Nash equilibria in deterministic strategies.

Rosenthal~\cite{RO73}  introduced a class of games, called
\emph{congestion games} that models a variety of  strategic interactions and is guaranteed to have pure Nash equilibria. 
In a congestion game, we are given a finite set of players $N$ and a finite set of resources $E$. A strategy
 of each player $i$ is to choose a subset of the resources out of a set $\CP_i$ of subsets of resources
allowable to her. In each strategy profile, each player pays for all used resources where the
cost of a resource $e\in E$ is a function $c_e$ of the number of players using it.  
Rosenthal used an elegant potential function argument to show that iterative improvement steps by the players converge to a pure Nash equilibrium and hence its existence is guaranteed.

Note that in congestion games each player using a resource has the same influence on the cost of this resource. 
To alleviate this limitation, \cite{MI96} and~\cite{DBLP:journals/tcs/FotakisKS05} studied a natural generalization called \emph{weighted congestion games} in which each player $i$ has a weight $w_i$  and the \emph{joint cost} 
of the resource is $f_e\cdot c_e(f_e)$, where $f_e$ is the total weight of players using $e$.
The joint cost of resource $e$ has to be covered by the set of players $S_e$ using it, i.e.,
$\sum_{i\in S_e}\chi_{ie}=f_e\cdot c_e(f_e)$, where $\chi_{ie}$ is the cost share of player $i$ 
on resource $e$.
The  \emph{cost sharing method} of the game defines
how exactly the joint cost of a resource is divided into individual cost shares $\chi_{ie}$. 
For weighted congestion games, the most widely studied cost sharing method is 
\emph{proportional sharing} (PS), where the cost share 
of a player is proportional to her weight, i.e., $\chi_{ie}=w_i \cdot c_e(f_e)$. 
Unfortunately,  weighted congestion games with proportional  sharing in general do not admit a 
\emph{pure} Nash equilibrium (see \cite{DBLP:journals/mor/HarksK12} for a characterization).

Kollias and Roughgarden \cite{DBLP:journals/teco/KolliasR15} proposed to use the \emph{Shapley value} (SV)
for sharing the cost of a resource in weighted congestion games. 
In the \emph{Shapley cost-sharing} method, the cost share 
of a player on a resource is the average marginal cost increase caused by her over all permutations of the players.
Using the Shapley value restores the existence of a potential function and therefore the existence of
pure Nash equilibria to such games \cite{DBLP:journals/teco/KolliasR15}. 
   
Potential functions immediately give rise to a simple and natural search procedure to find an equilibrium by performing iterative improvement steps starting from an arbitrary state. Unfortunately, this process may take exponentially many steps, even in the simple case of unweighted congestion games\footnote{Note that in the unweighted case, proportional sharing and Shapley cost sharing coincide.} and linear cost functions \cite{DBLP:journals/jacm/AckermannRV08}.
Moreover, computing  a pure Nash equilibrium in these games is intractable as the problem is  PLS-complete \cite{DBLP:conf/stoc/FabrikantPT04}, even for affine linear cost functions \cite{DBLP:journals/jacm/AckermannRV08}.
This result directly carries over to our game class with Shapley cost-sharing.
Given these intractability results, it is natural to ask for approximation which is formally captured by the concept of an
$\rho$-approximate pure Nash equilibrium.  This is a state from which no player can improve her  cost by a
factor of $\rho \ge 1$.  Recently,  Caragiannis et al. \cite{DBLP:conf/focs/CaragiannisFGS11} provided an algorithm to compute 
$\rho$-approximate Nash equilibria for unweighted congestion games
under proportional sharing. They also generalised their technique to
weighted congestion games \cite{DBLP:journals/teco/CaragiannisFGS15}.

\subsection{Our Contributions}

We present an algorithm to compute  $\rho$-approximate Nash equilibria in weighted congestion games under Shapley cost sharing. In games with polynomial cost functions of degree at most $d$, our algorithm 
achieves an approximation factor asymptotically  close to $\left(\frac{d}{\ln2} \right)^d \cdot poly(d)$.
Similar to \cite{DBLP:journals/teco/CaragiannisFGS15}
our algorithm computes a sequence of improvement steps of polynomial length that yields 
a $\rho$-approximate Nash equilibrium. Hence, our algorithm performs only a polynomial number of strategy updates. 
We show that our algorithm can also be used to compute $\rho$-approximate pure Nash equilibria for weighted congestion games with proportional sharing which improves the approximation factor of $d^{2\cdot d + o(d)}$  in \cite{DBLP:journals/teco/CaragiannisFGS15} to $\left(\frac{d}{\ln2} \right)^d \cdot poly(d)$.

We note that our method does not immediately yield an algorithm with polynomial running time since computing the Shapley cost share of a player and hence an improvement step is computationally hard.
However, we show that there is a polynomial-time randomized approximation scheme that can be used instead. This results in a randomized polynomial time algorithm that computes a strategy profile that is an approximate pure Nash equilibrium with high probability.

In the course of the analysis we exhibit an interesting relation between the Shapley cost share of a player and her proportional share. In the case of polynomial cost functions with constant degree, each of them can be approximated by the other within a constant factor.  This insight leads to an alternative proof to \cite{DBLP:conf/approx/HansknechtKS14} for the existence of approximate pure Nash equilibria in weighted congestion games with proportional cost sharing. 
  
Finally, we derive bounds on the approximate Price of Anarchy which may be of independent interest as they allow to bound the inefficiency of approximately stable states.

\subsection{Further Related Work}
 
Congestion games have been introduced by Rosenthal~\cite{RO73} who proved the existence of pure Nash equilibria by an exact potential function. Games admitting a potential function are called potential games and  each potential game is isomorphic to a congestion game~\cite{MS96}. 
Weighted congestion games were introduced by Milchtaich~\cite{MI96} and studied by Fotakis et al.~\cite{DBLP:journals/tcs/FotakisKS05}. Based on the Shapley value \cite{10.2307/1911054}, the class of weighted congestion games using Shapley values (instead of proportional shares) was introduced by \cite{DBLP:journals/teco/KolliasR15} and it was shown that such games are potential games. \cite{DBLP:journals/mor/GopalakrishnanMW14} extends this result by proving that a weighted generalisation of Shapley values is the only method that guarantee pure Nash equilibria. In contrast, proportional sharing does not guarantee existence of equilibria in general~\cite{DBLP:journals/mor/HarksK12}. 
Further research focuses on the quality of equilibria, measured by the Price of Anarchy (PoA)~\cite{DBLP:conf/stacs/KoutsoupiasP99}. For proportional sharing, Aland et al.~\cite{DBLP:conf/stacs/AlandDGMS06} show tight bounds on the PoA. Gkatzelis et al.~\cite{DBLP:conf/wine/GkatzelisKR14} show that, among all cost-sharing methods that guarantee existence of pure Nash equilibria, Shapley values minimise the worst PoA. Furthermore, tight bounds on PoA for general cost-sharing methods were given~\cite{DBLP:conf/icalp/GairingKK15}. For the extended model with non-anonymous costs by using set functions it was also shown that Shapley cost-sharing is the best method and tight results are given~\cite{DBLP:conf/ciac/KlimmS15,DBLP:journals/teco/RoughgardenS16}.

Computing a pure Nash equilibrium for congestion games was shown to be PLS-complete~\cite{DBLP:conf/stoc/FabrikantPT04} even for games with linear cost function~\cite{DBLP:journals/jacm/AckermannRV08}  or games with only three players~\cite{DBLP:journals/im/AckermannS08}.
Chien and Sinclair~\cite{DBLP:journals/geb/ChienS11} study the convergence towards $(1+\epsilon)$-approximate pure Nash equilibria in symmetric congestion games in polynomial time under a mild assumption on the cost functions.  In contrast, Skopalik and Vöcking show that this result cannot be generalized to asymmetric games and that computing a $\rho$-approximate pure Nash equilibrium is PLS-hard in general~\cite{DBLP:conf/stoc/SkopalikV08}.
Caragiannis et al.~\cite{DBLP:conf/focs/CaragiannisFGS11} give an algorithm which computes an $(2+\epsilon)$-approximate equilibrium for linear cost functions and an $d^{O(d)}$-approximate equilibrium for polynomial cost functions with degree of $d$. 
Weighted congestion games with proportional sharing do not posses pure Nash equilibria in general~\cite{DBLP:journals/tcs/FotakisKS05}. However, the existence of $d+1$-approximate equilibria for polynomial cost functions and $\frac{3}{2}$-approximate equilibria for concave cost functions was shown~\cite{DBLP:conf/approx/HansknechtKS14} and
Caragiannis et al.~\cite{DBLP:journals/teco/CaragiannisFGS15} present an algorithm for weighted congestion games and proportional sharing that computes $\frac{3+\sqrt{5}}{2}+\epsilon$-approximate equilibria for linear cost functions and $d^{2d+o(d)}$-approximate equilibria for polynomial cost functions.  

The computation of approximate equilibria requires the computation of Shapley values. In general, the exact computation is too complex. Mann and Shapley~\cite{mann1962values} suggest a sampling algorithm which was later analyzed by Bachrach et al.~\cite{DBLP:journals/aamas/BachrachMRPRS10} for simple coalitional games and by Aziz and de Keijzer~\cite{DBLP:conf/stacs/AzizK14} for matching games. Finally, Liben-Nowell et al.~\cite{DBLP:conf/cocoon/Liben-NowellSWW12} and  Maleki~\cite{maleki2015addressing} consider cooperative games with supermodular functions which correspond to our class.

\section{Our Model}\label{section:model}
A weighted congestion game is defined as $\mathcal{G} = (N, E, \left(w_i\right)_{i \in N}, \left(\CP_i\right)_{i \in N},\left(c_e\right)_{e \in E})$, where $N$ is the set of players, $E$ the set of resources, $w_i$ is the positive weight of player $i$, $\CP_i\subseteq 2^E$ the strategy set of player $i$ and  $c_e$ the cost function of resource $e$ (drawn from a set $\mathcal{C}$ of allowable cost functions). In this work, $\mathcal{C}$  is the set of polynomial functions with maximum degree $d$ and non-negative coefficients. The set of outcomes of this game is given by $\CP = \CP_1 \times \cdots \times \CP_n$, for an outcome, we write~$P=\left(P_1, \ldots, P_n\right) \in \CP$, where~$P_i\in \CP_i$. Let $(P_{-i}, P_i')$ be the outcome that results when player $i$ changes her strategy from $P_i$ to $P_i'$ and let $(P_{A}, P'_{N\setminus A})$ be the outcome that results when players $i\in A$ play their strategies in $P$ and players $i\in N\setminus A$ the strategies in $P'$.
The set of users of resource $e$ is defined by  $S_e(P)=\{i:e\in P_i\}$ and  the total weight on  $e$  by $f_e(P)=\sum_{i\in S_e(P)}w_i$. Furthermore, let $S_e^A(P)=\{i \in A:e\in P_i\}$ and $f^A_e(P)=\sum_{i\in S^A_e(P)}w_i$ be variants of these definitions with a restricted player set $A\subseteq N$.
The Shapley cost of a player $i$ on a resource $e$ is given as a function of the player's identity, the resource's cost function and her users $A$, i.e., $\chi_e (i, A)$. For simplicity, let $\chi_{ie}(P) = \chi_e (i,  S_e(P))$ be an abbreviation if all players are considered in a state $P$. Let $C_e(x) = x\cdot c_e(x)$. Then, the \emph{joint cost} on a resource $e$ is given by~$C_e(f_e(P)) = f_e(P)\cdot c_e(f_e(P))$ and the costs of players are such that $C_e(f_e(P)) =\sum_{i\in S_e(P)}\chi_{ie}(P)$. The \emph{ total cost} of a player~$i$ equals the sum of her costs in the resources she uses, i.e. $X_i(P) = \sum_{e\in P_i} \chi_{ie}(P)$. The social cost of the game is given by 
$SC(P) = \sum_{e \in E} f_e(P) \cdot c_e(f_e(P)) =  \sum_{e \in E} \sum_{i\in S_e(P)}\chi_{ie}(P)= \sum_{i  \in N}  X_i(P). $
Further define the social costs of a subset of players $A \subseteq N$ with $SC_A(P) = \sum_{i  \in A} X_i(P)$.
	
The  cost-sharing method is important for our analysis, as it defines how the joint cost on a resource $e$ is distributed among her users. In this paper, the methods we focus on are the Shapley value and the proportional cost-sharing, which we introduce in detail.

\smallskip\noindent \textbf{Shapley values.} \label{sv-def} 
For a set of players $A$, let $\Pi(A)$ be the set of permutations $\pi: A \rightarrow A\left\{1, \ldots, |A|\right\}$. For a $\pi \in \Pi(A)$, define as $A^{<i,\pi} = \left\{j \in A: \pi(j) < \pi(i)\right\}$  the set of players preceding player $i$ in $\pi$ and as $W_A^{<i,\pi} = \sum_{j \in A:\pi(j) < \pi(i)} w_j$ the sum of their weights.

For the uniform distribution over $\Pi(A)$, the  Shapley value of a player $i$ on resource $e$  is  given by
\vspace*{-.5em}
\begin{align*}
\chi_e (i, A)= E_{\pi\sim\Pi(A)}\left[C_e \left(W_A^{<i,\pi} +w_i\right) -  C_e\left(W_A^{<i,\pi}\right)\right].
\end{align*}

\smallskip\noindent\textbf{Proportional sharing.} \label{prop-def}  The cost of a player $i$ on a resource under proportional sharing is given by
$\chi_{ie}^{\text{Prop}}(P) = w_i \cdot c_e(f_e(P))$.
For the rest of the paper,  we write $X_i^{\text{Prop}}(P)= \sum_{e \in E} \chi_{ie}^{\text{Prop}}(P)$ to indicate when we switch to proportional sharing.
	
\smallskip\noindent\textbf{$\rho$-approximate pure Nash equilibrium.} Given a parameter $\rho \geq 1$ and an outcome $P$, we call as \emph{$\rho$-move} a deviation from $P_i$ to $P'_i$ where the player improves her cost by more than a factor $\rho$, formally $X_i(P) > \rho \cdot X_{i}(P_{-i}, P_i')$. We call the state $P$ an {\em $\rho$-approximate pure Nash equilibrium} ($\rho$-PNE) if and only if no player is able to perform a $\rho$-move, formally it holds for every player~$i$ and any other strategy $P_i'\in\CP_i$ that $X_i(P)\le \rho \cdot X_{i}(P_{-i}, P_i')$.
	
\smallskip\noindent\textbf{$\rho$-approximate Price of Anarchy.}
Given a  parameter $\rho \ge 1$, let~$\rho\text{-PNE} \subseteq \CP$ be the set of $\rho$-approximate pure Nash equilibria and $P^*$ the state of optimum, i.e., $P^*=\min_{P'\in\CP}SC(P')$. Then the {\em $\rho$-approximate price of anarchy} ($\rho$-PoA)
is defined as $\rho\text{-PoA}=\max_{P\in\rho\text{-PNE}}\frac{SC(P)}{SC(P^*)}$.

Kollias and Roughgarden~\cite{DBLP:journals/teco/KolliasR15} prove that weighted congestion games under Shapley values are potential games using the following potential.
	
\smallskip\noindent	\textbf{Potential Function.}
Given an outcome $P$ and an arbitrary ordering $\ord$ of the players in $N$, the potential is given by 
\begin{align} \label{pot}
	\Phi(P) = \sum_{e\in E} \Phi_e(P) = \sum_{e\in E} \underset{i \in S_e(P)}{\sum} \chi_{e}(i, \{j:\ord(j)\leq \ord(i), j\in S_e(P)\}).
\end{align}

\smallskip\noindent\textbf{$A$-limited potential.} We now restrict this  potential function by  allowing only a subset of players $A\subseteq N$ to participate and define the $A$-limited potential as
\begin{align} \label{lpot}
	\Phi^A(P) = \sum_{e \in E} \Phi^A_e(P) = \sum_{e\in E} \underset{i \in S_e^A(P)}{\sum} \chi_{e}(i, \{j:\ord(j)\leq \ord(i), j\in S_e^A(P)\}).
\end{align}

\smallskip\noindent \textbf{$B$-partial potential.} Consider sets $A$ and $B$ such that $B \subseteq A \subseteq N$. Then the $B$-partial potential of set $A$ is defined by
\begin{align} \label{ppot}
	\Phi^A_B(P)= \Phi^A(P) - \Phi^{A\backslash B}(P) = \sum_{e \in E} \Phi^A_{e,B}(P) = \sum_{e \in E} \Phi_e^A(P) - \Phi_e^{A\backslash B}(P).
\end{align}
If the set $B$ contains only one player, i.e., $B=\{\{i\}\}$, then we write $\Phi^A_i(P)=\Phi^A_B(P)$. In case of $A=N$, $\Phi^N_B(P) = \Phi_B(P)= \sum_{e \in E} \Phi_{e,B}(P)$.
Intuitively, $\Phi^A_B(P)$ is the value that the players in $B \subseteq A$ contribute to  the $A$-limited potential.

\smallskip\noindent	\textbf{$\rho$-stretch.} Similar to  $\rho$-PoA, we define a ratio with respect to the potential function. Let $\hat{P}$ be the outcome that minimises the potential, i.e., $\hat{P}=\min_{P'\in\CP}\Phi(P')$. Then 
the {\em $\rho$-stretch} is defined as 
\begin{align}
\rho\text{-}\Omega= \max_{P\in\rho\text{-PNE}}\frac{\Phi(P)}{\Phi(\hat{P})}.
\end{align}
\smallskip\noindent	\textbf{$A$-limited $\rho$-stretch.} Additionally, we define a $\rho$-stretch restricted to players in a subset $A\subseteq N$. Let~$\rho\text{-PNE}_A \subseteq \CP$ be the set of $\rho$-approximate pure Nash equilibria where only players in $A$ participate. The rest of the players have a fixed strategy $\bar{P}_{N\setminus A}$. Then we define the $A$-limited $\rho$-stretch as
\begin{align}
\rho\text{-}\Omega_A=\max_{P\in\rho\text{-PNE}_A}\frac{\Phi(P)}{\Phi(\hat{P})} = \max_{P\in\rho\text{-PNE}_A} \frac{\Phi(P_A, \bar{P}_{N\setminus A})}{\Phi(\hat{P}_A, \bar{P}_{N \setminus A})}.
\end{align}

\section{Algorithmic Approach and Outline}

Our algorithm is based on ideas by Caragiannis et al.~\cite{DBLP:journals/teco/CaragiannisFGS15}. 
Intuitively, we partition the players' costs into intervals  $[b_1, b_2],[b_2,b_3] , \ldots ,[b_{m-1},b_m]$  in decreasing order. The cost values in one interval are within a polynomial factor. 
Note that this ensures that every sequence of $\rho$-moves for $\rho>1$ of players with costs in one or two intervals converges in polynomial time.
 
After an initialization, the algorithm proceeds in phases $r$ from $1$ to $m-1$.
In each phase $r$, players with costs in the interval $[b_r, + \infty]$ do $\alpha$-approximate moves where  $\alpha$ is close to the desired approximation factor.
Players with costs in the interval $[b_{r+1}, b_r]$ make $1+\gamma$-moves for some small $\gamma > 0$.   
After a polynomial number of steps no such moves are possible and we freeze all players with costs in  $ [b_r, + \infty]$.  These players will never be allowed to move again. We then proceed with the next phase. Note that at the time players are frozen, they are in an  $\alpha$-approximate equilibrium.  The purpose of the $1+\gamma$-moves of players of the neighboring interval is to ensure that the costs of frozen players do not change
significantly in later phases. To that end we utilize a potential function argument.  We argue about the potential of \emph{sub games} among a subset of players. We can bound the potential value of an arbitrary $q$-approximate equilibrium with the minimal potential value (using the \textit{stretch}). Compared to the approach in~\cite{DBLP:journals/teco/CaragiannisFGS15}, we directly work with the exact potential function of the game which significantly improves the results, but also requires a more involved analysis. We show that the potential of the sub game in one phase is significantly smaller than $b_r$. Therefore, the costs experienced by players moving in phase $r$ are considerably lower than the costs of any player in the interval $[b_1, b_{r-1}]$. 
The analysis heavily depends on the stretch of the potential function which we  analyze in Section~\ref{section:PoA_Stretch}. The proof there is based on the technique of Section~\ref{section:Approximation} in which we approximate the Shapley with proportional cost sharing. For the technical details in both sections we need some structural properties of costs-shares and the restricted potentials which we show in the next section.

\section{Shapley and Potential Properties}\label{section:preliminary_results}
	
The following properties of the Shapley values are extensively used in our proofs.
\begin{proposition} \label{prop:properties}
 Fix a resource $e$. Then
	for any set of players $S$ and $i\in S$, we have for
	$j,j_1,j_2,j',j'_1,j'_2,i_1,i_2\not\in S$:
	\begin{enumerate}\renewcommand{\theenumi}{\alph{enumi}}
		\item $\chi_{e}\left(i, S\right) \leq \chi_{e}\left(i, S \cup \{j\}\right) $,
		\vspace{1mm}
		\item $\chi_{e}\left(i, S\cup\{j'\}\right) \geq \chi_{e}\left(i, S \cup \{j_1, j_2\}\right)$,  with $j' \neq i$ and $w_{j'} = w_{j_1} + w_{j_2}$, 
		\vspace{1mm}			
		\item $\chi_{e}\left(i, S \cup \{j_1, j_2\} \right) \geq \chi_{e}\left(i, S \cup \{j'_1, j'_2\}\right)$, with $w_{j'_1}=w_{j'_2} = \frac{w_{j_1} + w_{j_2}}{2}$,
		\vspace{1mm}
		\item $\chi_{e}\left(i, S\right) \geq \chi_{e}\left(i_1, S\backslash \{i\} \cup \{i_1\}\right)  +\chi_{e}\left(i_2, S\backslash \{i\} \cup \{i_1, i_2\}\right)$, with $w_{i_1}=w_{i_2} = \frac{w_i}{2}$.
	\end{enumerate}
\end{proposition}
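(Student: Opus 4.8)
The plan is to reduce all four inequalities to two elementary ingredients. The first is analytic: since $c_e$ has non-negative coefficients, so does $C_e(x)=x\,c_e(x)$, and hence $C_e$ together with all of its derivatives is non-negative and non-decreasing on $[0,\infty)$. Consequently, for every $h\ge 0$ the increment $\Delta_h(x):=C_e(x+h)-C_e(x)$ is non-negative, non-decreasing and convex in $x$ (its $k$-th derivative equals $C_e^{(k)}(x+h)-C_e^{(k)}(x)\ge 0$), and $C_e$ itself is convex. The second ingredient is combinatorial: I use the standard equivalent form of the Shapley value,
\[
\chi_e(i,A)\;=\;\sum_{T\subseteq A\setminus\{i\}} p^{|A|}_{|T|}\,\Delta_{w_i}\!\big(w(T)\big),\qquad w(T):=\sum_{j\in T}w_j,\quad p^n_t:=\frac{t!\,(n-1-t)!}{n!},
\]
together with the ``Pascal'' identity $p^{n+1}_t+p^{n+1}_{t+1}=p^{n}_t$, which is a one-line computation.

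The point is that in each of (a)--(d) the two Shapley values compared are sums over subsets $T$ of the \emph{same} ground set $S\setminus\{i\}$ augmented by zero, one or two of the ``extra'' players; grouping the terms by $U:=T\cap(S\setminus\{i\})$ and writing $x:=w(U)$ collapses each inequality to a pointwise inequality in the single real variable $x$. For (a), using $p^{|S|}_{|U|}=p^{|S|+1}_{|U|}+p^{|S|+1}_{|U|+1}$ the difference of the two sides becomes $\sum_U p^{|S|+1}_{|U|+1}\big(\Delta_{w_i}(x+w_j)-\Delta_{w_i}(x)\big)\ge 0$, which is just monotonicity of $\Delta_{w_i}$ (alternatively, couple a uniform order of $S\cup\{j\}$ with the induced order of $S$). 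For (c), both sides have ground-set size $|S|+2$ with identical coefficients, so after grouping everything cancels except $p^{|S|+2}_{|U|+1}\big(\Delta_{w_i}(x+w_{j_1})+\Delta_{w_i}(x+w_{j_2})-2\Delta_{w_i}(x+\tfrac{w_{j_1}+w_{j_2}}{2})\big)\ge 0$, i.e.\ midpoint convexity of $\Delta_{w_i}$.

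For (b) and (d) the two player sets differ in size, so the Pascal identity does the real work. In (b), grouping by $U$ and using convexity of $\Delta_{w_i}$ in the form $\Delta_{w_i}(x+w_{j_1})+\Delta_{w_i}(x+w_{j_2})\le\Delta_{w_i}(x)+\Delta_{w_i}(x+w_{j'})$ bounds the right-hand side by $(p^{|S|+2}_{|U|}+p^{|S|+2}_{|U|+1})\Delta_{w_i}(x)+(p^{|S|+2}_{|U|+1}+p^{|S|+2}_{|U|+2})\Delta_{w_i}(x+w_{j'})$, and two applications of the Pascal identity turn these coefficients into $p^{|S|+1}_{|U|}$ and $p^{|S|+1}_{|U|+1}$, matching the left-hand side exactly. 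In (d), writing $\delta:=w_i/2$, the left side contributes $p^{|S|}_{|U|}\big(C_e(x+2\delta)-C_e(x)\big)$ while the two right-side Shapley values contribute $p^{|S|}_{|U|}\big(C_e(x+\delta)-C_e(x)\big)$ and $p^{|S|+1}_{|U|}\big(C_e(x+\delta)-C_e(x)\big)+p^{|S|+1}_{|U|+1}\big(C_e(x+2\delta)-C_e(x+\delta)\big)$; subtracting and using $p^{|S|}_{|U|}-p^{|S|+1}_{|U|+1}=p^{|S|+1}_{|U|}$ leaves exactly $p^{|S|+1}_{|U|}\big((C_e(x+2\delta)-C_e(x+\delta))-(C_e(x+\delta)-C_e(x))\big)\ge 0$, which is convexity of $C_e$.

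The only delicate point is the bookkeeping in (b) and (d): one must correctly assign each subset of the common ground set its coefficient as the player set grows by one and then by two, and verify that the Pascal identity makes all coefficients line up. There is no genuine obstacle here, only the risk of an off-by-one in the $p^n_t$; everything else is the monotonicity and convexity of $\Delta_h$ and $C_e$, which are immediate from the non-negativity of the coefficients of $C_e$.
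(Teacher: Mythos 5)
Your proof is correct, but it follows a genuinely different route from the paper. The paper works directly with the permutation-average definition: part (a) is a coupling of each permutation of $S\cup\{j\}$ with the induced order of $S$ plus monotonicity of the increment; parts (b) and (c) pair every permutation with $j_1<i<j_2$ with the permutation in which $j_1$ and $j_2$ are swapped, and then apply convexity of $x\mapsto C_e(x+w_i)-C_e(x)$ to the paired contribution (with the extreme split $(w_{j_1}+w_{j_2},0)$, i.e.\ a weight-zero dummy, giving (b) and the balanced split giving (c)); part (d) is not proved at all but cited from \cite{DBLP:conf/icalp/GairingKK15}. You instead use the coalition form $\chi_e(i,A)=\sum_{T\subseteq A\setminus\{i\}}p^{|A|}_{|T|}\,\bigl(C_e(w(T)+w_i)-C_e(w(T))\bigr)$, group by $U=T\cap(S\setminus\{i\})$, and let the identity $p^{n+1}_t+p^{n+1}_{t+1}=p^n_t$ reconcile the differing ground-set sizes; I checked the coefficient bookkeeping in (b) and (d) (in particular $p^{|S|}_{|U|}-p^{|S|+1}_{|U|+1}=p^{|S|+1}_{|U|}$ and the convexity step $\Delta_{w_i}(x+w_{j_1})+\Delta_{w_i}(x+w_{j_2})\le\Delta_{w_i}(x)+\Delta_{w_i}(x+w_{j'})$) and it is sound. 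What each approach buys: the paper's permutation pairing is shorter for (b) and (c) and avoids any coefficient manipulation, while your subset/Pascal computation treats all four parts uniformly as pointwise monotonicity or convexity statements in a single real variable and, notably, makes part (d) self-contained rather than outsourced to an external reference.
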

	
\noindent
We proceed to the properties of the restricted types of potential defined before.

\begin{proposition}\label{prop:potentials}
	Let $A$ and $B$ be sets of players such that $B \subseteq A \subseteq N$, $P$ and $P'$  outcomes of the game such that the players in $A \subseteq N$ use the same strategies in both $P$ and $P'$, and  $z \in N$ an arbitrary player. Then
	\begin{multicols}{3}
	\begin{enumerate}\renewcommand{\theenumi}{\alph{enumi}}
		\item $\Phi_B^A(P)\leq \Phi_B(P)$,\label{prop:partialtofull}
		\item $\Phi_B^A(P)= \Phi_B^A(P')$,\label{prop:sameStrategies}
		\item  $\Phi_z(P)= X_z(P)$.\label{prop:potentialcosts} 
	\end{enumerate}
	\end{multicols}
\end{proposition}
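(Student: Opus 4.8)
My plan is to establish all three parts resource by resource and then sum over $e\in E$. Fix $e\in E$; for a set of players $T$ write $T_{\le i}:=\{j\in T:\ord(j)\le\ord(i)\}$ and $g(T):=\sum_{i\in T}\chi_e(i,T_{\le i})$, so that $\Phi^A_e(P)=g(S_e^A(P))$. Throughout I would use that the value $g(T)$, and hence every limited or partial potential, does not depend on the chosen ordering $\ord$; this is a standard fact about the Shapley-based potential of \cite{DBLP:journals/teco/KolliasR15} (for a transposition of two $\ord$-consecutive players $i,j$, with $R$ the players preceding both, it amounts to the identity $\chi_e(i,R\cup\{i\})+\chi_e(j,R\cup\{i,j\})=\chi_e(j,R\cup\{j\})+\chi_e(i,R\cup\{i,j\})$, which is Myerson's balanced-contributions property), and it is what the phrase ``an arbitrary ordering'' in the definition of the potential signals. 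Part (b) needs none of this: $S_e^A(P)=\{i\in A:e\in P_i\}$ depends only on the strategies of the players in $A$, which coincide in $P$ and $P'$, so $S_e^A(P)=S_e^A(P')$ for every $e$ and hence $\Phi^A(P)=\Phi^A(P')$; the same holds for $A\setminus B$ since $A\setminus B\subseteq A$, and subtracting gives $\Phi^A_B(P)=\Phi^A_B(P')$.

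For part (c) I would choose $\ord$ with $z$ placed last. If $e\notin P_z$ then $S_e^N(P)=S_e^{N\setminus\{z\}}(P)$, so $e$ contributes $0$ to both $\Phi_z(P)$ and $X_z(P)$. If $e\in P_z$, then in $g(S_e^N(P))=\sum_{i\in S_e(P)}\chi_e(i,(S_e(P))_{\le i})$ the summand for $i=z$ equals $\chi_e(z,S_e(P))$ (since $z$ is $\ord$-maximal, its prefix $(S_e(P))_{\le z}$ is all of $S_e(P)$), while for every other $i$ the prefix $(S_e(P))_{\le i}$ is unchanged when $z$ is deleted, so the remaining summands reproduce $g(S_e^{N\setminus\{z\}}(P))$. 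Hence $\Phi^N_e(P)-\Phi^{N\setminus\{z\}}_e(P)=\chi_e(z,S_e(P))=\chi_{ze}(P)$, and summing over $e\in P_z$ gives $\Phi_z(P)=X_z(P)$.

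For part (a) I would choose $\ord$ so that every player of $B$ comes after every player of $N\setminus B$; since $B\subseteq A$, every player of $B$ then also comes after every player of $A\setminus B$. Deleting from $S_e^N(P)$ the players lying in $B$ removes an $\ord$-top block and therefore leaves the prefix $(S_e^N(P))_{\le i}$ of every surviving player $i\in S_e^{N\setminus B}(P)$ unchanged; exactly as in part (c) this gives $\Phi^N_{e,B}(P)=g(S_e^N(P))-g(S_e^{N\setminus B}(P))=\sum_{i\in S_e^B(P)}\chi_e(i,(S_e^N(P))_{\le i})$, and the same argument with $A$ in place of $N$ gives $\Phi^A_{e,B}(P)=\sum_{i\in S_e^B(P)}\chi_e(i,(S_e^A(P))_{\le i})$ (using $S_e^B(P)\subseteq S_e^A(P)$). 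For each $i\in S_e^B(P)$ we have $i\in(S_e^A(P))_{\le i}\subseteq(S_e^N(P))_{\le i}$, so applying Proposition~\ref{prop:properties}(a) once per extra player gives $\chi_e(i,(S_e^A(P))_{\le i})\le\chi_e(i,(S_e^N(P))_{\le i})$. Summing over $i\in S_e^B(P)$ gives $\Phi^A_{e,B}(P)\le\Phi^N_{e,B}(P)$, and summing over $e\in E$ gives $\Phi^A_B(P)\le\Phi_B(P)$.

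Most of this is routine: part (b) and the telescoping identities in (a) and (c) are bookkeeping once one may reorder $\ord$ at will. The one genuinely delicate point is exactly that order-independence of $g$, and I would expect pinning it down (via the balanced-contributions identity above) to be the main work. If I preferred to avoid it, I could prove (a) with a fixed $\ord$ by peeling the players of $B$ off $S_e^A(P)$ and off $S_e^N(P)$ one at a time: the induction then reduces to showing that the marginal increase of $g$ from adding a single player is monotone in the base set, which is the supermodularity of $\chi_e(i,\cdot)$ in the player set and follows from the convexity of $C_e$ --- the same effect encoded by the weight-splitting inequalities of Proposition~\ref{prop:properties}(b)--(d). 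Either way, this monotonicity/supermodularity of Shapley cost shares is the one non-routine ingredient I would need.
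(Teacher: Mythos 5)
Your proof is correct, and it rests on exactly the two ingredients the paper uses: the order-independence of the Shapley-based potential (taken from Hart--Mas-Colell / \cite{DBLP:journals/teco/KolliasR15}) and the monotonicity of cost shares in the user set, Proposition~\ref{prop:properties}(a). Parts (b) and (c) coincide with the paper's argument (same observation that $S_e^{A'}(P)=S_e^{A'}(P')$, same choice of placing $z$ last). For part (a) your bookkeeping differs slightly from the paper's: you place the players of $B$ last, which collapses both $\Phi^A_{e,B}(P)$ and $\Phi_{e,B}(P)$ into sums over $i\in S_e^B(P)$ of $\chi_e(i,\cdot)$ evaluated on nested prefixes, and you compare termwise; the paper instead places $A$ (resp.\ $A\setminus B$) first and then adds the nonnegative correction $I'_e(P)$, the contribution of the players in $N\setminus A$, before regrouping into $\Phi_e(P)-\Phi_e^{N\setminus B}(P)$. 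The two arrangements are equivalent, but yours is the more direct one, since the inequality is reduced to a single application of Proposition~\ref{prop:properties}(a) per $B$-player, whereas the paper's route needs the extra nonnegativity claim for $I'_e(P)$ and the set identity $(N\setminus A)\cup(A\setminus B)=N\setminus B$.
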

Next, we show that the potential property also holds for the partial potential.
	
\begin{proposition}\label{prop:diffpotentialcosts}
	Consider a subset $B\subseteq N$ and a player $i \in B$. Given two states, $P$ and $P'$, that differ only in the strategy of player $i$, then~$\Phi_B(P) - \Phi_B(P') = X_i(P) - X_i(P')$.
\end{proposition}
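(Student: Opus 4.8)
The plan is to unfold the definition of the $B$-partial potential and reduce the claim to two ingredients already available in the excerpt: the exactness of the Kollias--Roughgarden potential $\Phi=\Phi^N$, and the trivial observation that a restricted potential only ``sees'' the strategies of the players in its restricting set. First I would apply \eqref{ppot} with $A=N$ to write $\Phi_B(P)=\Phi^N(P)-\Phi^{N\backslash B}(P)$ and likewise for $P'$, so that
\begin{align*}
\Phi_B(P)-\Phi_B(P') = \bigl(\Phi^N(P)-\Phi^N(P')\bigr) - \bigl(\Phi^{N\backslash B}(P)-\Phi^{N\backslash B}(P')\bigr).
\end{align*}

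Next I would argue that the second difference vanishes. By \eqref{lpot}, $\Phi^{N\backslash B}$ is a function only of the strategies of the players in $N\backslash B$ (both the user sets $S_e^{N\backslash B}$ and the arguments of the $\chi_e$'s are determined by those strategies). Since $i\in B$, the players in $N\backslash B$ play identically in $P$ and $P'$, hence $\Phi^{N\backslash B}(P)=\Phi^{N\backslash B}(P')$. Equivalently, one can invoke Proposition~\ref{prop:potentials}\ref{prop:sameStrategies} applied to the pair $(A,B)=(N\backslash B,\,N\backslash B)$, using $\Phi^{N\backslash B}=\Phi^{N\backslash B}_{N\backslash B}$ (which follows from \eqref{ppot} since $\Phi^{\emptyset}\equiv 0$).

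The heart of the argument is then $\Phi^N(P)-\Phi^N(P') = X_i(P)-X_i(P')$. One route is simply to cite the Kollias--Roughgarden result that $\Phi=\Phi^N$ is an exact potential. To keep things internal to the paper, I would instead split $\Phi^N=\Phi^{N\backslash\{i\}}+\Phi_i$ (again \eqref{ppot} with $A=N$, $B=\{i\}$), note as above that $\Phi^{N\backslash\{i\}}$ is independent of $i$'s strategy, and replace $\Phi_i(P)$ by $X_i(P)$ and $\Phi_i(P')$ by $X_i(P')$ via Proposition~\ref{prop:potentials}\ref{prop:potentialcosts}; subtracting the two expressions gives the identity. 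Plugging this and the previous step into the displayed equation yields $\Phi_B(P)-\Phi_B(P')=X_i(P)-X_i(P')$.

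I do not expect a genuine obstacle: the only non-bookkeeping ingredient is the exactness of $\Phi^N$, which is either cited or recovered from Proposition~\ref{prop:potentials}\ref{prop:potentialcosts}, and the remaining work is just verifying that restricted potentials ignore the strategies of players outside the restricting set, which is immediate from \eqref{lpot}. The only place that needs care is the bookkeeping with the nested set differences in \eqref{ppot} --- in particular keeping the constraint $B\subseteq A\subseteq N$ satisfied when instantiating Proposition~\ref{prop:potentials}, and using $\Phi^A=\Phi^A_A$ when rewriting $\Phi^{N\backslash B}$ and $\Phi^{N\backslash\{i\}}$ as partial potentials.
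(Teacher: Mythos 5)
Your proposal is correct and follows essentially the same route as the paper: unfold $\Phi_B=\Phi-\Phi^{N\setminus B}$, observe that the $\Phi^{N\setminus B}$ terms cancel because only player $i\in B$ changes strategy, and reduce to $\Phi(P)-\Phi(P')=X_i(P)-X_i(P')$, which the paper settles by citing the exactness of the Kollias--Roughgarden potential. Your optional self-contained variant for that last identity, splitting $\Phi=\Phi^{N\setminus\{i\}}+\Phi_i$ and invoking Proposition~\ref{prop:potentials}(c), is also valid and non-circular, since that proposition is proved from ordering-independence of the potential rather than from its exactness.
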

The next lemma gives a relation between partial potential and  Shapley values.
\begin{lemma} \label{lemma:resource_potential_sum_costs}
	Given an outcome $P$ of the game, a resource $e$ and a subset $B \subseteq N$, it holds that~$\Phi_{e,B}(P) \le \sum_{i\in B}\chi_{ie}(P) \leq \Phi_{e,B}(P) \cdot (d+1)$.
\end{lemma}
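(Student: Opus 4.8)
The plan is to reduce both inequalities to one telescoping identity for $\Phi_{e,B}(P)$ and then compare term by term; the only genuinely technical point is a single auxiliary bound on a Shapley share, which is where the degree $d$ enters. Write $S:=S_e(P)$, $S_0:=S\setminus B$ and $w(T):=\sum_{i\in T}w_i$ for $T\subseteq N$; players in $B\setminus S$ contribute $0$ to both sides, so we may assume $B\subseteq S$. Since the potential $\Phi$ of Kollias and Roughgarden~\cite{DBLP:journals/teco/KolliasR15} is well defined, i.e.\ independent of the auxiliary ordering $\ord$ (it is the Hart--Mas-Colell potential of the cost game $v_e(T)=C_e(w(T))$ on each resource $e$, whose Shapley value is exactly $\chi_e$), we may evaluate $\Phi^N_e(P)$ and $\Phi^{N\setminus B}_e(P)$ using an ordering in which every player of $S_0$ precedes every player of $B$, say $B$ appears as $b_1,\dots,b_m$. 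The two sums restricted to $S_0$ then coincide and cancel, leaving
\begin{equation*}
\Phi_{e,B}(P)=\Phi^N_e(P)-\Phi^{N\setminus B}_e(P)=\sum_{t=1}^{m}\chi_e\!\left(b_t,\,S_0\cup\{b_1,\dots,b_t\}\right).
\end{equation*}

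The lower bound is then immediate: each $S_0\cup\{b_1,\dots,b_t\}$ is contained in $S$, so $m-t$ applications of monotonicity (Proposition~\ref{prop:properties}(a)) give $\chi_e(b_t,S_0\cup\{b_1,\dots,b_t\})\le\chi_e(b_t,S)=\chi_{b_t e}(P)$, and summing over $t$ yields $\Phi_{e,B}(P)\le\sum_{i\in B}\chi_{ie}(P)$. For the upper bound I would sandwich both quantities between the ``global'' increment $C_e(w(S))-C_e(w(S_0))$. On one side, monotonicity gives $\chi_e(i,S)\ge\chi_e(i,S_0)$ for $i\in S_0$, and since $\sum_{i\in T}\chi_e(i,T)=C_e(w(T))$ for every $T$ (telescope the marginal contributions in any order, using $C_e(0)=0$), we obtain
\begin{equation*}
\sum_{i\in B}\chi_{ie}(P)=C_e(w(S))-\sum_{i\in S_0}\chi_e(i,S)\le C_e(w(S))-C_e(w(S_0)).
\end{equation*}

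On the other side, write $C_e(w(S))-C_e(w(S_0))=\sum_{t=1}^{m}\left[C_e(\Sigma_t+w_{b_t})-C_e(\Sigma_t)\right]$ with $\Sigma_t:=w\!\left(S_0\cup\{b_1,\dots,b_{t-1}\}\right)$, and compare with the telescoping sum for $\Phi_{e,B}(P)$ term by term. It then suffices to prove the auxiliary inequality: for every $A\subseteq N$ and $i\in A$,
\begin{equation*}
\chi_e(i,A)\;\ge\;\frac{1}{d+1}\left[C_e(w(A))-C_e(w(A)-w_i)\right],
\end{equation*}
i.e.\ a player's Shapley share is at least a $\tfrac{1}{d+1}$ fraction of her marginal cost when she is added last to $A$. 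Granting this and applying it with $A=S_0\cup\{b_1,\dots,b_t\}$ and $i=b_t$ (so $w(A)=\Sigma_t+w_{b_t}$), summation over $t$ gives $\Phi_{e,B}(P)\ge\tfrac{1}{d+1}\left[C_e(w(S))-C_e(w(S_0))\right]\ge\tfrac{1}{d+1}\sum_{i\in B}\chi_{ie}(P)$, i.e.\ $\sum_{i\in B}\chi_{ie}(P)\le(d+1)\,\Phi_{e,B}(P)$, as claimed.

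Finally, the auxiliary inequality, which I expect to be the main obstacle. I would prove it by the random-arrival view of the Shapley value: drawing i.i.d.\ uniform arrival times for the players of $A$, the total weight $M$ of $A\setminus\{i\}$ that arrives before $i$ satisfies $\mathbb{E}[M\mid \text{$i$ arrives at time }u]=u\,(w(A)-w_i)$. Because $C_e$ has non-negative coefficients, $g(x):=C_e(x+w_i)-C_e(x)$ is convex on $[0,\infty)$ (each $(x+w_i)^q-x^q$ is), so by Jensen $\chi_e(i,A)=\mathbb{E}_u\,\mathbb{E}\!\left[g(M)\mid u\right]\ge\int_0^1 g\!\left(u\,(w(A)-w_i)\right)du$. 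By linearity it is then enough to verify, for each monomial $C_e(x)=x^q$ with $1\le q\le d+1$ (so $\tfrac1q\ge\tfrac1{d+1}$), that $\int_0^1\left[(u\beta+w)^q-(u\beta)^q\right]du\ge\tfrac1q\left[(\beta+w)^q-\beta^q\right]$ for all $\beta,w\ge 0$; carrying out the integration, this is (for $\beta>0$, the case $\beta=0$ being trivial) equivalent to $q\!\left[(\beta+w)^{q+1}-w^{q+1}-\beta^{q+1}\right]\ge(q+1)\,\beta\!\left[(\beta+w)^q-\beta^q\right]$, whose left-minus-right side vanishes at $w=0$ and has $w$-derivative $q(q+1)\,w\!\left[(\beta+w)^{q-1}-w^{q-1}\right]\ge 0$, hence is non-negative. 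Apart from this calculus check and the bookkeeping in the two telescopings, the argument uses only monotonicity and budget balance of the Shapley value.
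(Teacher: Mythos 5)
Your argument is correct, and its overall skeleton matches the paper's: both proofs exploit order-independence of the potential to write $\Phi_{e,B}(P)$ as the telescoping sum $\sum_{t}\chi_e\bigl(b_t,S_0\cup\{b_1,\dots,b_t\}\bigr)$ with the players of $N\setminus B$ ordered first, obtain the lower bound by monotonicity (Proposition~\ref{prop:properties}(a)), and obtain the upper bound by passing through the quantity $\tfrac{1}{d+1}\bigl[C_e(f_e(P))-C_e(f^{N\setminus B}_e(P))\bigr]$ and then using budget balance plus monotonicity of the shares of the $N\setminus B$ players. Where you genuinely diverge is in how that middle inequality $\Phi_{e,B}(P)\ge\tfrac{1}{d+1}\bigl[C_e(f_e(P))-C_e(f^{N\setminus B}_e(P))\bigr]$ is established. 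The paper lower bounds the whole telescoping sum at once by the integral $\int_{f^{N\setminus B}_e(P)}^{f_e(P)}c_e(x)\,dx$, which rests on repeated applications of Proposition~\ref{prop:properties}(c),(d) together with an implicit limiting step ("adding players of weight $0$"), and then uses $\int_a^b c_e(x)\,dx\ge\frac{b\,c_e(b)-a\,c_e(a)}{d+1}$. You instead prove a clean per-player statement, $\chi_e(i,A)\ge\tfrac{1}{d+1}\bigl[C_e(w(A))-C_e(w(A)-w_i)\bigr]$, via the random-arrival representation, Jensen's inequality applied to the convex increment $x\mapsto C_e(x+w_i)-C_e(x)$, and a direct monomial verification (your Jensen step and the derivative computation both check out, including the boundary cases $q=1$ and $\beta=0$). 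Your route is self-contained and avoids the weight-splitting limit argument, at the cost of an explicit calculus lemma; the paper's route recycles the structural Propositions~\ref{prop:properties}(c),(d), which it needs elsewhere anyway (e.g., in the proof of Lemma~\ref{lemma:shatoprop}). Either way the factor $d+1$ enters for the same reason: $C_e$ is a non-negative combination of monomials $x^q$ with $1\le q\le d+1$.
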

Summing up over all resources $e \in E$ yields the next corollary.
	
\begin{corollary} \label{cor:potential_sum_costs}
	Given an outcome $P$ of the game and a subset $B \subseteq N$, it holds that~$\Phi_B(P)  \le \sum_{i\in B} X_i(P) \leq \Phi_B(P) \cdot (d+1)$.
\end{corollary}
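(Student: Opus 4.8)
The plan is to obtain the corollary by simply summing the resource-wise inequality of Lemma~\ref{lemma:resource_potential_sum_costs} over all resources $e \in E$. First I would record the two identities that make the outer ends of the summed inequality collapse into the quantities we want. By the definition of the $B$-partial potential (equation~\eqref{ppot} specialised to $A = N$) we have $\Phi_B(P) = \sum_{e \in E} \Phi_{e,B}(P)$, so summing the left-hand side of the lemma over $e$ yields exactly $\Phi_B(P)$, and summing the right-hand side yields $(d+1)\sum_{e \in E}\Phi_{e,B}(P) = (d+1)\,\Phi_B(P)$.

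Next I would handle the middle term by exchanging the order of summation. The quantity $\sum_{e \in E}\sum_{i \in B}\chi_{ie}(P)$, where (as in the lemma) the inner sum effectively ranges over the users of $e$ that lie in $B$, i.e.\ over $i \in S_e^B(P)$, equals $\sum_{i \in B}\sum_{e \in P_i}\chi_{ie}(P)$, since a pair $(i,e)$ with $i \in B$ contributes precisely when $e \in P_i$. By the definition $X_i(P) = \sum_{e \in P_i}\chi_{ie}(P)$ this equals $\sum_{i \in B} X_i(P)$. Chaining the three summed inequalities then gives $\Phi_B(P) \le \sum_{i \in B} X_i(P) \le (d+1)\,\Phi_B(P)$, which is the claim.

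I do not expect a genuine obstacle here: the statement is a routine aggregation of Lemma~\ref{lemma:resource_potential_sum_costs} over the resource set. The only point requiring a moment's care is lining up the index sets, namely reading the lemma's $\sum_{i \in B}\chi_{ie}(P)$ as the sum over $i \in S_e^B(P)$ (equivalently, adopting the convention $\chi_{ie}(P)=0$ whenever $e \notin P_i$), so that after the exchange of summations it matches $\sum_{i \in B}\sum_{e \in P_i}\chi_{ie}(P)$ and hence $\sum_{i \in B} X_i(P)$ exactly; the inequality is preserved termwise under a finite nonnegative sum, so no further argument is needed.
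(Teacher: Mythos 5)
Your proposal is correct and matches the paper's own proof: both simply sum the resource-wise bound of Lemma~\ref{lemma:resource_potential_sum_costs} over $e \in E$, use $\Phi_B(P)=\sum_{e\in E}\Phi_{e,B}(P)$, and exchange the order of summation (with the same convention $\chi_{ie}(P)=0$ for $e\notin P_i$) to recognise the middle term as $\sum_{i\in B}X_i(P)$. No differences worth noting.
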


\section{Approximating Shapley with Proportional Cost-Shares}
\label{section:Approximation}

In this section we approximate the Shapley value of a player with her proportional share. This approximation  plays an important role in our proofs of the stretch and for the computation.

\vspace{-1.5mm}
\begin{lemma} \label{lemma:shatoprop}
For a player $i$, a resource $e$ and any state $\state$, the following inequality holds between her Shapley and proportional cost:
\begin{align*}
\frac{2}{d+1}\cdot \chi_{ie}(P) &\leq  \chi_{ie}^{\text{Prop}}(P)  \leq\frac{d+3}{4} \cdot \chi_{ie}(P).
\end{align*}
\end{lemma}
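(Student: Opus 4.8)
The plan is to reduce to a single‑monomial cost function and then, for that monomial, to push the users of $e$ other than $i$ to the two extreme configurations permitted by Proposition~\ref{prop:properties}: lumping all of their weight into one player gives the largest possible Shapley cost of $i$, and spreading that weight into infinitely many infinitesimal players gives the smallest. For the reduction, write $c_e(x)=\sum_{a=0}^{d}\lambda_a x^{a}$ with $\lambda_a\ge 0$, so $C_e(x)=\sum_a\lambda_a x^{a+1}$; both $\chi_{ie}(P)=\chi_e(i,S_e(P))$ (an average of differences of $C_e$) and $\chi_{ie}^{\text{Prop}}(P)=w_i c_e(f_e(P))$ are linear in the $\lambda_a$, so it suffices to prove both inequalities for each $C_e(x)=x^{m}$ with $1\le m\le d+1$ and sum against the nonnegative $\lambda_a$. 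Fix such an $m$, put $w=w_i$, $f=f_e(P)$, and let $v=f-w$ be the total weight of the users of $e$ other than $i$ in $P$; if $v=0$ then $\chi_{ie}(P)=C_e(w)=wf^{m-1}=\chi_{ie}^{\text{Prop}}(P)$ and the inequalities hold since $d\ge1$, so assume $v>0$.

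For the left inequality $\tfrac{2}{d+1}\chi_{ie}(P)\le\chi_{ie}^{\text{Prop}}(P)$, I would apply Proposition~\ref{prop:properties}(b) repeatedly to merge all users of $e$ other than $i$ into a single player $b$ of weight $v$; each merge does not decrease $i$'s Shapley cost, so $\chi_{ie}(P)\le\chi_e(i,\{i,b\})=\tfrac12\bigl(w^{m}+f^{m}-v^{m}\bigr)$. With $t=w/f$ this equals $\tfrac12 f^{m}\bigl(t^{m}+1-(1-t)^{m}\bigr)$, and it remains to note that $t^{m}+1-(1-t)^{m}\le(d+1)t$ on $[0,1]$: for $m\ge 2$ this follows since $\phi(t)=t^{m}-(1-t)^{m}-mt+1$ has $\phi(0)=0$ and $\phi'(t)=m\bigl(t^{m-1}+(1-t)^{m-1}-1\bigr)\le 0$, giving $t^{m}+1-(1-t)^{m}\le mt\le(d+1)t$, while for $m=1$ the left side is just $2t$. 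Hence $\chi_{ie}(P)\le\tfrac{d+1}{2}wf^{m-1}=\tfrac{d+1}{2}\chi_{ie}^{\text{Prop}}(P)$.

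For the right inequality $\chi_{ie}^{\text{Prop}}(P)\le\tfrac{d+3}{4}\chi_{ie}(P)$, I would go the other way: by Proposition~\ref{prop:properties}(c) and~(b) (equalize, then subdivide the other users), none of which increases $i$'s Shapley cost, we get $\chi_{ie}(P)\ge\chi_e(i,S_n)$, where $S_n$ consists of $i$ together with $n$ equal players of weight $v/n$. In $S_n$ the total weight preceding $i$ in a uniformly random permutation is uniform on $\{0,\tfrac vn,\dots,v\}$, so $\chi_e(i,S_n)=\tfrac1{n+1}\sum_{\ell=0}^{n}\bigl[(w+\tfrac{\ell v}{n})^{m}-(\tfrac{\ell v}{n})^{m}\bigr]$, a Riemann sum tending as $n\to\infty$ to $\int_0^1\bigl[(w+xv)^{m}-(xv)^{m}\bigr]dx=\tfrac{f^{m+1}-w^{m+1}-v^{m+1}}{(m+1)v}$. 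Thus $\chi_{ie}(P)\ge\tfrac{f^{m+1}-w^{m+1}-v^{m+1}}{(m+1)v}$, and substituting $t=w/f$ and $p=m+1\in\{2,\dots,d+2\}$ reduces the claim to $G_p(t):=\tfrac{1-t^{p}-(1-t)^{p}}{p\,t(1-t)}\ge\tfrac{4}{d+3}$ for all $t\in(0,1)$.

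This last inequality is the step I expect to be the main obstacle, the real point being that $G_p$ is minimized at $t=\tfrac12$. I would prove that via the Bernstein basis: expanding $1=(t+(1-t))^{p}$ gives $1-t^{p}-(1-t)^{p}=\sum_{j=1}^{p-1}\binom pj t^{j}(1-t)^{p-j}$, so $G_p(t)=\tfrac1p\sum_{i=0}^{p-2}c_i\binom{p-2}{i}t^{i}(1-t)^{p-2-i}$ is a degree‑$(p-2)$ Bernstein polynomial with coefficients $c_i=\tfrac{p(p-1)}{(i+1)(p-1-i)}$. Since $q(x)=(x+1)(p-1-x)$ is a positive concave function on $[0,p-2]$, its reciprocal is convex there, so $(c_i)$ is a convex sequence; it is moreover symmetric under $i\mapsto p-2-i$. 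A Bernstein polynomial with a convex symmetric coefficient sequence is convex and symmetric on $[0,1]$, hence $\min_{[0,1]}G_p=G_p(\tfrac12)=\tfrac{4(1-2^{1-p})}{p}$. Finally, $\Delta(p):=(d+3)(1-2^{1-p})-p$ is concave in $p$ with $\Delta(2)=\tfrac{d-1}{2}\ge 0$ and $\Delta(d+2)=1-(d+3)2^{-(d+1)}\ge 0$ (since $2^{d+1}\ge d+3$ for $d\ge 1$), so $\Delta(p)\ge 0$ on $[2,d+2]$, i.e.\ $G_p(\tfrac12)\ge\tfrac{4}{d+3}$. Combining this with the left inequality yields the lemma. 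If the Bernstein argument turns out awkward to write out, an alternative is to check directly that $G_p'$ vanishes on $(0,1)$ only at $t=\tfrac12$.
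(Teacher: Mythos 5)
Your proof is correct, and its skeleton matches the paper's: reduce to monomials by linearity, upper bound $\chi_{ie}(P)$ by merging the other users into one player (Proposition~\ref{prop:properties}(b)), lower bound it by equalizing/splitting them into many equal pieces so that the limit is $\frac{1}{v}\int_0^v\left((x+w)^m-x^m\right)dx$ (Proposition~\ref{prop:properties}(c) plus weight-$0$ players), and then compare each extreme with $w\,c_e(f)$. Where you genuinely diverge is in how the two resulting one-variable inequalities are settled. For the easy direction the paper compares binomial coefficients termwise ($\max_j \binom{r+1}{j}/(2\binom{r}{j-1})=\frac{r+1}{2j}\le\frac{d+1}{2}$), while you use the monotonicity of $\phi(t)=t^m-(1-t)^m-mt+1$; both give $\frac{d+1}{2}$. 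For the hard direction the paper parametrizes by $z=w/y$ and locates the minimizer of $B/\chi^{\text{Prop}}$ by showing (Claim~\ref{uniqueRoot}, via a sign-pattern/Descartes'-rule analysis of the polynomial $g(z)$) that the derivative has a unique positive root at $z=1$; you instead parametrize by $t=w/f$, write $G_p(t)=\frac{1-t^p-(1-t)^p}{p\,t(1-t)}$ in the Bernstein basis and observe that its coefficient sequence $c_i=\frac{p(p-1)}{(i+1)(p-1-i)}$ is symmetric and convex, so $G_p$ is convex and symmetric and minimized at $t=\frac12$ — the same point as the paper's $z=1$, reached by a different and arguably more transparent convexity argument (the classical fact that convex node values yield a convex Bernstein polynomial is exactly the nonnegativity of second differences appearing in $G_p''$). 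Your final step, proving $(d+3)(1-2^{1-p})\ge p$ on $[2,d+2]$ by concavity of $\Delta(p)$ and checking the endpoints, goes directly to the bound $\frac{4}{d+3}$ and thereby avoids the paper's intermediate comparison with $\frac{4}{r+3}$, which in fact fails for the constant monomial $r=0$ (where the ratio equals $1<\frac{4}{3}$) even though the final conclusion holds for $d\ge 1$; you also treat the degenerate case $v=0$ explicitly, which the paper's $z=w/y$ parametrization quietly excludes. Both routes need $d\ge1$, as the statement itself does.
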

	
Summing up over all $e \in E$ implies the following corollary.
\begin{corollary} \label{corollary:shatoprop}
	For a player $i$ and any state $\state$, the following inequality holds between her Shapley and proportional cost:
	\begin{align*}
		\frac{2}{d+1}\cdot X_i(P) &\leq  X_i^{\text{Prop}}(P)  \leq\frac{d+3}{4} \cdot X_i(P).
	\end{align*}
\end{corollary}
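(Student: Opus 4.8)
The plan is to reduce to a single monomial cost function and then exploit the structural facts of Proposition~\ref{prop:properties}. Write $c_e(x)=\sum_{k=0}^{d}a_kx^k$ with all $a_k\ge 0$, and for a fixed $k$ let $\chi^{(k)}_{ie}(P)$ be the Shapley cost share of $i$ on $e$ obtained when $c_e$ is replaced by the pure monomial $x^k$ (so the joint cost becomes $x^{k+1}$). Since $\chi_e(i,A)$ is linear in the joint cost function, $\chi_{ie}(P)=\sum_k a_k\,\chi^{(k)}_{ie}(P)$ and $\chi^{\text{Prop}}_{ie}(P)=\sum_k a_k\,w_i f_e(P)^k$. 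Multiplying by $a_k\ge 0$ and summing, it therefore suffices to prove, for every $0\le k\le d$,
\[
\chi^{(k)}_{ie}(P)\ \le\ \tfrac{d+1}{2}\,w_i f_e(P)^k\qquad\text{and}\qquad w_i f_e(P)^k\ \le\ \tfrac{d+3}{4}\,\chi^{(k)}_{ie}(P).
\]
For $k\in\{0,1\}$ the Shapley share coincides with the proportional one: the marginal contribution of $i$ for $x^0$ is constantly $w_i$, so $\chi^{(0)}_{ie}(P)=w_i=w_i f_e(P)^0$, and for $x^1$ one gets $\chi^{(1)}_{ie}(P)=2w_i\,E_\pi[W_{S_e(P)}^{<i,\pi}]+w_i^2=w_i f_e(P)$ using $E_\pi[W_{S_e(P)}^{<i,\pi}]=\tfrac12(f_e(P)-w_i)$; hence both bounds hold since $d\ge 1$. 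So fix $k\ge 2$, abbreviate $f:=f_e(P)$, $T:=f-w_i$, and prove the sharper bounds $\chi^{(k)}_{ie}(P)\le\tfrac{k+1}{2}w_i f^k$ and $w_i f^k\le\tfrac{k+3}{4}\chi^{(k)}_{ie}(P)$, which imply the required ones as $k\le d$.

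For the first bound I use Proposition~\ref{prop:properties}(b): replacing two users of $e$ other than $i$ by one player carrying their combined weight does not decrease $\chi^{(k)}_e(i,\cdot)$; iterating, all users other than $i$ may be merged into a single player $j^*$ of weight $T$, giving
\[
\chi^{(k)}_{ie}(P)\ \le\ \chi^{(k)}_e\bigl(i,\{i,j^*\}\bigr)\ =\ \tfrac12\bigl(w_i^{k+1}+f^{k+1}-T^{k+1}\bigr).
\]
Expanding $f^{k+1}=(w_i+T)^{k+1}$ with the binomial theorem and cancelling, the inequality $w_i^{k+1}+f^{k+1}-T^{k+1}\le (k+1)\,w_i f^k$ reduces to the elementary facts $\binom{k+1}{\ell}\le(k+1)\binom{k}{\ell-1}$ for $1\le\ell\le k$ (which follow from $\binom{k+1}{\ell}=\binom{k}{\ell}+\binom{k}{\ell-1}$ and $\binom{k}{\ell}\le k\binom{k}{\ell-1}$) together with $2\le k+1$.

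For the second bound I use Proposition~\ref{prop:properties}(c): repeatedly splitting each user of $e$ other than $i$ into two players of half the weight (padding with zero-weight dummies where needed) never increases $\chi^{(k)}_e(i,\cdot)$ and produces a decreasing sequence whose limit, since $s\mapsto(s+w_i)^{k+1}-s^{k+1}$ is convex and the predecessor weight converges in law to the uniform distribution on $[0,T]$, equals $\tfrac1T\int_0^T[(s+w_i)^{k+1}-s^{k+1}]\,ds$. Hence
\[
\chi^{(k)}_{ie}(P)\ \ge\ \frac{(w_i+T)^{k+2}-w_i^{k+2}-T^{k+2}}{(k+2)\,T}\ =\ \frac{f^{k+2}-w_i^{k+2}-(f-w_i)^{k+2}}{(k+2)(f-w_i)}.
\]
Normalising $f=1$ and writing $a:=w_i$, $b:=1-a$, the goal $w_i f^k\le\tfrac{k+3}{4}\chi^{(k)}_{ie}(P)$ becomes $4(k+2)\,ab\le(k+3)(1-a^{k+2}-b^{k+2})$, and since $1-a^{k+2}-b^{k+2}=\sum_{\ell=1}^{k+1}\binom{k+2}{\ell}a^\ell b^{k+2-\ell}$, dividing by $ab$ turns this into
\[
\sum_{m=0}^{k}\binom{k+2}{m+1}a^m b^{k-m}\ \ge\ \frac{4(k+2)}{k+3}\,(a+b)^k .
\]
Comparing coefficients with $(a+b)^k=\sum_m\binom{k}{m}a^mb^{k-m}$, the ratio $\binom{k+2}{m+1}/\binom{k}{m}=\tfrac{(k+1)(k+2)}{(m+1)(k+1-m)}$ is minimised over integer $m$ near $k/2$: for odd $k$ this minimum is exactly $\tfrac{4(k+2)}{k+3}$, so every coefficient already dominates; for even $k$ the only deficient term is the central one $m=k/2$ (ratio $\tfrac{4(k+1)}{k+2}$), whose shortfall is absorbed by its two neighbours $m=\tfrac k2\pm1$ via $a^{k/2-1}b^{k/2+1}+a^{k/2+1}b^{k/2-1}\ge 2a^{k/2}b^{k/2}$, a short arithmetic check reducing to $5k\ge -8$. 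Summing over $k$ with weights $a_k\ge0$ then yields the lemma.

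The main obstacle is this final polynomial inequality, in particular the even-$k$ case, where the naive coefficient-by-coefficient comparison fails at the middle term and weight must be redistributed among neighbouring terms; a secondary technical point is making precise the limiting argument behind the lower bound on $\chi^{(k)}_{ie}(P)$ (monotone convergence of the Shapley value under repeated equal-halving to the uniform/Aumann--Shapley value). Everything else is routine binomial bookkeeping.
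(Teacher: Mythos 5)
Your proposal is correct, and while it follows the same skeleton as the paper's proof of Lemma~\ref{lemma:shatoprop} (from which the paper obtains this corollary simply by summing over resources), it handles the decisive step by a genuinely different argument. Like the paper, you reduce to monomials, upper-bound the Shapley share by merging all co-users into one player of weight $T$ via Proposition~\ref{prop:properties}(b) — your $\tfrac12\bigl(w_i^{k+1}+f^{k+1}-T^{k+1}\bigr)$ is exactly the paper's quantity $A$ — and lower-bound it by the integral $\tfrac1T\int_0^T\bigl[(s+w_i)^{k+1}-s^{k+1}\bigr]ds$ obtained by repeated halving via Proposition~\ref{prop:properties}(c) with zero-weight padding, which is exactly the paper's $B$ (the paper is no more explicit than you about this limiting step, so that is on par). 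The divergence is in comparing $B$ to the proportional share: the paper sets $z=w_i/T$ and shows through Claim~\ref{uniqueRoot} (a root-uniqueness argument via Descartes' rule of signs) that the ratio $B/\chi^{\text{Prop}}_{ie}(P)$ attains its minimum at $z=1$, then evaluates there; you instead prove the equivalent homogeneous inequality $\sum_{m=0}^{k}\binom{k+2}{m+1}a^m b^{k-m}\ge\tfrac{4(k+2)}{k+3}(a+b)^k$ by coefficient domination, repairing the single deficient central coefficient for even $k$ with an AM--GM transfer from its two neighbours. I verified that patch: the central deficiency is $\tfrac{4}{(k+2)(k+3)}\binom{k}{k/2}$, the surplus at each of $m=\tfrac k2\pm1$ is $\tfrac{12}{(k+3)(k+4)}\binom{k}{k/2}$, and the required condition $\tfrac{24}{k+4}\ge\tfrac{4}{k+2}$ is indeed your stated $5k\ge-8$; the odd-$k$ case is tight term by term, exactly as you claim. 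Your route buys a purely combinatorial argument (no calculus or root-counting) and per-degree constants $\tfrac{k+1}{2}$ and $\tfrac{k+3}{4}$ that are at least as sharp as the paper's degree-$d$ constants; the paper's route is shorter on the page once its extremal-point claim is granted, concentrating all difficulty there. Two shared implicit assumptions worth a sentence in either write-up: $d\ge1$ (for $k\in\{0,1\}$ your bounds need it, and the paper's bound $\max(1,\max_j\tfrac{r+1}{2j})\le\tfrac{d+1}{2}$ needs it too), and $T>0$ when dividing by $T$ (respectively the paper's $y>0$ when setting $z=w/y$); the case $T=0$ is trivial since then $\chi_{ie}(P)=\chi^{\text{Prop}}_{ie}(P)$.
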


\begin{lemma} \label{lemma:approximate_equilibria}
	Any $\rho$-approximate pure Nash equilibrium for a SV weighted congestion game of degree $d$ is a $\frac{(d+3)\cdot(d+1)}{8}\cdot\rho$-approximate pure Nash equilibrium for the weighted congestion game with proportional sharing.
\end{lemma}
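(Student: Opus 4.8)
The plan is to derive the statement directly from Corollary~\ref{corollary:shatoprop}, which sandwiches a player's proportional cost between constant multiples of her Shapley cost. Let $P$ be a $\rho$-approximate pure Nash equilibrium of the SV weighted congestion game. Fix an arbitrary player $i$ and an arbitrary alternative strategy $P_i' \in \CP_i$; I need to bound $X_i^{\text{Prop}}(P)$ against $X_i^{\text{Prop}}(P_{-i}, P_i')$.

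First I would pass from proportional to Shapley on the left-hand side: by the upper bound in Corollary~\ref{corollary:shatoprop}, $X_i^{\text{Prop}}(P) \le \tfrac{d+3}{4}\cdot X_i(P)$. Next, I apply the $\rho$-PNE property of $P$ in the Shapley game, which gives $X_i(P) \le \rho\cdot X_i(P_{-i}, P_i')$. Finally I would pass back from Shapley to proportional on the right-hand side: the lower bound in Corollary~\ref{corollary:shatoprop} reads $\tfrac{2}{d+1}\cdot X_i(P_{-i},P_i') \le X_i^{\text{Prop}}(P_{-i},P_i')$, i.e. $X_i(P_{-i},P_i') \le \tfrac{d+1}{2}\cdot X_i^{\text{Prop}}(P_{-i},P_i')$.

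Chaining these three inequalities yields
\begin{align*}
X_i^{\text{Prop}}(P) \;\le\; \frac{d+3}{4}\cdot X_i(P) \;\le\; \frac{d+3}{4}\cdot\rho\cdot X_i(P_{-i},P_i') \;\le\; \frac{(d+3)(d+1)}{8}\cdot\rho\cdot X_i^{\text{Prop}}(P_{-i},P_i').
\end{align*}
Since $i$ and $P_i'$ were arbitrary, this is exactly the defining inequality of a $\tfrac{(d+3)(d+1)}{8}\cdot\rho$-approximate pure Nash equilibrium for the proportional-sharing game.

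There is essentially no serious obstacle here: the only thing to be careful about is applying the two bounds of Corollary~\ref{corollary:shatoprop} at the correct strategy profiles (the upper bound at $P$, the lower bound at the deviation $(P_{-i},P_i')$) and in the correct direction, and noting that both bounds hold for \emph{every} state, so in particular for the deviation profile, which need not itself be an equilibrium of either game.
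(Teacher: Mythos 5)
Your proof is correct and follows essentially the same route as the paper's: apply the upper bound of Corollary~\ref{corollary:shatoprop} at $P$, the $\rho$-PNE condition of the Shapley game, and the lower bound of the corollary at the deviation profile $(P_{-i},P_i')$. In fact, your writeup is more careful than the paper's displayed chain, which suppresses (indeed mistypes) the deviation profile in the intermediate terms, so no issues remain.
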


\section{The Approximate Price of Anarchy and Stretch}
\label{section:PoA_Stretch}

Firstly, we upper bound the approximate Price of Anarchy for our game class.
\begin{lemma}\label{lemma:approx-PoA}
	Let $\rho \geq 1$ and $d$ the maximum degree of the polynomial cost functions. Then
	\[\rho\text{-PoA} \leq \frac{\rho \cdot (2^{\frac{1}{d+1}} - 1)^{-d}}{2^{-\frac{d}{d+1}} \cdot( 1+ \rho) - \rho}.\]
\end{lemma}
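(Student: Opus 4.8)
The plan is to prove the bound by a \emph{smoothness}-type argument carried out directly on the Shapley cost shares, with the slack calibrated by a single Hölder inequality.

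\textbf{Setting up the global bound.} Fix a $\rho$-approximate pure Nash equilibrium $P$ and an optimal outcome $P^{*}$. For each player $i$, the equilibrium condition applied to the deviation $P_i^{*}$ gives $X_i(P)\le\rho\,X_i(P_{-i},P_i^{*})$. Since $X_i(P_{-i},P_i^{*})=\sum_{e\in P_i^{*}}\chi_e\!\big(i,S_e(P_{-i},P_i^{*})\big)$ and $S_e(P_{-i},P_i^{*})\subseteq S_e(P)\cup\{i\}$ with $i$ lying in both sets whenever $e\in P_i^{*}$, repeated use of Proposition~\ref{prop:properties}(a) yields $X_i(P)\le\rho\sum_{e\in P_i^{*}}\chi_e\!\big(i,S_e(P)\cup\{i\}\big)$. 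Summing over all players and exchanging the order of summation reduces the claim to the following per-resource inequality, with $\lambda=\big(\tfrac{2^{1/(d+1)}}{2^{1/(d+1)}-1}\big)^{d}$ and $\mu=2^{d/(d+1)}-1$:
\begin{align}\label{eq:plan-local}
\sum_{i\in S_e(P^{*})}\chi_e\!\big(i,S_e(P)\cup\{i\}\big)\le\lambda\,C_e\!\big(f_e(P^{*})\big)+\mu\,C_e\!\big(f_e(P)\big).
\end{align}
Indeed, summing \eqref{eq:plan-local} over $e$ gives $SC(P)\le\rho\lambda\,SC(P^{*})+\rho\mu\,SC(P)$, and solving for $SC(P)/SC(P^{*})$, valid when $1-\rho\mu>0$, produces $\tfrac{\rho\lambda}{1-\rho\mu}$; a routine rearrangement (pulling out the factor $2^{d/(d+1)}$ from numerator and denominator) shows this equals the stated right-hand side, which outside that regime is to be read as $+\infty$.

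\textbf{Reducing the per-resource inequality to an analytic one.} Two simple steps strip \eqref{eq:plan-local} down to an inequality about the cost function alone. First, because $C_e$ has non-negative coefficients it is convex on $[0,\infty)$, so $t\mapsto C_e(t+w_i)-C_e(t)$ is non-decreasing; since the total weight preceding $i$ in the random-order definition of $\chi_e\!\big(i,S_e(P)\cup\{i\}\big)$ never exceeds $f_e(P)$, this gives the marginal bound $\chi_e\!\big(i,S_e(P)\cup\{i\}\big)\le C_e\!\big(f_e(P)+w_i\big)-C_e\!\big(f_e(P)\big)$. Second, $w\mapsto C_e\!\big(f_e(P)+w\big)-C_e\!\big(f_e(P)\big)$ is convex and vanishes at $w=0$, hence superadditive, so summing over $i\in S_e(P^{*})$ (whose weights sum to $f_e(P^{*})$) gives $\sum_{i\in S_e(P^{*})}\big(C_e(f_e(P)+w_i)-C_e(f_e(P))\big)\le C_e\!\big(f_e(P)+f_e(P^{*})\big)-C_e\!\big(f_e(P)\big)$. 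It therefore suffices to show $C_e(a+b)-C_e(b)\le\lambda\,C_e(a)+\mu\,C_e(b)$ for all $a,b\ge0$.

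\textbf{The Hölder step, which is the crux.} By linearity of all quantities in $C_e$, it suffices to prove, for every monomial $x^{q+1}$ with $0\le q\le d$, that $(a+b)^{q+1}\le\lambda_q\,a^{q+1}+(1+\mu_q)\,b^{q+1}$ where $\lambda_q=\big(\tfrac{2^{1/(q+1)}}{2^{1/(q+1)}-1}\big)^{q}$ and $1+\mu_q=2^{q/(q+1)}$; as $\lambda_q$ and $\mu_q$ are non-decreasing in $q$, one may then replace them by $\lambda,\mu$ and recombine the monomials with their non-negative coefficients. The monomial inequality is Hölder's inequality with exponents $\tfrac{q+1}{q}$ and $q+1$: writing $a+b=\theta_1\cdot\tfrac{a}{\theta_1}+\theta_2\cdot\tfrac{b}{\theta_2}$ gives $(a+b)^{q+1}\le\big(\theta_1^{(q+1)/q}+\theta_2^{(q+1)/q}\big)^{q}\big(\theta_1^{-(q+1)}a^{q+1}+\theta_2^{-(q+1)}b^{q+1}\big)$, and the choice $\theta_1^{(q+1)/q}=1-2^{-1/(q+1)}$, $\theta_2^{(q+1)/q}=2^{-1/(q+1)}$ makes the first factor equal to $1$ and produces exactly $\theta_1^{-(q+1)}=(1-2^{-1/(q+1)})^{-q}=\lambda_q$ and $\theta_2^{-(q+1)}=2^{q/(q+1)}=1+\mu_q$. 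I expect this calibration to be the main obstacle: each earlier step discards some slack — the marginal bound on the Shapley share, the merging of the optimal players into one via superadditivity, and the degree reduction — and the argument only closes because Hölder's inequality with precisely this choice of weights recovers the constants in the statement. What remains afterwards is the purely mechanical algebraic check that $\tfrac{\rho\lambda}{1-\rho\mu}$ equals $\tfrac{\rho(2^{1/(d+1)}-1)^{-d}}{2^{-d/(d+1)}(1+\rho)-\rho}$.
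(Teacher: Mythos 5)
Your proof is correct and follows essentially the same route as the paper: bound the equilibrium cost via the deviation to $P^*$, dominate each Shapley share $\chi_e(i,S_e(P)\cup\{i\})$ by the last-position marginal cost, merge the optimal players' increments by superadditivity, and close with the smoothness inequality $(x+y)^{d+1}-x^{d+1}\le\lambda\, y^{d+1}+\mu\, x^{d+1}$ before solving for $SC(P)/SC(P^*)$. The only difference is that the paper imports that inequality as a claim from Gkatzelis et al., whereas you re-derive it monomial by monomial via H\"older with your choice of $\theta_1,\theta_2$ (which, together with the easily verified monotonicity of $\lambda_q,\mu_q$ in $q$, checks out), a harmless self-contained substitute.
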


Similar to the $\rho$-PoA, we  also derive an upper bound on the $\rho$-stretch which expresses the ratio between local and global optimum of the potential function.
\begin{lemma}\label{lemma:approx-stretch}
	Let $\rho \geq 1$ and $d$ the maximum  degree of the polynomial cost functions. Then an upper bound for the $\rho$-stretch of polynomial  SV weighted congestion games  
 is
	\[\rho\text{-}\Omega \leq \frac{\rho \cdot (2^{\frac{1}{d+1}} - 1)^{-d}\cdot (d+1)}{2^{-\frac{d}{d+1}} \cdot( 1+ \rho) - \rho}.\]
\end{lemma}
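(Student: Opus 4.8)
The plan is to reduce the stretch bound to the Price of Anarchy bound of Lemma~\ref{lemma:approx-PoA} by exploiting Corollary~\ref{cor:potential_sum_costs}, which sandwiches the potential between the social cost and a $(d+1)$-fraction of it. First I would fix a $\rho$-approximate pure Nash equilibrium $P$ attaining the maximum in the definition of $\rho\text{-}\Omega$, and let $\hat P$ be the potential minimizer. The goal is to bound $\Phi(P)/\Phi(\hat P)$. By Corollary~\ref{cor:potential_sum_costs} applied with $B = N$ we have $\Phi(P) \le \sum_{i\in N} X_i(P) = \socialCosts{P}$ and, on the other side, $\socialCosts{\hat P} = \sum_{i\in N} X_i(\hat P) \le (d+1)\cdot\Phi(\hat P)$. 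Combining these two and using $\Phi(\hat P) \le \Phi(P^*)$ (since $\hat P$ minimizes the potential) where $P^*$ is the social optimum — wait, this direction needs care: I actually want a lower bound on $\Phi(\hat P)$ in terms of something I control, so instead I would use $\Phi(\hat P) \ge \frac{1}{d+1}\socialCosts{\hat P}$... but $\hat P$ is not the social optimum.

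The cleaner route: $\dfrac{\Phi(P)}{\Phi(\hat P)} \le \dfrac{\socialCosts{P}}{\Phi(\hat P)} \le \dfrac{\socialCosts{P}}{\Phi(P^*)}$, because $\hat P$ minimizes $\Phi$ so $\Phi(\hat P)\le\Phi(P^*)$ — hmm, that makes the denominator smaller, which is the wrong direction again. Let me reconsider. I should instead bound $\Phi(\hat P)$ from below. We have $\Phi(\hat P) \le \Phi(P^*)$, and separately I want $\socialCosts{P}/\Phi(\hat P)$ to be at most (the PoA bound)$\cdot(d+1)$. Since $\Phi(\hat P) \le \Phi(P^*) \le \socialCosts{P^*}$ is again the wrong way, the correct move is: the PoA bound from Lemma~\ref{lemma:approx-PoA} already controls $\socialCosts{P}/\socialCosts{P^*}$, and I need to relate $\socialCosts{P^*}$ to $\Phi(\hat P)$. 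Here $\socialCosts{P^*} = \sum_i X_i(P^*)$ and by Corollary~\ref{cor:potential_sum_costs}, $\socialCosts{P^*} \le (d+1)\Phi(P^*)$; but also $\Phi(\hat P)\le\Phi(P^*)$, so $\socialCosts{P^*}$ and $\Phi(\hat P)$ are related only by $\socialCosts{P^*}\ge\Phi(P^*)\ge\Phi(\hat P)$, i.e. $\Phi(\hat P)\le\socialCosts{P^*}$, which gives $\socialCosts{P}/\Phi(\hat P) \ge \socialCosts{P}/\socialCosts{P^*}$ — still the wrong inequality direction for an upper bound.

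The resolution, which I expect to be the actual mechanism, is that the proof of Lemma~\ref{lemma:approx-PoA} does not really use the social optimum $P^*$ but rather any fixed reference strategy; the $\rho$-PoA argument is a smoothness-type argument comparing $P$ to an arbitrary deviation target, and the same chain of inequalities, with $\socialCosts{\hat P}$ replaced by $(d+1)\Phi(\hat P)$ at the one place where the optimum's social cost is lower-bounded, yields the extra factor $(d+1)$. Concretely, I would revisit the computation behind Lemma~\ref{lemma:approx-PoA}: at the step where one writes $\socialCosts{P} \le (\text{factor})\cdot \socialCosts{P^*}$ with $P^*$ the social optimum, one instead takes $P^*$ to be the potential minimizer $\hat P$ and replaces the final use of "$\socialCosts{\hat P}$" by "$(d+1)\cdot\Phi(\hat P)$" via Corollary~\ref{cor:potential_sum_costs}. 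Since $\hat P$ minimizes $\Phi$, all intermediate inequalities that held for the $\rho$-PNE condition with the social optimum as target continue to hold with $\hat P$ as target. The main obstacle is therefore purely bookkeeping: verifying that the smoothness-style inequality chain in Lemma~\ref{lemma:approx-PoA} is agnostic to which comparison state is used and only invokes the $\rho$-PNE property of $P$, so that substituting $\hat P$ for $P^*$ and applying Corollary~\ref{cor:potential_sum_costs} once costs exactly the multiplicative $(d+1)$ that appears in the claimed bound.
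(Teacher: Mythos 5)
Your final plan does work, but it is a genuinely different (and slightly heavier) route than the paper's, and your earlier detours missed the one elementary fact that makes the lighter route go through. The paper keeps Lemma~\ref{lemma:approx-PoA} as a black box: since $P^*$ minimizes \emph{social cost} over all profiles and $\hat P$ is just another profile, $\socialCosts{P^*}\le\socialCosts{\hat P}$, hence $\frac{\socialCosts{P}}{\socialCosts{\hat P}}\le\frac{\socialCosts{P}}{\socialCosts{P^*}}\le\rho\text{-PoA}$; combining this with $\Phi(P)\le\socialCosts{P}$ and $\socialCosts{\hat P}\le(d+1)\cdot\Phi(\hat P)$ (Corollary~\ref{cor:potential_sum_costs} with $B=N$) gives $\frac{\Phi(P)}{\Phi(\hat P)}\le(d+1)\cdot\rho\text{-PoA}$, which is exactly the inequality you were hunting for in your second and third paragraphs --- the denominator is fixed by social-cost minimality of $P^*$, not by any property of the potential minimizer, which is why your attempts through $\Phi(\hat P)\le\Phi(P^*)$ kept pointing the wrong way. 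Your resolution instead re-opens the proof of Lemma~\ref{lemma:approx-PoA} and runs the smoothness chain with $\hat P$ as the deviation target; this is sound, because that chain only invokes the $\rho$-PNE property of $P$, convexity, and the $(\lambda,\mu)$-inequality, so it holds verbatim for an arbitrary comparison profile $P'$ in place of $P^*$ (note that the validity has nothing to do with $\hat P$ minimizing $\Phi$, contrary to the justification in your last paragraph). Taking $P'=\hat P$, then substituting $\socialCosts{\hat P}\le(d+1)\Phi(\hat P)$ and $\Phi(P)\le\socialCosts{P}$, yields $\frac{\Phi(P)}{\Phi(\hat P)}\le\frac{(d+1)\,\rho\lambda}{1-\rho\mu}$, which matches the claimed bound. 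So both arguments pay the same $(d+1)$ through the same potential--cost sandwich; the paper's version buys simplicity by never inspecting the PoA proof, while yours buys independence from the social optimum at the price of verifying (correctly, as it turns out) that the smoothness argument is agnostic to the comparison state.
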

We now proceed to  the upper bound  of the $D$-limited $\rho$-stretch. To do this, we use the $\rho$-PoA (Lemma \ref{lemma:approx-PoA}) and  Lemmas \ref{lemma:relation-social-costs} and \ref{lemma:approx-limited-stretch-social-costs}, which we prove next. 
\begin{lemma}\label{lemma:relation-social-costs}
	Let $\rho \geq 1$, $d$ the maximum degree of the polynomial cost functions and $\hat{P}=\min_{P'\in\CP}\Phi(P')$. Then
	\[
	\frac{SC(P)}{SC(\hat{P})}
	\leq
	\frac{\rho \cdot (2^{\frac{1}{d+1}} - 1)^{-d}}{2^{-\frac{d}{d+1}} \cdot( 1+ \rho) - \rho}.
	\]
\end{lemma}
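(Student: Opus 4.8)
The plan is to bound $SC(P)$ for an arbitrary $\rho$-approximate pure Nash equilibrium $P$ by comparing it to $SC(\hat P)$, where $\hat P$ minimises the potential, using the same machinery that produces the $\rho$-PoA bound of Lemma~\ref{lemma:approx-PoA}. The point is that $\hat P$ need not be the social optimum $P^*$, so we cannot simply invoke Lemma~\ref{lemma:approx-PoA} as a black box; however, the derivation of that bound only uses that the comparison profile is \emph{some} fixed strategy profile against which we measure deviations. Concretely, I would revisit the proof of the $\rho$-PoA bound: for each player $i$ we use the $\rho$-equilibrium condition $X_i(P)\le\rho\cdot X_i(P_{-i},\hat P_i)$, sum over all players, and then bound $\sum_i X_i(P_{-i},\hat P_i)$ in terms of $SC(P)$ and $SC(\hat P)$ using the structural properties of Shapley cost shares (Proposition~\ref{prop:properties}) together with the Shapley-to-proportional approximation (Lemma~\ref{lemma:shatoprop} / Corollary~\ref{corollary:shatoprop}) and the standard smoothness-type inequality that gives the $(2^{1/(d+1)}-1)^{-d}$ and $2^{-d/(d+1)}$ factors. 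Carrying this out with $\hat P$ in place of $P^*$ yields exactly
\begin{align*}
\frac{SC(P)}{SC(\hat P)}\leq\frac{\rho\cdot(2^{\frac{1}{d+1}}-1)^{-d}}{2^{-\frac{d}{d+1}}\cdot(1+\rho)-\rho},
\end{align*}
which is the claimed inequality.

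Alternatively, and perhaps more cleanly, I would try to route everything through the potential. Since $\hat P$ minimises $\Phi$, we have $\Phi(P)/\Phi(\hat P)\le\rho\text{-}\Omega$ by definition, and Lemma~\ref{lemma:approx-stretch} gives the explicit bound on $\rho\text{-}\Omega$ with an extra factor $(d+1)$. On the other hand, Corollary~\ref{cor:potential_sum_costs} with $B=N$ gives $\Phi(P)\le SC(P)\le(d+1)\cdot\Phi(P)$, and likewise for $\hat P$. Combining these, $SC(P)/SC(\hat P)\le(d+1)\cdot\Phi(P)/\Phi(\hat P)\le(d+1)\cdot\rho\text{-}\Omega$, which would give the target bound but with an \emph{extra} factor of $(d+1)$ — too weak. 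So the naive potential route loses a factor, and the first approach (re-running the PoA argument directly against $\hat P$) is the one that gives the tight statement; I would present that.

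The main obstacle is making precise that the PoA argument is genuinely insensitive to which comparison profile is used. In the PoA proof the optimality of $P^*$ is typically \emph{not} exploited — one only needs a fixed target profile to plug into each player's deviation inequality — but one must double-check that no step secretly uses $SC(P^*)\le SC(P)$ or monotonicity of $SC$ at the optimum. If some step does use such a property, the fix is to observe that $\hat P$ still satisfies whatever weaker property is actually needed (e.g. it is a valid strategy profile, and its social cost is a positive quantity appearing only in the denominator), so the chain of inequalities goes through verbatim with $SC(\hat P)$ replacing $SC(P^*)$. A secondary technical point is the degenerate case where a denominator term $2^{-d/(d+1)}(1+\rho)-\rho$ is non-positive; as in Lemma~\ref{lemma:approx-PoA} this simply means the bound is vacuous (the right-hand side is taken to be $+\infty$), so the inequality holds trivially, and I would note this in passing rather than belabour it.
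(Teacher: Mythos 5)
Your first approach is correct, but it does more work than the paper, and the premise you use to justify it is mistaken: you claim one ``cannot simply invoke Lemma~\ref{lemma:approx-PoA} as a black box'' because $\hat P\neq P^*$, yet that is exactly what the paper does. Since $P^*$ minimises social cost, $SC(\hat P)\geq SC(P^*)$, hence for every $\rho$-approximate equilibrium $P$ one has $\frac{SC(P)}{SC(\hat P)}\leq\frac{SC(P)}{SC(P^*)}\leq\rho\text{-PoA}$, and Lemma~\ref{lemma:approx-PoA} finishes the proof in one line --- the whole point is that $\hat P$ appears only in the \emph{denominator}, so its suboptimality only helps. Your alternative of re-running the smoothness argument with $\hat P$ as the comparison profile is nevertheless sound: the $\rho$-PNE condition is applied with deviations to $\hat P_i$, the bound $\chi_e(i,S_e(P)\cup\{i\})\leq C_e(f_e(P)+w_i)-C_e(f_e(P))$ and the $(\lambda,\mu)$ inequality of Gkatzelis et al.\ never use optimality of $P^*$, so one gets $SC(P)\leq\rho\lambda\,SC(\hat P)+\rho\mu\,SC(P)$ and the same constant; this buys the (mildly) more general statement that the bound holds against \emph{any} fixed comparison profile, at the cost of duplicating the PoA proof. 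Two small corrections to your sketch: the re-derivation does not need Lemma~\ref{lemma:shatoprop} or Corollary~\ref{corollary:shatoprop} (only convexity of $C_e$ and the smoothness claim enter), and your correct observation that the pure potential route loses a factor $(d+1)$ matches why the paper proves this lemma via social costs rather than via $\rho\text{-}\Omega$.
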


\begin{proof}
	Let $P$ be an $\rho$-approximate equilibrium and $P^*$ the optimal outcome.  Let $\hat{P}=\min_{P'\in\CP}\Phi(P')$ be the minimizer of the potential and by definition also a pure Nash equilibrium. Then we can lower bound the $\rho$-PoA as follows,
	\begin{align}
	\rho\text{-PoA} = \underset{P \in \rho\text{-PNE}}{\max} \frac{SC(P)}{SC(P^*)}
	\ge \underset{P \in \rho\text{-PNE}}{\max} \frac{SC(P)}{SC(\hat{P})}. \label{lower}
	\end{align}
Lemma~\ref{lemma:approx-PoA} and \eqref{lower} give that $
	\underset{P \in \rho\text{-PNE}}{\max} \frac{SC(P)}{SC(\hat{P})}
	\leq \rho\text{-PoA}
	\leq \frac{\rho \cdot (2^{\frac{1}{d+1}} - 1)^{-d}}{2^{\frac{-d}{d+1}} \cdot( 1+ \rho) - \rho}.
	$ \hfill 
\end{proof}

\begin{lemma}\label{lemma:approx-limited-stretch-social-costs}
	Let $\rho \geq 1$, $d$ the maximum degree of the polynomial cost functions and $D\subseteq N$ an arbitrary subset of players. Then
	\[
	\rho\text{-}\Omega_D
	\leq
	\frac{(d+1)^2 \cdot (d+3)}{8} \cdot \frac{SC(P)}{SC(\hat{P})}.
	\]
\end{lemma}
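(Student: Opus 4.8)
The plan is to sandwich the Shapley potential between multiples of the Shapley social cost at the two states appearing in the definition of $\rho\text{-}\Omega_D$, using the two-sided estimate of Corollary~\ref{cor:potential_sum_costs}, and then to observe that the constant in the statement is no smaller than the one this argument produces.

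Concretely, let $P\in\rho\text{-PNE}_D$ attain the maximum defining $\rho\text{-}\Omega_D$ and let $\hat P$ be the associated potential minimizer, with the players in $N\setminus D$ frozen at $\bar P_{N\setminus D}$ in both states, so that $\rho\text{-}\Omega_D=\Phi(P)/\Phi(\hat P)$. Applying Corollary~\ref{cor:potential_sum_costs} with $B=N$, and using that $\Phi_N(\cdot)=\Phi(\cdot)$ and $\sum_{i\in N}X_i(\cdot)=SC(\cdot)$, I obtain — at $P$ and at $\hat P$ respectively —
\[
\Phi(P)\le SC(P)\qquad\text{and}\qquad SC(\hat P)\le (d+1)\cdot\Phi(\hat P).
\]
Both inequalities hold at every state; the equilibrium property of $P$ plays no role in this lemma and only enters later, when it is combined with Lemma~\ref{lemma:relation-social-costs}. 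Dividing the two yields
\[
\rho\text{-}\Omega_D=\frac{\Phi(P)}{\Phi(\hat P)}\le (d+1)\cdot\frac{SC(P)}{SC(\hat P)}\le \frac{(d+1)^2(d+3)}{8}\cdot\frac{SC(P)}{SC(\hat P)},
\]
where the last step uses $(d+1)(d+3)\ge 8$, which holds for every $d\ge 1$.

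I do not anticipate a genuine obstacle here: the content is essentially an invocation of Corollary~\ref{cor:potential_sum_costs}, and the only points requiring attention are the bookkeeping identities $\Phi_N=\Phi$ and $\sum_{i\in N}X_i=SC$, together with remembering to apply the upper half of the corollary at $P$ but the lower half at $\hat P$. The mildly surprising feature is that the stated constant $\frac{(d+1)^2(d+3)}{8}$ is weaker than the bound $(d+1)$ that this argument actually yields; presumably this slacker form is chosen so that the lemma meshes uniformly with the Shapley-to-proportional reductions of Section~\ref{section:Approximation} and with the way Lemmas~\ref{lemma:approx-PoA}, \ref{lemma:relation-social-costs} and this one are assembled to bound the $D$-limited stretch.

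If an argument producing precisely the constant $\frac{(d+1)^2(d+3)}{8}$ is preferred, one would instead route the estimate for $\hat P$ through the proportional cost shares, combining Corollary~\ref{corollary:shatoprop} with Lemma~\ref{lemma:resource_potential_sum_costs} rather than with Corollary~\ref{cor:potential_sum_costs} directly; this gives a lossier but structurally parallel chain of comparisons, and the only care needed there is to apply that chain consistently at the minimizer $\hat P$.
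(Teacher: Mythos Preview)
Your proof is correct for the lemma as literally stated and is considerably more elementary than the paper's: you apply Corollary~\ref{cor:potential_sum_costs} with $B=N$ once in each direction, obtain the sharper factor $d+1$, and then relax to the stated constant.

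The paper takes a genuinely different route. It works not with the full potential $\Phi$ but with the $D$-partial potential $\Phi_D$, and reaches the \emph{full} social cost by passing through the Shapley-to-proportional comparison of Lemma~\ref{lemma:shatoprop}: on each resource it sandwiches $\Phi_{e,D}(P)$ between constant multiples of $\tfrac{f_e^D(P)}{f_e(P)}\cdot C_e(f_e(P))$, then aggregates. This produces the constant $\tfrac{(d+1)^2(d+3)}{8}$ exactly (which explains the otherwise puzzling value in the statement) and, more importantly, bounds $\Phi_D(P)/\Phi_D(\hat P)$ in terms of $SC(P)/SC(\hat P)$. That partial-potential ratio is the quantity actually invoked downstream: the contradiction closing the proof of Lemma~\ref{lemma:potential_block} compares values of $\Phi_{D_r}$, not of $\Phi$. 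Your direct argument, transferred verbatim to $\Phi_D$, would only yield $\Phi_D(P)/\Phi_D(\hat P)\le (d+1)\cdot SC_D(P)/SC_D(\hat P)$, and the restricted social cost $SC_D$ is not controlled by Lemma~\ref{lemma:relation-social-costs}; the proportional-share detour is precisely what trades $SC_D$ for $SC$. So your approach buys brevity and a tighter constant for the lemma as written, while the paper's approach buys the $\Phi_D$-level estimate that the algorithm analysis actually consumes.
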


By  Lemma~\ref{lemma:relation-social-costs} and Lemma~\ref{lemma:approx-limited-stretch-social-costs}, we get the following desirable  corollary.
\begin{corollary} \label{stretch:upperbound}
	For $\rho \geq 1$, $d$ the maximum degree of the polynomial cost functions and $D\subseteq N$ an arbitrary subset of players,
	\[\rho\text{-}\Omega_D \leq \frac{(d+1)^2 \cdot (d+3)}{8} \cdot \frac{\rho \cdot (2^{\frac{1}{d+1}} - 1)^{-d}}{2^{-\frac{d}{d+1}} \cdot( 1+ \rho) - \rho}. \]
\end{corollary}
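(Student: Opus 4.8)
The plan is to obtain Corollary~\ref{stretch:upperbound} by directly composing the two lemmas that immediately precede it, which between them already span the whole chain from $\rho\text{-}\Omega_D$ to the closed form. First I would invoke Lemma~\ref{lemma:approx-limited-stretch-social-costs}, which reduces the $D$-limited $\rho$-stretch to a social-cost ratio, namely $\rho\text{-}\Omega_D \le \frac{(d+1)^2\cdot(d+3)}{8}\cdot\frac{SC(P)}{SC(\hat{P})}$, where $P$ ranges over $\rho$-approximate pure Nash equilibria and $\hat{P}=\min_{P'\in\CP}\Phi(P')$ is the global potential minimizer. Then I would invoke Lemma~\ref{lemma:relation-social-costs}, which bounds exactly that ratio by $\frac{\rho\cdot(2^{\frac{1}{d+1}}-1)^{-d}}{2^{-\frac{d}{d+1}}\cdot(1+\rho)-\rho}$. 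Substituting the second bound into the first yields the claimed inequality, and nothing further is needed.

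The only point that needs checking — and it is essentially the sole, very mild, ``obstacle'' — is that the two lemmas are being applied to matching objects. Lemma~\ref{lemma:relation-social-costs} is stated for a $\rho$-approximate pure Nash equilibrium $P$ and the \emph{global} potential minimizer $\hat{P}=\min_{P'\in\CP}\Phi(P')$, so one must confirm that the $P$ and $\hat{P}$ appearing on the right-hand side of Lemma~\ref{lemma:approx-limited-stretch-social-costs} are of exactly this kind; in particular that $\hat{P}$ there is the global minimizer and not a $D$-limited one (using the global minimizer only enlarges the ratio, so the resulting bound on $\rho\text{-}\Omega_D$ is still valid). This matches up automatically, both because the two statements share notation and because the estimate in Lemma~\ref{lemma:relation-social-costs} is a worst-case bound that holds uniformly over all of $\rho\text{-PNE}$ (it is read off from the $\rho$-PoA bound of Lemma~\ref{lemma:approx-PoA}), so no case distinction is required and the proof collapses to a single line.

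For completeness I would remark that the same estimate can also be assembled without treating the two lemmas as black boxes: one relates $\Phi(P)/\Phi(\hat{P})$ to a social-cost ratio up to a factor polynomial in $d$ by combining Corollary~\ref{cor:potential_sum_costs} (potential versus sum of player costs, a $(d+1)$-factor on each side) with Corollary~\ref{corollary:shatoprop} and Lemma~\ref{lemma:approximate_equilibria} (Shapley versus proportional cost shares), and then controls the social-cost ratio through the $\rho$-PoA bound of Lemma~\ref{lemma:approx-PoA}. The modular derivation above is cleaner, though, so I would present it that way, leaving the real work where it already sits, in Lemmas~\ref{lemma:approx-PoA}, \ref{lemma:relation-social-costs}, and~\ref{lemma:approx-limited-stretch-social-costs}.
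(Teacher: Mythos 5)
Your proposal is correct and coincides with the paper's own derivation: Corollary~\ref{stretch:upperbound} is obtained precisely by chaining Lemma~\ref{lemma:approx-limited-stretch-social-costs} with Lemma~\ref{lemma:relation-social-costs}, with no additional argument needed.
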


\section{Computation of Approximate Pure Nash Equilibria}\label{section:computation}

To compute $\rho$-approximate pure Nash equilibria in SV congestion  games, we construct an algorithm based on the  idea by Caragiannis et al.~\cite{DBLP:journals/teco/CaragiannisFGS15}. The main idea is to separate the players in different blocks depending on their costs.  The players who are processed first are the ones with the largest costs followed by the smaller ones. The size of the blocks and the distance between them  is polynomially bounded by the number of players $n$ and  the maximum degree $d$ of the polynomial cost functions $c_e$. Formally, we define $\playerCostsMax = \max_{i \in \allPlayers}\playerCosts{\state}$ as  the maximum cost among all players before running the algorithm. Let $\bestresponse{0}$ be a state of the game in which only player $i$ participates and plays her best move. Then, define as $\playerCostsMin = \min_{i \in \allPlayers}\playerCosts{\bestresponse{0}}$  the minimum possible cost in the game. Let $\constant$ be an arbitrary constant such that $\gamma >0$, $m = \log\left(\frac{\playerCostsMax}{\playerCostsMin}\right)$ is the number of different blocks and  $b_r=  \playerCostsMax \cdot g^{-r}$ the block size for any $r \in [0, m]$, where $g=2 \cdot n \cdot (d+1) \cdot \constant^{-3}$.
	
The algorithm is now executed in $m-1$ phases. Let $\state$ be the current state of the game and, for each phase $r \in [1, m-1]$, let $\state^r$ be the state before phase $r$. All players $i$ with $\playerCosts{P} \in [b_r, +\infty]$  perform an \sMove \ with $s=\left(\frac{1}{\tStretch}-2\constant\right)^{-1}$ (almost $\tStretch$-approximate moves), while all players $i$ with $\playerCosts{P} \in [b_{r+1}, b_r]$ perform a \tMove \ with $t=1+\gamma$ (almost pure moves).
Let $\bestresponse{\state}$ be the best response of player $i$ in state $\state$. The phase ends when the first and the second group of  players  are in an $s$- and $t$-approximate equilibrium, respectively. At the end of the phase,  players with $\playerCosts{P}>b_r$ have  irrevocably decided their strategy and have been added  in the list of finished players. In addition, before  the described phases are executed, there is an initial phase in which all players with $\playerCosts{P}\geq b_1$ can perform a \tMove \ to prepare the first real phase.
	
	\begin{algorithm}
		\caption{Computation of approximate pure Nash equilibria}
		\label{algorithm:computation}
		\begin{algorithmic}
			\STATE $\playerCostsMax = \max_{i \in \allPlayers}\playerCosts{\state}$, $\playerCostsMin = \min_{i \in \allPlayers}\playerCosts{\bestresponse{0}}$, $m = \log\left(\frac{\playerCostsMax}{\playerCostsMin}\right)$
			\STATE $\gamma >0$, $g=2 \cdot n \cdot (d+1) \cdot \constant^{-3}$, $b_r=  \playerCostsMax \cdot g^{-r} \forall \in [0, m]$
			\STATE  $t=1+\gamma$, $s=\left(\frac{1}{\tStretch}-2\constant\right)^{-1}$
			\WHILE{there is a player $i\in \allPlayers$ with $\playerCosts{\state} \geq b_1$ and who can perform a $t$-move}
			\STATE $P \leftarrow \left(\state_{-i}, \bestresponse{\state}\right)$
			\ENDWHILE
			\FORALL{phases $r$ from $1$ to $m-1$}
			\WHILE{there is a non-finished player $i\in \allPlayers$ either with $\playerCosts{\state} \in [b_r, +\infty]$ and who can perform a $s$-move or with $\playerCosts{\state} \in [b_{r+1}, b_r]$ and who can perform a $t$-move}
			\STATE $\state \leftarrow \left(\state_{-i}, \bestresponse{\state}\right)$
			\ENDWHILE
			\STATE Add all players $i \in \allPlayers$ with $\playerCosts{\state} \geq b_r$ to the set of finished players.
			\ENDFOR
		\end{algorithmic}
	\end{algorithm}
	
For the analysis, let $\movingPlayersPhase$ be the set of deviating players in phase $r$ and $\state^{r,i}$ denote the state after player $i \in D_r$ has done her last move within phase $r$.

\vspace{-1mm}
\begin{theorem}\label{theorem:algorithm}
An $\alpha$-approximate pure Nash equilibrium with $\alpha \in \left(\frac{d}{\ln 2}\right)^d \cdot poly(d)$ can be computed with a polynomial number of improvement steps.
\end{theorem}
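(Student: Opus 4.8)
The plan is to analyze Algorithm~\ref{algorithm:computation} phase by phase and to prove separately that (i) it performs only polynomially many improvement steps and (ii) the state it outputs is an $s$-approximate pure Nash equilibrium. For the approximation factor, note that Corollary~\ref{stretch:upperbound} applied with $\rho=t=1+\gamma$ puts $\tStretch$ into $\left(\frac{d}{\ln 2}\right)^d\cdot poly(d)$ once $\gamma$ is chosen small enough; this also makes $\frac{1}{\tStretch}-2\gamma>0$, so $s=\left(\frac{1}{\tStretch}-2\gamma\right)^{-1}$ is well defined, satisfies $s>t$, and again lies in $\left(\frac{d}{\ln 2}\right)^d\cdot poly(d)$. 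Hence it suffices to take $\alpha=s$.

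\emph{Polynomially many steps.} In phase $r$ only players of $D_r$ move, and each improvement step changes the strategy of a single $i\in D_r$; hence, by Proposition~\ref{prop:diffpotentialcosts} with $B=D_r$, every step strictly decreases $\Phi_{D_r}$ by exactly the mover's gain. Each eligible move --- an $s$-move or a $t$-move --- is made by a player whose current cost is at least $b_{r+1}$ and is improved by a factor larger than $t$ (using $s>t$ for $s$-moves), so each step decreases $\Phi_{D_r}$ by at least $b_{r+1}\cdot\frac{\gamma}{1+\gamma}$. On the other hand, at the start of phase $r$ all non-finished players --- in particular all of $D_r$ --- have cost below $b_{r-1}$ (below $\playerCostsMax$ in phase~$1$ and in the initial phase), so Corollary~\ref{cor:potential_sum_costs} gives $\Phi_{D_r}\le\sum_{i\in D_r}X_i<n\,b_{r-1}$. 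Thus phase $r$ uses at most $\frac{1+\gamma}{\gamma}\cdot n\cdot\frac{b_{r-1}}{b_{r+1}}=\frac{1+\gamma}{\gamma}\,n\,g^{2}$ improvement steps, which is polynomial in $n$, $d$ and $\frac1\gamma$; the initial phase is bounded analogously by $\frac{1+\gamma}{\gamma}\,n\,g$. Summing over the $m=\log(\playerCostsMax/\playerCostsMin)$ phases --- a number polynomial in the input size --- yields a polynomial total.

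\emph{Skeleton of correctness.} When the while-loop of phase $r$ terminates, a non-finished player $i$ with current cost at least $b_r$ cannot perform an $s$-move, i.e. $X_i(P^{r+1})\le s\cdot X_i\big((P^{r+1})_{-i},P_i'\big)$ for all $P_i'$; these are precisely the players frozen at the end of phase $r$, and for $r=m-1$ the algorithm stops here, so such a player is in an $s$-equilibrium of the final state. A player that is never frozen has, at the end, cost below $b_{m-1}$ but at least $\playerCostsMin$ (the latter since $\playerCostsMin$ lower-bounds every player's cost in every state by Proposition~\ref{prop:properties}(a), and $b_m\le\playerCostsMin$); hence it lies in the range of $t$-moves and, at loop termination, cannot perform one, so since $s>t$ it too is in an $s$-equilibrium. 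The only remaining case is a player $i$ frozen at the end of a phase $r<m-1$: only players of $D_{r+1}\cup\cdots\cup D_{m-1}$ move after $P^{r+1}$, so we must show that these moves perturb $X_i$, and the cost of each deviation of $i$, by only an $O(\gamma)$ relative amount.

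\emph{The crux.} I would establish this last point by induction over the phases, maintaining the invariant that at every state reached from $P^{r+1}$ on, the players still active in phases $\ge r+1$ contribute only a negligible amount --- much smaller than $b_r$ --- to the resource costs; by Lemma~\ref{lemma:resource_potential_sum_costs} this amounts to a bound on a restricted potential of those players, and together with monotonicity of the Shapley cost-shares (Proposition~\ref{prop:properties}(a)) and a marginal-contribution estimate it bounds the perturbation of the cost of any fixed player. The induction step uses that at $P^{r+1}$ the players with cost in $[b_{r+1},b_r)$ --- those about to perform $s$-moves in phase $r+1$ --- form a $(1+\gamma)$-approximate equilibrium of the sub-game in which everyone else is fixed, so Corollary~\ref{stretch:upperbound} (with $\rho=1+\gamma$ and this player set) controls their potential within a $\left(\frac{d}{\ln 2}\right)^d\cdot poly(d)$ factor of its minimum, and that the block separation $g=2\,n\,(d+1)\,\gamma^{-3}$ --- which collects the number of players $n$, the degree factor $d+1$ of Lemma~\ref{lemma:resource_potential_sum_costs}, and three powers of $\gamma$ --- is calibrated so that this bound shrinks by enough from one block to the next. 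Granting the $O(\gamma)$-perturbation statement, we substitute it into $X_i(P^{r+1})\le s\cdot X_i\big((P^{r+1})_{-i},P_i'\big)$: the numerator grows by at most an $O(\gamma)$ fraction of $b_r\le X_i(P^{r+1})$, the deviation cost shrinks by at most an $O(\gamma)$ fraction of $X_i\big((P^{r+1})_{-i},P_i'\big)\ge b_r/s$, and the ``$-2\gamma$'' built into $s=\left(\frac{1}{\tStretch}-2\gamma\right)^{-1}$ is exactly the slack that absorbs both, giving $X_i(P^{\mathrm{final}})\le s\cdot X_i\big((P^{\mathrm{final}})_{-i},P_i'\big)$. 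I expect this inductive control of the later phases to be the main obstacle: one has to show that the sub-game faced at the start of phase $r+1$ has small minimum potential, that a single ($r$-independent) choice of $g$ propagates the invariant through all remaining phases, and that doing so does not disturb the cost-stability of players frozen in still earlier phases; balancing the $(1+\gamma)$-stretch factor of Corollary~\ref{stretch:upperbound} against the powers of $\gamma$ and the $d+1$ factors in $g$ is the delicate part, while everything else reduces to routine uses of the structural results of Section~\ref{section:preliminary_results}.
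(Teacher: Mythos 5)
Your overall architecture matches the paper's: players satisfy the $s$-condition at the moment they are frozen, never-frozen players are $t$-stable at termination, and everything hinges on whether later phases can disturb a frozen player's cost, or the cost of any of her deviations, by more than an $O(\gamma)$ relative amount. Your running-time argument is essentially correct and even a bit simpler than the paper's (which routes it through the bound $\Phi_{D_r}\le \frac{n}{\gamma}b_r$); the only slip is the claim that at the start of phase $1$ every player's cost is below $\playerCostsMax$ --- other players' moves in the initial phase can push individual costs above their initial values --- but Corollary~\ref{cor:potential_sum_costs} still bounds the sum of costs by $(d+1)\,n\,\playerCostsMax$, so polynomiality survives.

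The genuine gap is the ``crux'' itself, which you explicitly leave as a plan; this is not a routine step but the technical heart of the theorem. One must actually prove a quantitative bound of the form $\Phi_{D_r}(\text{start of phase }r)\le \frac{n}{\gamma}\cdot b_r$ (the paper's Lemma~\ref{lemma:potential_block}) and then convert it, via Corollary~\ref{cor:potential_sum_costs}, into per-phase perturbation bounds of size $\frac{n(d+1)}{\gamma}b_{j+1}$ for both $X_i$ and every deviation cost of a frozen player $i$ (Lemmas~\ref{lemma:costs_end} and~\ref{lemma:costs_otherstrategy}), summing the geometric series over the remaining phases. Moreover, the invariant you propose is shakier than you suggest: at the start of phase $r+1$ the players with cost in $[b_{r+2},b_{r+1}]$ were not allowed to move during phase $r$, so they need not form a $(1+\gamma)$-approximate equilibrium of any sub-game at that state. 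The paper circumvents exactly this by bounding $\Phi_{D_r}(P^r)$ through the states at each mover's \emph{last} move (Lemma~\ref{lemma:potential_costs}) and by a contradiction argument at an intermediate state $\bar P$ in which the $s$-movers have finished but the $t$-movers still hold their old strategies; that is where the $D_r$-limited stretch and the $2\gamma$ slack in $s=\left(\frac{1}{\tStretch}-2\gamma\right)^{-1}$ are actually consumed. Consequently your final bookkeeping is also off: that slack is not available to absorb the end-of-algorithm perturbations, and the output is a $\frac{1+\gamma^2}{1-\gamma}\cdot s=(1+O(\gamma))\cdot\tStretch$-equilibrium rather than exactly an $s$-equilibrium --- harmless for the asymptotic statement, but a sign that the missing argument has not been carried through.
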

\begin{proof}
The main argument follows from bounding  the $D$-partial potential of the moving players in each phase (see  Lemma~\ref{lemma:potential_block}).	To that end, we first prove that  the partial potential is bounded by the sum of the  costs of players  when they did their last move (Lemma \ref{lemma:potential_costs}).

\begin{lemma}\label{lemma:potential_costs}
For every phase $r$, it holds that $\potentialDPhase{\statePhase} \leq \sum_{i \in \movingPlayersPhase} {\playerCosts{\state^{r,i}}}$.	
\end{lemma}
We now use the Lemma \ref{lemma:potential_costs} and the stretch of the previous section to bound the potential of the moving players by the according block size.
\begin{lemma}\label{lemma:potential_block}
For every phase $r$, it holds that $\potentialDPhase{\state^{r-1}} \leq \frac{n}{\constant} \cdot b_r$.
\end{lemma}
 It remains to show that  the running time is bounded and that the approximation factor holds. For the first, since the partial potential is bounded and each deviation decreases the potential, we can limit the number of possible improvement steps (see Lemma~\ref{lemma:running_time}).
\begin{lemma}\label{lemma:running_time}
The algorithm uses a polynomial number of improvement steps.
\end{lemma}
We show next that every player who has already finished his movements will not get much worst costs at the end of the algorithm (see Lemma~\ref{lemma:costs_end}) and that there is no alternative strategy which is more attractive at the end (see Lemma~\ref{lemma:costs_otherstrategy}).
\begin{lemma}\label{lemma:costs_end}
Let $i$ be a player who makes her last move in phase $r$ of the algorithm. Then,
			$\playerCosts{\state^{m-1}}\leq (1+\constant^2) \cdot \playerCosts{\statePhase}.$
\end{lemma}
\begin{lemma}\label{lemma:costs_otherstrategy}
Let $i$ be a player who makes her last move in phase $r$ and let $\state'_i$ be an arbitrary strategy of $i$. Then,
$\playerCosts{\state_{-i}^{m-1}, \state'_i} \geq (1- \constant) \cdot \playerCosts{\state_{-i}^r, \state'_i}.$
\end{lemma}
Next, we bound the approximation factor of the whole algorithm (see Lemma~\ref{lemma:approximation_factor}).
\begin{lemma}\label{lemma:approximation_factor}
After the last phase of the algorithm, every player $i$ is in an $\alpha$-approximate pure Nash equilibrium with $\alpha = (1+O(\constant)) \cdot \tStretch$.
\end{lemma}
		The polynomial running time and the approximation factor of $\alpha = (1+O(\gamma)) \cdot \tStretch$ follow directly from Lemma~\ref{lemma:running_time} and Lemma~\ref{lemma:approximation_factor}. Last, using  Corollary \ref{stretch:upperbound}, we show that  $\alpha \in \left(\frac{d}{\ln 2}\right)^d \cdot poly(d)$.
		\begin{lemma} \label{alpha_order}
 The approximation factor $\alpha$ is in the order  of $  \left(\frac{d}{\ln2}\right)^d \cdot poly(d)$. 
 \end{lemma}
This completes the proof of Theorem \ref{theorem:algorithm}.\hfill
\end{proof}
	
We note that a significant improvement below $O\left(\left(\frac{d}{\ln 2}\right)^d\right)$ of the approximation factor would require new algorithmic ideas as the lower bound of the PoA in \cite{DBLP:conf/wine/GairingS07} immediately yields a corresponding lower bound on the stretch.

This algorithm can be used to compute also approximate pure Nash equilibria in weighted congestion games (with proportional sharing). Such a game  can now be approximated  by a Shapley game losing only  a factor of $\frac{(d+3)(d+1)}{8}$ (by Lemma~\ref{lemma:approximate_equilibria}), which  is included in $poly(d)$.
\begin{corollary}
	For any weighted congestion game with proportional sharing, an $\alpha$-approximate pure Nash equilibrium with $\alpha \in \left(\frac{d}{\ln 2}\right)^d \cdot poly(d)$  can be computed with a polynomial number of improvement steps.
\end{corollary}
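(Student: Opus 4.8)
The plan is to derive the statement directly from Theorem~\ref{theorem:algorithm} and the equilibrium-transfer bound of Lemma~\ref{lemma:approximate_equilibria}, rather than trying to run the algorithm on the proportional game itself (which would be problematic, since with proportional sharing the game need not be a potential game and the whole analysis of Section~\ref{section:computation} relies on the exact potential of the Shapley game). Concretely, given a weighted congestion game $\mathcal{G} = (N, E, (w_i)_{i\in N}, (\CP_i)_{i\in N}, (c_e)_{e\in E})$ with proportional sharing and polynomial cost functions of degree at most $d$, I would first form the \emph{associated Shapley value game} $\mathcal{G}^{\text{SV}}$ on exactly the same data $(N, E, (w_i)_{i\in N}, (\CP_i)_{i\in N}, (c_e)_{e\in E})$, the only change being that player costs are the Shapley cost shares $\chi_{ie}$ instead of the proportional shares $\chi_{ie}^{\text{Prop}}$. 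This is a polynomial SV weighted congestion game of degree $d$, so Algorithm~\ref{algorithm:computation} and its analysis apply to it verbatim.

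Running the algorithm on $\mathcal{G}^{\text{SV}}$, Theorem~\ref{theorem:algorithm} produces, after a polynomial number of strategy updates, an outcome $P$ that is an $\alpha'$-approximate pure Nash equilibrium of $\mathcal{G}^{\text{SV}}$ with $\alpha' \in \left(\frac{d}{\ln 2}\right)^d \cdot poly(d)$. Since $\mathcal{G}$ and $\mathcal{G}^{\text{SV}}$ share the same players, resources, weights and strategy spaces, Lemma~\ref{lemma:approximate_equilibria} applies to $P$ and shows that $P$ is a $\frac{(d+3)\cdot(d+1)}{8}\cdot\alpha'$-approximate pure Nash equilibrium of the proportional-sharing game $\mathcal{G}$. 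Setting $\alpha = \frac{(d+3)\cdot(d+1)}{8}\cdot\alpha'$ and using that $\frac{(d+3)\cdot(d+1)}{8}$ is a polynomial in $d$, the leading factor $\left(\frac{d}{\ln 2}\right)^d$ is unchanged and only the $poly(d)$ part is inflated, so $\alpha \in \left(\frac{d}{\ln 2}\right)^d \cdot poly(d)$ as claimed. The number of strategy updates is the number of improvement steps taken by the algorithm on $\mathcal{G}^{\text{SV}}$, which is polynomial by Lemma~\ref{lemma:running_time}; hence the whole procedure uses only polynomially many improvement steps.

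There is no substantial obstacle here: the result is a composition of an already-established algorithmic guarantee with an already-established equilibrium-transfer bound, and the only things needing care are the observations that (i) $\mathcal{G}$ and $\mathcal{G}^{\text{SV}}$ have identical underlying structure so that Lemma~\ref{lemma:approximate_equilibria} applies to the algorithm's output, and (ii) the routine factor bookkeeping does not disturb the $\left(\frac{d}{\ln 2}\right)^d$ term. I would also remark for clarity that the improvement steps executed by the algorithm are improvements with respect to the Shapley costs, not the proportional costs, but this is immaterial for the statement, which only counts the number of strategy updates performed.
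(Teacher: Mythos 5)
Your proposal is correct and matches the paper's own argument: the paper likewise runs Algorithm~\ref{algorithm:computation} on the Shapley value game built from the same data and then invokes Lemma~\ref{lemma:approximate_equilibria} to transfer the guarantee to proportional sharing, noting that the extra factor $\frac{(d+3)(d+1)}{8}$ is absorbed into $poly(d)$. Your additional remarks (that the improvement steps are with respect to Shapley costs, and that the underlying structure of the two games coincides) are sound clarifications of the same reasoning.
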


\subsection{Sampling Shapley Values}
\label{section:sampling}
	
The previous section gives an algorithm with polynomial running time with respect to the number of improvement steps. However, each improvement step requires the multiple computations of Shapley values, which are hard to compute. For this reason,  one can instead compute an approximated  Shapley value with sampling methods.
Since we are only interested in approximate equilibria, an execution of the algorithm with approximate steps has a negligible impact on the final result.
The technical properties of Shapley values stated in Section~\ref{section:preliminary_results} also hold for sampled instead of exact Shapley values with high probability.
\begin{theorem}\label{sampling}
For any constant $\gamma$, an $\alpha$-approximate pure Nash equilibrium with $\alpha \in \left(\frac{d}{\ln 2}\right)^d \cdot poly(d)$ can be computed in polynomial time with high probability.
	\end{theorem}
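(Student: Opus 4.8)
The plan is to run the algorithm of Theorem~\ref{theorem:algorithm} essentially unchanged, but with every exact evaluation of a Shapley cost share $\chi_e(i,A)$ replaced by a permutation-sampling estimate $\tilde\chi_e(i,A)$, and then to show that (i) polynomially many samples per estimate suffice for a $(1\pm\varepsilon)$ multiplicative guarantee, (ii) only polynomially many estimates are needed over the whole run, and (iii) the errors propagate harmlessly into the analysis.

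For step (i), fix $e$, a player set $A$ with $i\in A$, and let $Y = C_e(W_A^{<i,\pi}+w_i) - C_e(W_A^{<i,\pi})$ for a uniform $\pi\sim\Pi(A)$, so that $E[Y]=\chi_e(i,A)$. Since $c_e$ has nonnegative coefficients and degree at most $d$, the function $C_e(x)=x\,c_e(x)$ is convex and nondecreasing with $C_e'(x)=c_e(x)+x\,c_e'(x)\le(d+1)\,c_e(x)$ (using $x\,c_e'(x)\le d\,c_e(x)$ for $x\ge0$). Hence $0\le Y\le C_e(f^A_e)-C_e(f^A_e-w_i)\le(d+1)\,w_i\,c_e(f^A_e)=(d+1)\,\chi_{ie}^{\text{Prop}}$ for the relevant state, and by Lemma~\ref{lemma:shatoprop} this is at most $\tfrac{(d+1)(d+3)}{4}\,\chi_e(i,A)$. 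Thus $Y$ lies in an interval of length $O(\mathrm{poly}(d))\cdot E[Y]$, so averaging $K=\Theta\bigl(\mathrm{poly}(d)\cdot\varepsilon^{-2}\cdot\ln(1/\delta)\bigr)$ independent permutation samples and applying Hoeffding's inequality yields $(1-\varepsilon)\chi_e(i,A)\le\tilde\chi_e(i,A)\le(1+\varepsilon)\chi_e(i,A)$ with probability at least $1-\delta$; summing over the resources of a strategy gives a $(1\pm\varepsilon)$-estimate $\tilde X_i$ of any player cost $X_i$.

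For step (ii), note that a best response $\bestresponse{P}$ of player $i$ can be computed by assigning to each resource $e$ the length $\chi_e(i,S_e(P_{-i})\cup\{i\})$ — which depends only on the fixed strategies of the other players — and solving the induced combinatorial minimization over $\CP_i$, so a single improvement step requires only $O(|E|)$ Shapley estimations. Multiplying by the polynomial number of improvement steps from Lemma~\ref{lemma:running_time}, the execution performs only polynomially many estimates in total; choosing $\delta$ to be a small enough inverse polynomial so that a union bound over all of them leaves failure probability $o(1)$ keeps $\ln(1/\delta)$, hence $K$ and the overall running time $O(\#\text{steps}\cdot|E|\cdot K\cdot(\text{evaluation cost of }C_e))$, polynomial. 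Condition on the high-probability event that every estimate is within its $(1\pm\varepsilon)$ window.

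For step (iii), observe that the produced sequence of states is a genuine play in the true game: a move executed because it looks $s$-improving under the estimates is truly improving by a factor at least $s\cdot\tfrac{1-\varepsilon}{1+\varepsilon}=s\,(1-O(\varepsilon))$, hence by Proposition~\ref{prop:diffpotentialcosts} it strictly decreases the exact potential by $X_i(P)-X_i(P_{-i},\bestresponse{P})>0$, so the counting in Lemma~\ref{lemma:running_time} still bounds the number of steps; conversely every move that is $s\,(1+O(\varepsilon))$-improving in truth is detected, and analogously for the $t$-moves. Taking $\varepsilon=\Theta(\gamma)$ (equivalently, introducing one further small constant and shrinking $\gamma$), all inequalities in Lemmas~\ref{lemma:potential_costs}--\ref{lemma:approximation_factor} go through with the estimated quantities, the final approximation factor is still $(1+O(\gamma))\cdot\tStretch$, and Lemma~\ref{alpha_order} still places it in $\left(\tfrac{d}{\ln 2}\right)^d\cdot\mathrm{poly}(d)$. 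I expect the main obstacle to be precisely the first step: Hoeffding is only applicable because the per-permutation marginal contribution is bounded by a $\mathrm{poly}(d)$ multiple of its own mean, which hinges on the Shapley-versus-proportional comparison of Lemma~\ref{lemma:shatoprop}; a secondary point needing care is checking that each structural property invoked in Section~\ref{section:preliminary_results} (monotonicity and convexity-type inequalities of cost shares, and the potential relations) degrades only by a $(1\pm O(\varepsilon))$ factor under sampling, so that the stretch-based arguments survive.
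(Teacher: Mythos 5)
Your proposal is correct and has the same overall architecture as the paper's proof (replace each exact Shapley share by a permutation-sampling estimate, bound the total number of estimates via the polynomial bound on improvement steps from Lemma~\ref{lemma:running_time}, union-bound the failure events, and absorb the $(1\pm\varepsilon)$ estimation error into the approximation factor by tying $\varepsilon$ to $\gamma$), but your concentration argument is genuinely different. The paper bounds the per-permutation marginal contribution by $|S_e(P)|\cdot\chi_{ie}(P)$ — the maximum is attained when $i$ is last, an event of probability $1/|S_e(P)|$, so the mean is at least a $1/|S_e(P)|$ fraction of the maximum — then applies Chebyshev with $k=\frac{4(|S_e(P)|-1)}{\mu^2}$ samples to get a constant-probability $\mu$-approximation and amplifies by a median-of-means repetition with Chernoff bounds (Lemma~\ref{lemma:sampling_shapley}). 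You instead bound the range by $\frac{(d+1)(d+3)}{4}\cdot\chi_e(i,A)$ via $C_e'(x)\le(d+1)c_e(x)$ and the proportional-share comparison of Lemma~\ref{lemma:shatoprop}, and then apply Hoeffding directly; this is valid, exploits the polynomial structure more aggressively, and yields a per-estimate sample count of $O\!\left(d^4\varepsilon^{-2}\log(1/\delta)\right)$, independent of the number of users of the resource, whereas the paper's range-to-mean bound needs only monotonicity and convexity of $C_e$ but costs a factor linear in $|S_e(P)|$. Your observation that one improvement step needs only $O(|E|)$ estimates (since $\chi_e(i,S_e(P_{-i})\cup\{i\})$ does not depend on $i$'s other resources) refines the paper's cruder count of $n\cdot\max_{i}\CP_i\cdot|E|$ estimations per step, though it implicitly assumes the best-response minimization over $\CP_i$ is tractable given per-resource shares — harmless, since the paper anyway enumerates $\CP_i$. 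On error propagation both treatments are at the same level of detail: the paper simply runs the sampler with $\mu=1+\gamma$ and asserts the structural lemmas survive with high probability, while you spell out the direction of degradation of the $s$- and $t$-move tests; in either case one must take the estimation error a sufficiently small constant multiple of $\gamma$ so that the quantitative potential-decrease counting in Lemma~\ref{lemma:running_time} and the threshold tests against $b_r$ still go through, which your choice $\varepsilon=\Theta(\gamma)$ accommodates.
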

\begin{proof} We use sampling techniques that follow~\cite{DBLP:conf/cocoon/Liben-NowellSWW12,mann1962values} and adjust them to our setting.

\begin{algorithm}
\caption{Approximation of the Shapley value by sampling}
		\label{algorithm:sampling}
		\begin{algorithmic}
			\FORALL{$r$ from $1$ to $\log\left(2 n^{c+3} \cdot \max_{i \in N}\CP_i \cdot |E| \cdot \left(1+\log\left(\frac{\playerCostsMax}{\playerCostsMin}\right)\right) \cdot (d+1) \cdot \gamma^{-9}\right)$}
				\FORALL{$j$ from $1$ to $k=\frac{4(|S_e(P)|-1)}{\mu^2}$}
				\STATE Pick uniformly at random permutation $\pi$ of the players $S_e(P)$ using resource $e$
				\STATE Compute marginal contribution $MC_{ie}^j(P) = C_e \left(W_{S_e(P)}^{<i,\pi} +w_i\right) -  C_e\left(W_{S_e(P)}^{<i,\pi}\right)$
				\ENDFOR
				\STATE Let $\overline{MC}_{ie}(P) = \frac{1}{k} \sum_{j=1}^k MC_{ie}^j(P)$
			\ENDFOR
			\STATE Return the median of all $\overline{MC}_{ie}(P)$
		\end{algorithmic}
		\end{algorithm}

\begin{lemma}\label{lemma:sampling_shapley}
	Given an arbitrary state $P$ and an arbitrary but fixed constant $c$, Algorithm~\ref{algorithm:sampling}  computes a $\mu$-approximation of $\chi_{ie}(P)$ for any player $i$ in polynomial running time with probability at least $$1-\left(n^c \cdot n \cdot \max_{i \in N}\CP_i \cdot |E| \cdot \left(1+\log\left(\frac{\playerCostsMax}{\playerCostsMin}\right)\right) \cdot 2 \cdot n^2 \cdot (d+1) \cdot \gamma^{-9}\right)^{-1}.$$
\end{lemma}
For using the sampling in the computation of an improvement step, a Shapley value has to be approximated for each alternative strategy of a player and for each resource in the strategy. In the worst case, each player has to be checked for an available improvement step.

\begin{lemma}\label{lemma:sampling_improvement_step}
	Given an arbitrary state $P$  and running the sampling algorithm at most $n \cdot \max_{i \in N}\CP_i \cdot |E|$ times computes an improvement step for an arbitrary player with probability at least $1-\left(n^c \cdot \left(1+\log\left(\frac{\playerCostsMax}{\playerCostsMin}\right)\right) \cdot 2 n^2 \cdot (d+1) \cdot \gamma^{-9}\right)^{-1}$.
\end{lemma}
Lemma~\ref{lemma:running_time} gives a bound on the number of improvement steps. Using the sampling algorithm for $\mu=1+\gamma$, we can bound the total number of samplings:

\begin{lemma}\label{lemma:sampling_total}
	During the whole execution of Algorithm~\ref{algorithm:computation} the sampling algorithm for $\mu=1+\gamma$ is applied at most $n \cdot \max_{i \in N}\CP_i \cdot |E| \cdot \left(1+\log\left(\frac{\playerCostsMax}{\playerCostsMin}\right)\right) \cdot 2 \cdot n^2 \cdot (d+1) \cdot \gamma^{-9} $ times and the computation of the approximate pure Nash equilibrium is correct with probability at least $1-n^{-c}$ for an arbitrary constant c.
\end{lemma}
Summing up, we show that a $\mu$-approximation of one Shapley value can be computed in polynomial running time with high probability (Lemma \ref{lemma:sampling_shapley}) and the sampling algorithm is running at most a polynomial number of times (Lemma \ref{lemma:sampling_total}). Then Theorem \ref{sampling} follows.
\hfill
\end{proof}



\bibliographystyle{splncs03}
\bibliography{ApproximateShapley}


\newpage

\appendix

\section*{Appendix}

\section{Proofs for the Properties in Section~\ref{section:preliminary_results}}

\begin{proof}[Proof of Proposition~\ref{prop:properties}]

 Let $k:=|S|$. By the definition of Shapley values 
\begin{align*}
	\chi_e(i,S\cup\{j\})	
	&= \frac{1}{(k+1)!}\sum_{\pi\in\Pi(S\cup\{j\})} 
	\left(
	C_e\left( W_{S \cup \{j\}}^{<i, \pi}
	+w_i\right)
	- C_e\left( W_{S \cup \{j\}}^{<i, \pi}\right)  \right)\\
	&\ge \frac{1}{(k+1)!}\sum_{\pi\in\Pi(S\cup\{j\})} 
	\left(
	C_e\left( W_{S }^{<i, \pi} + w_i \right)
	- C_e\left( W_{S }^{<i, \pi}\right)
	\right)\\
	&= \frac{1}{k!}\sum_{\pi\in\Pi(S)} 
	\left(
	C_e\left( W_{S }^{<i, \pi} + w_i \right)
	- C_e\left( W_{S }^{<i, \pi} \right)
	\right)\\
	&= \chi_e(i,S),
\end{align*}
proving (a).
    
 For (b) and (c), consider $\chi_{e}\left(i, S \cup \{j_1, j_2\}\right)$. 
 Observe, that only for permutations $\pi\in\Pi(S\cup\{j_1,j_2\})$ where either $j_1< i < j_2$ or $j_2< i < j_1$ the corresponding contribution to $\chi_e\left(i, S \cup \{j_1, j_2\}\right)$ 
 changes if we change the weight of $j_1,j_2$ but keep their sum the same.    
 Fix a permutation $\pi\in\Pi(S\cup\{j_1,j_2\})$ with $j_1< i < j_2$ and pair it with the corresponding permutation $\hat{\pi}$ where only $j_1$ and $j_2$ are swapped.
 Then the contribution of $\pi$ and $\hat{\pi}$ to $\chi_{e}\left(i, S \cup \{j_1, j_2\}\right)$ is 
 \begin{align}
    \frac{1}{(k+2)!} \cdot
         &\left( 
               C_e\left(  W^{<i,\pi}_S + w_{j_1} +w_i \right)
    -C_e\left( W^{<i,\pi}_S + w_{j_1} \right)\right) \notag \\
             &\left.\quad +  C_e\left(  W^{<i,\pi}_S + w_{j_2} +w_i \right)
    - C_e\left( W^{<i,\pi}_S + w_{j_2} \right)
  \right).\label{eq:cscontribution}
 \end{align}
Since $C_e(x+w_i)-C_e(x)$ is convex in $x$, we get that 
\begin{align*}
\eqref{eq:cscontribution} \geq
\frac{1}{(k+2)!} \cdot
         &\left( 
               C_e\left( W^{<i,\pi}_S + w_{j'_1} +w_i\right)
    -C_e\left(W^{<i,\pi}_S + w_{j'_1} \right)\right. \notag \\
             &\left.\quad +  C_e\left(  W^{<i,\pi}_S + w_{j'_2} +w_i\right)
    -C_e\left( W^{<i,\pi}_S + w_{j'_2} \right)
  \right),
\end{align*}
and 
   \begin{align*}
\eqref{eq:cscontribution} \leq
\frac{1}{(k+2)!} \cdot
         &\left( 
               C_e\left(  W^{<i,\pi}_S + w_{j_1}+w_{j_2} +w_i\right)
    -C_e\left( W^{<i,\pi}_S +  w_{j_1}+w_{j_2}\right)\right. \notag \\
             &\left.\quad +  C_e\left( W^{<i,\pi}_S + 0  +w_i\right)
    -C_e\left( W^{<i,\pi}_S + 0\right)
  \right).
\end{align*}
Part (c) and (b) follow, respectively.
Part (d) of the proposition is shown in \cite{DBLP:conf/icalp/GairingKK15}.
\hfill 
\end{proof}
\vspace{3mm}

\begin{proof}[Proof of Proposition~\ref{prop:potentials}]
	We prove the different parts separately:
	\begin{enumerate}\renewcommand{\theenumi}{\alph{enumi}}
		\item 
For each $e\in E$, let  $I_e(P) = \Phi^A_e(P) - \Phi^{A\backslash B}_e(P)$. By definition of the $B$-partial potential \eqref{ppot}, we have 
\begin{align} \label{cl}
	\Phi^A_B(P) = \Phi^A(P) - \Phi^{A\backslash B}(P) = \underset{e\in E}{\sum} I_e(P).
\end{align}
By the definition of  limited potential \eqref{lpot}, for an arbitrary $\ord$, define $I_e(P)$, $\forall e \in E$, as
\begin{align} \label{Ie}
	& \underset{i \in S_e^A(P)}{\sum} \chi_{e}(i, \{j:\ord(j)\leq\ord(i), j\in S_e^A(P)\}) -   \nonumber \\
& \hspace{3cm} \underset{i \in S_e^{A\backslash B}(P)}{\sum} \chi_{e}(i, \{j:\ord(j)\leq\ord(i), j\in S_e^{A\backslash B}(P)\}).
\end{align}
Hart and Mas-Collel~\cite{10.2307/1911054} proved that the potential is independent of the ordering $\ord$ that players are considered. As mentioned before,  $\Phi^A(P)$ is a restriction of $\Phi(P)$ where only players in $A$ participate. Thus,  independence from $\ord$ also applies to the limited potential.

Firstly, we focus on the first term of \eqref{Ie} and choose an ordering where the players in set $A$ are first. Then we observe that  by substituting $S^A_e(P)$ with  $S_e(P)$, the cost share  remains the same. This is due to the fact that any player coming after the players in set $A$ in the ordering has no impact in the cost  computation. These are the players who belong in set $ N \setminus A$ (since we assume players in $A$ are first). Therefore, the first term of \eqref{Ie} equals to
\begin{align*}
	\underset{i \in S_e^A(P)}{\sum} \chi_{e}(i, \{j:\ord(j)\leq\ord(i), j\in S_e(P)\}) .
\end{align*}
Following the same technique for the second term of \eqref{Ie}, we choose an ordering in which the  players in $A \setminus B$  are first. Then we can substitute $S_e^{A \setminus B}(P)$ with $S_e^{N \setminus B}(P)$ without affecting the term's value. Therefore,  \eqref{Ie} is equivalent to
\begin{align} \label{IIe}
	&\underset{i \in S_e^A(P)}{\sum} \chi_{e}(i, \{j:\ord(j)\leq\ord(i), j\in S_e(P)\}) -   \nonumber \\
	& \hspace{3cm} \underset{i \in S_e^{A\backslash B}(P)}{\sum} \chi_{e}(i, \{j:\ord(j)\leq\ord(i), j\in S_e^{N\backslash B}(P)\}).
\end{align}
For each $e\in E$, define $I_e'(P)$ to be equal to 
\begin{align} \label{Ie'}
	&\underset{i \in S_e^{N\backslash A}(P)}{\sum} \Big( \chi_{e}(i, \{j:\ord(j)\leq\ord(i), j\in S_e(P)\}) - \nonumber \\
& \hspace{4cm}\chi_{e}(i, \{j:\ord(j)\leq\ord(i), j\in S_e^{N\backslash B}(P)\}) \Big).
\end{align}
Note that $I'_e(P) \geq 0$, $\forall e\in E$. Intuitively, the first term computes the cost with respect to all players using resource $e$, $S_e(P)$. Regarding the second term, if we take away some of these players, i.e., players in $B$, then due to convexity the costs of the remaining players either  remain the same or are reduced. This depends on the position  players in $B$ had in the ordering. To simplify, for the rest of this proof, let   
\begin{align}
	&\chi^{N}_i(P) = \chi_{e}(i, \{j:\ord(j)\leq\ord(i), j\in S_e(P)\}) \label{subn},\\
	&\chi^{N \setminus B}_i(P) = \chi_{e}(i, \{j:\ord(j)\leq\ord(i), j\in S_e^{N\backslash B}(P)\}).\label{subnb}
\end{align}
Since $I_e'(P) \geq 0$, we get that for each $ e\in E$, 
\begin{align*}
I_e(P) \leq I_e(P) + I_e'(P)
\end{align*}
 which, by \eqref{IIe}, \eqref{Ie'}, \eqref{subn} and \eqref{subnb}, is equivalent to 
\begin{align}
	&\underset{i \in S_e^A(P)}{\sum}  \chi^{N}_i(P) -     \underset{i \in S_e^{A\backslash B}(P)}{\sum}\chi^{N \setminus B}_i(P) \leq    \nonumber    \\
	&  \quad  \leq \underset{i \in S_e^A(P)}{\sum} \chi^{N}_i(P) -     \underset{i \in S_e^{A\backslash B}(P)}{\sum}\chi^{N \setminus B}_i(P) +   \underset{i \in S_e^{N\backslash A}(P)}{\sum} \left( \chi^{N}_i(P) - \chi^{N \setminus B}_i(P) \right).\label{p1}
\end{align}
By the assumption $B \subseteq A \subseteq N$, we get that $(N \setminus A) \cup (A \setminus B) = N \setminus B$. Thus inequality \eqref{p1} becomes
\begin{align*} 
	& \underset{i \in S_e^A(P)}{\sum} \chi^{N}_i(P) -     \underset{i \in S_e^{A\backslash B}(P)}{\sum}\chi^{N \setminus B}_i(P) 
	\leq  \underset{i \in S_e(P)}{\sum} \chi^{N}_i(P)-  \underset{i \in S_e^{N\backslash B}(P)}{\sum} \chi^{N \setminus B}_i(P).
\end{align*}
Substituting $\chi^{N}_i(P)$ and  $\chi^{N \setminus B}_i(P)$ from \eqref{subn} and \eqref{subnb}, we get by \eqref{IIe} that the previous is equivalent to
\begin{align*}
	I_e(P) \leq   \Phi_e (P) - \Phi_e^{N \setminus B}(P)  \quad 
	\Leftrightarrow \quad  \sum_{e\in E} I_e(P) \leq   \sum_{e\in E} \Phi_e (P) - \Phi_e^{N \setminus B}(P).
\end{align*}
By \eqref{cl}, we conclude to the desirable  ~$\Phi_B^A(P)\leq \Phi_B(P)$.

\item
By definition \eqref{ppot} of partial potential, we have
\begin{align} \label{pp}
	& \Phi_B^A(P) =  \Phi^A(P) - \Phi^{A\backslash B}(P) =  \underset{e\in E}{\sum} \left(\Phi^A_e(P) - \Phi^{A\backslash B}_e(P) \right).
\end{align}
For each $e \in E$ and any $A' \subseteq A$, observe that $S^{A'}_e(P)= S^{A'}_e(P')$. Thus
\begin{align*}
	&\underset{i \in S_e^A(P)}{\sum} \chi_{e}(i, \{j:\ord(j)\leq\ord(i), j\in S_e^A(P)\}) \\
&=\underset{i \in S_e^A(P')}{\sum} \chi_{e}(i, \{j:\ord(j)\leq\ord(i), j\in S_e^A(P')\}).
	\end{align*}
Similarly,  we prove that $\Phi^{A\backslash B}_e(P) =  \Phi^{A\backslash B}_e(P')$. Therefore, using \eqref{pp}, we have
\begin{align*} 
	\Phi_B^A(P) =   \underset{e\in E}{\sum} \left(\Phi^A_e(P') - \Phi^{A\backslash B}_e(P') \right)=\Phi_B^A(P'). 
\end{align*}

\item 
Let $P$ be an outcome of the game. Her contribution in the potential value is given by
\begin{align} \label{last}
	\Phi_z(P) = \Phi(P) - \Phi^{N\setminus \{z\}}(P) =  \underset{e\in E}{\sum} \left(\Phi_e(P) - \Phi^{N\backslash \{z\}}_e(P) \right) = \underset{e\in E}{\sum} I_e(P),
\end{align}
where $I_e(P)$ equals 
\begin{align*}
	&\underset{i \in S_e(P)}{\sum} \chi_{e}(i, \{j:\ord(j)\leq\ord(i), j\in S_e(P)\}) \\
	& \hspace{4cm} - \underset{i \in S_e^{N\setminus \{z\}}(P)}{\sum} \chi_{e}( i, \{j:\ord(j)\leq\ord(i), j\in S_e^{N\setminus \{z\}} \}). 
\end{align*}
Since the potential is independent of the players ordering, we choose the $\ord$ such that player $z$ is last. Then \eqref{last} equals to
\begin{align*}
	\underset{e\in E}{\sum} \chi_{e}(z, \{j:\ord(j)\leq\ord(z), j\in S_e(P)\}) 
	&= \underset{e\in E}{\sum} \chi_{e}(z,j:j\in S_e(P))  \\
&= \underset{e\in E}{\sum} \chi_{ze}(P)= X_{z}(P). 		\end{align*}
which completes the proof.\hfill 

\end{enumerate}
\hfill 
\end{proof}
\vspace{3mm}

\begin{proof}[Proof of Proposition~\ref{prop:diffpotentialcosts}]
	By definition of the partial potential \eqref{ppot},
	\begin{align*}
		\Phi_B(P) - \Phi_B(P') = \Phi(P) - \Phi^{N\setminus B}(P) - \left( \Phi(P') - \Phi^{N\setminus B}(P') \right)= \Phi(P) - \Phi(P').
	\end{align*}
	Since the underlying game (considering all players in $N$) is a potential game \cite{DBLP:journals/teco/KolliasR15},  ~$\Phi(P) - \Phi(P')= X_i(P) - X_i(P')$.
	\hfill 
\end{proof}
\vspace{3mm}

\begin{proof}[Proof of Lemma~\ref{lemma:resource_potential_sum_costs}]
	By definition  \eqref{ppot}, we have 
	\begin{align}
\Phi_{e,B}(P)  = \Phi_e(P) - \Phi_e^{N\setminus B}(P)
 = \underset{e\in E}{\sum} \left(\Phi_e(P) - \Phi^{N\backslash B}_e(P) \right) = I_e(P). \label{d}
\end{align}
where $I_e(P)$ equals to
\begin{align} \label{Ie2}
&\underset{i \in S_e(P)}{\sum} \chi_e(i, \{j:\ord(j)\leq\ord(i), j\in S_e(P)\}) \nonumber\\
 &\hspace{3cm} - \underset{i \in S_e^{N\setminus B}(P)}{\sum} \chi_{e}( i, \{j:\ord(j)\leq\ord(i), j\in S_e^{N\setminus B} \}). 
\end{align}
	Then we  break the first term of \eqref{Ie2} to the  sum of 
	\begin{align*}
		&\underset{i \in S_e^{N\backslash B}(P)}{\sum} \chi_{e}(i, \{j:\ord(j)\leq\ord(i), j\in S_e(P)\}) \\
&\hspace{5cm}+   \underset{i \in S_e^{B}(P)}{\sum} \chi_{e}(i, \{j:\ord(j)\leq\ord(i), j\in S_e(P)\}).
	\end{align*}
	We choose an ordering $\ord$ in which all players in $N \setminus B$ come first. Then the previous sum is equivalent to
	\begin{align*}
		&\underset{i \in S_e^{N\backslash B}(P)}{\sum} \chi_{e}(i, \{j:\ord(j)\leq\ord(i), j\in S^{N \setminus B}_e(P)\}) \\
& \hspace{5cm}+   \underset{i \in S_e^{B}(P)}{\sum} \chi_{e}(i, \{j:\ord(j)\leq\ord(i), j\in S_e(P)\}). 
	\end{align*}
	
	Substituting the previous to the first term of \eqref{Ie2}  gives
	\begin{align*}
		\underset{i \in S_e^{B}(P)}{\sum} \chi_{e}(i, \{j:\ord(j)\leq\ord(i), j\in S_e(P)\}).
	\end{align*}
	Combining it with the definition of $I_e(P)$ yields to
	\begin{align*}
		I_e (P)&=  \underset{i \in S_e^{A}(P)}{\sum} \chi_{e}(i, \{j:\ord(j)\leq\ord(i), j\in S_e(P)\}) \\
		&\leq \underset{i \in S_e^{A}(P)}{\sum} \chi_{e}(i, j: j \in S_e(P)) = \underset{i \in S_e^{A}(P)}{\sum}  \chi_{ie}(P) =  \underset{i\in A}{\sum} \chi_{ie}(P).
	\end{align*}
	Equation \eqref{d} completes the proof of the lower bound.
	
	For the upper bound consider a fixed ordering of the players in $B$. The partial potential can be written as
\begin{align}
	\Phi_{e,B}(P)
	&= \left( \Phi_e(P)-\Phi_e^{N\setminus B}(P)\right) \notag \\
	&= \sum_{i \in S_e^B(P)} 
	\chi_{e}\left(i, \left\{j:\ord(j)\leq\ord(i); j\in S_e^B(P)\right\}\cup S_e^{N\setminus B}(P) \right) \notag\\
	&\geq \int_{\weightsSumSet{\fixedBPlayers}{\state}}^{\weightsSumSet{\allPlayers}{\state}} c_e(x) dx \notag\\
	&\geq \left[\frac{x\cdot c_e(x)}{d+1}\right]_{\weightsSumSet{\fixedBPlayers}{\state}}^{\weightsSumSet{\allPlayers}{\state}}\notag\\
	&= \frac{\weightsSumSet{\allPlayers}{\state}\cdot c_e(\weightsSumSet{\allPlayers}{\state}) 
		- \weightsSumSet{\fixedBPlayers}{\state}\cdot c_e(\weightsSumSet{\fixedBPlayers}{\state}) }
	{d+1} \notag\\
	&= \frac{\weightsSum{\state}\cdot c_e(\weightsSum{\state}) }{d+1} - 		
	\frac{ \weightsSumSet{\fixedBPlayers}{\state}\cdot c_e(\weightsSumSet{\fixedBPlayers}{\state}) }
	{d+1} \notag\\
	&= \frac{\sum_{i \in N} \chi_{ie}(P)}{d+1} - 		
	\frac{ \weightsSumSet{\fixedBPlayers}{\state}\cdot c_e(\weightsSumSet{\fixedBPlayers}{\state}) }
	{d+1}, \label{eq_lowerpot}
\end{align}
where the first inequality follows by repeatedly applying Proposition \ref{prop:properties}(c) and \ref{prop:properties}(d) and adding additional players of weight 0 (which do not change the cost shares). The second inequality holds, 
since $c_e$ is a polynomial of maximum 
degree $d$ with non-negative coefficients.
	
Observe, that $\weightsSumSet{\fixedBPlayers}{\state}\cdot c_e(\weightsSumSet{\fixedBPlayers}{\state})$ is the social cost of $P$ on resource $e$ if only the players 
in $N\setminus B$ are in the game. By Proposition \ref{prop:properties}(a), 
the cost shares of those players can only increase if the players in $B$ are 
joining the game, i.e.:
\begin{align*}
	\weightsSumSet{\fixedBPlayers}{\state}\cdot c_e(\weightsSumSet{\fixedBPlayers}{\state})
	\leq \sum_{i \in N\setminus A} \chi_{ie}(P). 
\end{align*}
Combining this with \eqref{eq_lowerpot} completes the proof of the claim:
\begin{align*}
	\Phi_{e,B}(P) \geq \frac{\sum_{i \in N} \chi_{ie}(P)}{d+1} - \frac{ \sum_{i \in N\setminus B} \chi_{ie}(P)}{d+1} = \frac{\sum_{i \in B} \chi_{ie}(P)}{d+1}
\end{align*}
\hfill 
\end{proof}
\vspace{3mm}

\begin{proof}[Proof of Corollary~\ref{cor:potential_sum_costs}]
By the definition of the partial potential \eqref{ppot} and by applying Lemma~\ref{lemma:resource_potential_sum_costs}, we directly have
\begin{align*}
	\Phi_B(P) =  \sum_{e \in E} \Phi_{e,B}(P) \leq  \sum_{e \in E} \sum_{i \in B} \chi_{ie}(P) = \sum_{i \in B} X_i(P)
\end{align*}

and 
\begin{align*}
	\sum_{i\in B} X_i(P) =\sum_{i \in B} \sum_{e \in E}  \chi_{ie}(P) =\sum_{e \in E} \sum_{i \in B}   \chi_{ie}(P) & \leq \sum_{e \in E} \Phi_{e,B}(P) \cdot (d+1) \\
	&= \Phi_B(P) \cdot (d+1).
\end{align*}
\hfill 
\end{proof}
\vspace{3mm}

\section{Proofs for the Approximation, PoA and Stretch in Section~\ref{section:Approximation}~and~\ref{section:PoA_Stretch}}

\begin{proof}[Proof of Lemma~\ref{lemma:shatoprop}]

Since $c_e$ is a polynomial of maximum degree $d$ with non-negative coefficients, 
it suffices  to show the inequalities for all monomial cost functions $c_e(x)=x^r$, with $r=\{0, \ldots, d\}$. 
Fix some resource $e$ with monomial cost function and a player $i$ assigned to $e$, i.e., $e \in P_i$. 
Denote $Y=\{j \neq i: e \in P_j\}$ and $w=w_i$. Define $y=\sum_{j\in Y}w_j$ and $z=\frac{w}{y}$. 
By Proposition \ref{prop:properties} $(b)$, we can upper bound $\chi_{ie}(P)$ by replacing $Y$ 
with a single player of weight $y$, i.e.,
\begin{align}
\chi_{ie}(P) &\leq \frac{1}{2} \left( (y+w)^{r+1} - y^{r+1} \right) + \frac{1}{2}\cdot w^{r+1}
 = y^{r+1} \cdot \frac{1}{2}\cdot \left( (z+1)^{r+1} - 1 + z^{r+1} \right)  \nonumber\\
&= y^{r+1} \cdot   \left(z^{r+1} +  \frac{1}{2}\cdot \sum_{j=1}^r\binom{r+1}{j}\cdot  z^j \right) \nonumber  =:A.
\end{align}
Similarly, by repeatedly using Proposition \ref{prop:properties} $(c)$ and by adding additional players of weight 0, we can lower bound $\chi_{ie}(P)$ by 
\begin{align*} 
&\frac{1}{y}\cdot \int_{0}^{y}\left((x+w)^{r+1} - x^{r+1}\right)dx 
= \frac{1}{y} \cdot \frac{1}{r+2} \cdot \left((y+w)^{r+2} - y^{r+2} - w^{r+2}\right)  \nonumber\\
&= y^{r+1} \cdot \frac{1}{r+2}  \cdot \left(\left(z+1\right)^{r+2} - 1 - z^{r+2}\right) \nonumber 
=  y^{r+1} \cdot \frac{1}{r+2}  \cdot \sum_{j=1}^{r+1} \binom{r+2}{j} \cdot z^j =:B. 
\end{align*}
The proportional cost of player $i$,  $\chi_{ie}^{\text{Prop}}(P)$, equals to
\begin{align*} 
  w\cdot c_e(y+w) = w \cdot(y+w)^r = y^{r+1} \cdot z \cdot ( z+1)^r = y^{r+1}  \cdot \sum_{j=1}^{r+1} \binom{r}{j-1} \cdot z^{j}.
\end{align*}
To complete the proof we give an upper bound on $\frac{A}{\chi_{ie}^{\text{Prop}}(P)}$ and a lower bound on $\frac{B}{\chi_{ie}^{\text{Prop}}(P)}$. We have,
\begin{align*}
\frac{A}{ \chi_{ie}^{\text{Prop}}(P)} 
& =\frac{z^{r+1} +  \frac{1}{2} \sum_{j=1}^{r}\binom{r+1}{j}\cdot  z^j}
             { \sum_{j=1}^{r+1} \binom{r}{j-1} \cdot z^{j}}
 =\frac{z^{r+1} +  \frac{1}{2} \sum_{j=1}^{r}\binom{r+1}{j}\cdot  z^j}
             { z^{r+1} + \sum_{j=1}^{r} \binom{r}{j-1} \cdot z^{j}},
\end{align*}
which is upper bounded by 
\begin{align}
\frac{A}{ \chi_{ie}^{\text{Prop}}(P)} \leq \max\left(1, \max_{1\leq j\leq r} \ \frac{ \binom{r+1}{j}}{2 \cdot \binom{r}{j-1}}\right) = \max\left(1, \max_{1\leq j\leq r} \ \frac{r+1}{2 \cdot j}\right)  \leq \frac{d+1}{2}.\label{ubSoverP}
\end{align}
This implies the lower bound on $\chi_{ie}^{\text{Prop}}(P)$ in the statement of the lemma.
On the other hand, by first order conditions, 
\begin{align*}
\frac{B}{ \chi_{ie}^{\text{Prop}}(P)} 
& =\frac{\frac{1}{r+2}  \cdot \sum_{j=1}^{r+1} \binom{r+2}{j} \cdot z^j}
        { \sum_{j=1}^{r+1} \binom{r}{j-1} \cdot z^{j}},
\end{align*}
which achieves its extreme values at the roots of
\begin{align*}
	g(z) := \sum_{j=1}^{r+1} \sum_{k=1}^{r+1} (j-k) \binom{r+2}{j} \binom{r}{k-1} \cdot z^{k+j-1}.
\end{align*}

\begin{claim}\label{uniqueRoot}
The function $g: z \rightarrow \sum_{j=1}^{r+1} \sum_{k=1}^{r+1} (j-k) \binom{r+2}{j} \binom{r}{k-1} \cdot z^{k+j-1}$  has a unique positive real root at $z=1$.
\end{claim}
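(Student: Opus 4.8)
The plan is to identify $g$ with (a positive multiple of) the numerator of the derivative of the rational function $h(z):=B/\chi_{ie}^{\text{Prop}}(P)=N(z)/D(z)$, where $N(z)=\tfrac{1}{r+2}\sum_{j=1}^{r+1}\binom{r+2}{j}z^{j}$ and $D(z)=\sum_{k=1}^{r+1}\binom{r}{k-1}z^{k}$ are the two polynomials already written down, and then to reduce that numerator to an explicit factored form. First I would use the binomial theorem to get the closed forms $D(z)=z(1+z)^{r}$ and $(r+2)N(z)=(1+z)^{r+2}-1-z^{r+2}$. Writing $P(z):=(1+z)^{r+2}-1-z^{r+2}$, the same index manipulation that defines $g$ gives $g(z)=P'(z)D(z)-P(z)D'(z)=(r+2)\,D(z)^{2}\,h'(z)$; since $D(z)>0$ for $z>0$, the positive real roots of $g$ are exactly the critical points of $h$ on $(0,\infty)$.

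Next I would carry out the differentiation. From $P'(z)=(r+2)\bigl((1+z)^{r+1}-z^{r+1}\bigr)$ and $D'(z)=(1+z)^{r-1}\bigl(1+(r+1)z\bigr)$, one factors out $(1+z)^{r-1}$ and, after grouping the terms carrying $(1+z)^{r+2}$ and those carrying $z^{r+2}$, obtains
\[
g(z)=(1+z)^{r-1}\Bigl[(z-1)(1+z)^{r+2}+1+(r+1)z-(r+1)z^{r+2}-z^{r+3}\Bigr].
\]
The bracket vanishes at $z=1$, and in fact splits off a factor $(1-z)$: using $1+(r+1)z-(r+1)z^{r+2}-z^{r+3}=(1-z)\bigl[\sum_{i=0}^{r+2}z^{i}+(r+1)z\sum_{i=0}^{r}z^{i}\bigr]$ together with $(z-1)(1+z)^{r+2}=-(1-z)(1+z)^{r+2}$ yields
\[
g(z)=(1+z)^{r-1}(1-z)\,H(z),\qquad H(z):=\sum_{i=0}^{r+2}z^{i}+(r+1)z\sum_{i=0}^{r}z^{i}-(1+z)^{r+2}.
\]

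It then suffices to show $H(z)\neq 0$ for all $z>0$, and this I would do purely by looking at coefficients: collecting powers of $z$ gives $H(z)=\sum_{i=1}^{r+1}\bigl((r+2)-\binom{r+2}{i}\bigr)z^{i}$, where the $z^{0}$- and $z^{r+2}$-coefficients have cancelled. The coefficients at $i=1$ and $i=r+1$ are also zero since $\binom{r+2}{1}=\binom{r+2}{r+1}=r+2$, while for $2\le i\le r$ we have $\binom{r+2}{i}\ge\binom{r+2}{2}=\tfrac{(r+1)(r+2)}{2}>r+2$, so all remaining coefficients are strictly negative; hence $H(z)<0$ for every $z>0$. (For $r\le 1$ this coefficient range is empty, $g\equiv 0$ and $h\equiv 1$, so the statement is vacuous there.) Because $(1+z)^{r-1}>0$ and $H(z)<0$ on $(0,\infty)$, the factorization forces $g(z)=0\iff 1-z=0$, i.e.\ $z=1$ is the unique positive real root. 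The only genuinely delicate point is the middle step — obtaining the closed forms of $N,D$ and then correctly collecting $g$ into the product $(1+z)^{r-1}(1-z)H(z)$; once that algebraic identity is established the sign argument, and hence the claim, is immediate.
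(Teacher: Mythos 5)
Your proof is correct, and it takes a genuinely different route from the paper's. You exploit the closed forms $D(z)=z(1+z)^{r}$ and $P(z)=(1+z)^{r+2}-1-z^{r+2}$, recognize $g=P'D-PD'$, and factor explicitly as $g(z)=(1+z)^{r-1}(1-z)H(z)$ with $H(z)=\sum_{i=1}^{r+1}\bigl((r+2)-\binom{r+2}{i}\bigr)z^{i}$, whose coefficients vanish at $i\in\{1,r+1\}$ and are strictly negative for $2\le i\le r$; I checked the bracket expansion, the splitting-off of $(1-z)$, and the coefficient computation, and they are all sound. The paper instead stays at the level of the double sum: it reindexes by total degree $\sigma=j+k$, pairs the summands $j$ and $\sigma-j$, shows via the ratio $\binom{r+2}{j}\binom{r}{\sigma-j-1}\big/\binom{r+2}{\sigma-j}\binom{r}{j-1}$ that all coefficients of $z^{\sigma-1}$ are nonpositive for $\sigma\le r+1$ and nonnegative for $\sigma\ge r+3$, and then invokes Descartes' rule of signs plus the observation that $z=1$ is a root. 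Your factorization buys slightly more: it delivers the sign of $g$ on either side of $z=1$ (hence that $z=1$ is the minimizer of $B/\chi_{ie}^{\text{Prop}}$, which is what the enclosing lemma actually uses) as an immediate consequence, and it makes transparent the degenerate cases $r\le 1$, where $g\equiv 0$ and the claim as literally stated is vacuous/ill-posed — a caveat the paper's Descartes argument silently shares; your explicit note that $h\equiv 1$ there, so the downstream bound is unaffected, is a welcome addition. The paper's approach, in turn, avoids any differentiation or closed-form manipulation and works purely combinatorially on the given coefficients. Both arguments are valid; the only point deserving care in yours is the middle algebraic identity, which you flag yourself and which does check out.
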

\begin{proof}
	We will show that g(z) has a unique positive real root at $z=1$, is negative for $z<1$ and positive for $z>1$.
	To this end, by combining coefficients of the same monomial, we get
	\begin{align*}
	g(z) 
	& = \sum_{\sigma=2}^{r+1} \sum_{j=1}^{\sigma-1} (2j-\sigma) \binom{r+2}{j} \binom{r}{\sigma-j-1} \cdot z^{\sigma-1}\\
	& \phantom{=}  + \sum_{\sigma=r+3}^{2r+2} \sum_{j=\sigma-r-1}^{r+1} (2j-\sigma) \binom{r+2}{j} \binom{r}{\sigma-j-1} \cdot z^{\sigma-1}, 
	\end{align*}
	where by symmetry the coefficient for $\sigma=r+2$ is $0$.
	Pairing summands $j$ and $\sigma-j$, we get
	\begin{align*}
	&g(z) =  \sum_{\sigma=2}^{r+1} \sum_{j=1}^{\lfloor\frac{\sigma-1}{2}\rfloor} 
	(2j-\sigma)
	\left(
	\binom{r+2}{j} \binom{r}{\sigma-j-1}
	- \binom{r+2}{\sigma-j} \binom{r}{j-1}
	\right)   \cdot z^{\sigma-1}\\
	& \phantom{=}  + \sum_{\sigma=r+3}^{2r+2} \sum_{j=\lceil\frac{\sigma}{2}\rceil}^{r+1} 
	(2j-\sigma)
	\left(
	\binom{r+2}{j} \binom{r}{\sigma-j-1}
	-  \binom{r+2}{\sigma-j} \binom{r}{j-1}
	\right)   \cdot z^{\sigma-1}.
	\end{align*}
	Define $\beta(\sigma,j):=(2j-\sigma) \cdot
	\left(
	\binom{r+2}{j} \binom{r}{\sigma-j-1}
	- \binom{r+2}{\sigma-j} \binom{r}{j-1}
	\right)$.
	Now observe that 
	\begin{align*}
	\binom{r+2}{j} \binom{r}{\sigma-j-1} = \frac{(\sigma-j)(r+2-(\sigma-j))}{j(r+2-j)}\cdot \binom{r+2}{\sigma-j} \binom{r}{j-1}.
	\end{align*}
	Since $\frac{(\sigma-j)(r+2-(\sigma-j))}{j(r+2-j)}\geq 1$ for all $(\sigma,j)$ where $2\leq\sigma\leq r+1$ and $1\leq j \leq \frac{\sigma-1}{2}$ and for all $(\sigma,j)$ where $r+3 \leq \sigma \leq 2r+2$ and $\frac{\sigma}{2}\leq j \leq r+1$, we get that 
	$\beta(\sigma,j)\leq 0$ when $\sigma \leq r+1$ and $\beta(\sigma,j)\geq 0$ when $\sigma \geq r+3$ for all $j$ in the corresponding range.  \emph{Descartes' rule of signs} implies that $g(z)$ has at most one positive real root. Simple arithmetic shows that $z=1$ is a root of $g(z)$. 
	\hfill 
\end{proof}
\vspace{3mm}

By the previous lemma, we conclude that 
$\frac{B}{ \chi_{ie}^{\text{Prop}}(P)}$ is minimized for $z=1$, i.e.,
\begin{align*}
	\frac{B}{ \chi_{ie}^{\text{Prop}}(P)} 
        \geq \frac{\frac{1}{r+2}  \cdot \sum_{j=1}^{r+1} \binom{r+2}{j} }
        { \sum_{j=1}^{r+1} \binom{r}{j-1}}
        = \frac{\frac{1}{r+2} \cdot (2^{r+2}-2)}
               {2^r}
         \geq \frac{4}{r+3}
         \geq \frac{4}{d+3},
\end{align*}
which completes the proof of the upper bound in the lemma. \hfill 
\end{proof}
\vspace{3mm}

\begin{proof}[Proof of Lemma~\ref{lemma:approximate_equilibria}]
	Let $P$ be a $\rho$-approximate equilibrium in the SV weighted congestion game. Using the equilibrium condition and Corollary~\ref{corollary:shatoprop}, we have
	\begin{align*}
	X_i^{\text{Prop}}(P) 
	\leq\frac{d+3}{4} \cdot X_i(P)
	\leq\frac{d+3}{4}\cdot  \rho \cdot X_i(P) 
	\leq\frac{d+3}{4} \cdot \frac{d+1}{2}\cdot  \rho \cdot X_i^{\text{Prop}}(P)
	\end{align*}
\end{proof}
\vspace{3mm}

\begin{proof}[Proof of Lemma~\ref{lemma:approx-PoA}]
	Let $P$ be an $\rho$-approximate pure Nash equilibrium and $P^*$ the optimal outcome:
	\begin{align}
	SC(P) =  \sum_{i \in N} \sum_{e \in P_i} \chi_e(i, S_e(P))
	\overset{\text{Def. } \rho \text{-PNE}}{\le}
	\rho \cdot  \sum_{i \in N} \sum_{e \in P^*_i} \chi_e(i, S_e(P) \cup \{i\}).\nonumber
	\end{align}
	Due to the convexity of the cost functions, note that the cost share of any player on any resource is always upperbounded by the marginal cost increase she causes to the resource cost when she is last in the ordering, $\chi_e(i, S_e(P) \cup \{i\}) \leq  C_e(f_e(P)+ w_i) - C_e(f_e(P))$. Thus,
	\begin{align}
	SC(P)
	& \le \rho \cdot \left( \sum_{i \in N} \sum_{e \in P^*_i} C_e(f_e(P)+ w_i) - C_e(f_e(P)) \right) \nonumber\\
	& \le \rho \cdot \left( \sum_{e\in E}\sum_{i: e\in P^*_i} C_e(f_e(P)+ w_i) - C_e(f_e(P)) \right) \nonumber\\
	& \le \rho \cdot \left( \sum_{e\in E} C_e(f_e(P)+ f_e(P^*)) - C_e(f_e(P)) \right).\label{smooth}
	\end{align}
	The last inequality follows from   assumption that $C_e$ is a convex function  in players' weights. 
	
	\begin{claim}Let $\lambda = 2^{\frac{d}{d+1}} \cdot \left(2^{\frac{1}{d+1}}-1\right)^{-d}$ and $\mu=2^{\frac{d}{d+1}}-1$, then  for $x,y > 0$ and $d\geq 1$,
		$(x+y)^{d+1} -x^{d+1} \leq \lambda \cdot y^{d+1} + \mu \cdot x^{d+1}$.
	\end{claim}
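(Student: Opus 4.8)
The plan is to exploit homogeneity. Every term in the claimed inequality is homogeneous of degree $d+1$ in $(x,y)$, so I would divide through by $y^{d+1}$ and set $t=x/y>0$. This reduces the claim to the one-variable statement
\[
(1+t)^{d+1} - t^{d+1} \;\le\; \lambda + \mu\, t^{d+1} \qquad \text{for all } t>0,
\]
equivalently $h(t)\le\lambda$ for all $t>0$, where $h(t):=(1+t)^{d+1}-(1+\mu)\,t^{d+1}$ and I record that $1+\mu=2^{d/(d+1)}$ and hence the leading coefficient $1-(1+\mu)$ of $h$ is negative.

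Next I would locate the maximizer of $h$ on $(0,\infty)$ by calculus. We have $h'(t)=(d+1)\bigl[(1+t)^{d}-(1+\mu)t^{d}\bigr]$, so a critical point satisfies $\bigl(1+\tfrac1t\bigr)^{d}=1+\mu=2^{d/(d+1)}$, i.e. $1+\tfrac1t=2^{1/(d+1)}$, which has the unique positive solution $t^{*}=\bigl(2^{1/(d+1)}-1\bigr)^{-1}$. Since $t\mapsto(1+1/t)^{d}$ is strictly decreasing on $(0,\infty)$, we get $h'>0$ on $(0,t^{*})$ and $h'<0$ on $(t^{*},\infty)$; combined with $h(t)\to-\infty$ as $t\to\infty$ this shows $t^{*}$ is the global maximizer, so in particular the boundary value $h(0)=1$ needs no separate treatment.

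Finally I would evaluate $h(t^{*})$. From $1+1/t^{*}=2^{1/(d+1)}$ one gets $1+t^{*}=2^{1/(d+1)}/(2^{1/(d+1)}-1)$, hence $(1+t^{*})^{d+1}=2/(2^{1/(d+1)}-1)^{d+1}$ and $(t^{*})^{d+1}=1/(2^{1/(d+1)}-1)^{d+1}$. Substituting,
\[
h(t^{*}) \;=\; \frac{2-2^{d/(d+1)}}{(2^{1/(d+1)}-1)^{d+1}}
\;=\; \frac{2^{d/(d+1)}\bigl(2^{1/(d+1)}-1\bigr)}{(2^{1/(d+1)}-1)^{d+1}}
\;=\; \frac{2^{d/(d+1)}}{(2^{1/(d+1)}-1)^{d}} \;=\; \lambda,
\]
which is exactly the required bound, and the inequality is tight at $t=t^{*}$ (i.e. at $x=t^{*}y$). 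I do not expect any real obstacle: the only step requiring care is justifying that the supremum of $h$ is attained in the interior rather than at a boundary, which the sign analysis of $h'$ handles, and the closing identity is just routine algebra with the powers of $2^{1/(d+1)}$.
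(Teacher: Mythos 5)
Your proof is correct, and I checked the details: the reduction by homogeneity to $h(t)=(1+t)^{d+1}-(1+\mu)t^{d+1}\le\lambda$ is valid, the sign analysis of $h'(t)=(d+1)t^{d}\bigl[(1+1/t)^{d}-2^{d/(d+1)}\bigr]$ correctly identifies $t^{*}=(2^{1/(d+1)}-1)^{-1}$ as the unique global maximizer on $(0,\infty)$ (with $h\to-\infty$ at infinity and the behaviour near $0$ dominated by the increasing branch), and the closing algebra $h(t^{*})=\bigl(2-2^{d/(d+1)}\bigr)(2^{1/(d+1)}-1)^{-(d+1)}=2^{d/(d+1)}(2^{1/(d+1)}-1)^{-d}=\lambda$ is exact, so the bound is tight at $x=t^{*}y$. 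Note, however, that the paper itself does not prove this claim at all: it is imported as a black box from Gkatzelis et al.~\cite{DBLP:conf/wine/GkatzelisKR14} inside the proof of Lemma~\ref{lemma:approx-PoA}. So your contribution is a self-contained derivation rather than an alternative to an in-paper argument; it follows the standard smoothness-parameter recipe (scale out one variable, optimize the single-variable ratio, observe that $\lambda$ and $\mu$ are chosen exactly so that the maximum equals $\lambda$), which is in the spirit of how such $(\lambda,\mu)$-type inequalities are established in the cited literature. The one point worth stating explicitly if this were written up is that $h'$ has a \emph{unique} zero because $t\mapsto(1+1/t)^{d}$ is strictly monotone, which is precisely what licenses calling $t^{*}$ the global maximizer; you do say this, so no gap remains.
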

	Using this claim that was proven in~\cite{DBLP:conf/wine/GkatzelisKR14}, \eqref{smooth} becomes
	\begin{align*}
	SC(P) 
	&\leq \rho\cdot \left( \sum_{e\in E} \lambda \cdot C_e(f_e(P^*)) + \mu \cdot C_e(f_e(P)) \right) \\
	&=  \rho \cdot \lambda \cdot SC(P^*) +  \rho \cdot \mu \cdot SC(P). 
	\end{align*}
	Rearranging  and substituting the values for $\lambda$ and $\mu$ we get an upper bound on the $\rho$-PoA,
	\begin{align*}
	\rho\text{-PoA}
	&\le \frac{\rho \cdot \lambda}{1-\rho \cdot \mu}
	= \frac{\rho \cdot 2^{\frac{d}{d+1}} \cdot \left(2^{\frac{1}{d+1}}-1\right)^{-d}}{1-\rho \cdot \left( 2^{\frac{d}{d+1}}-1\right)}
	= \rho \cdot \frac{2}{2^{\frac{1}{d+1}}} \cdot \frac{\left( 2^{\frac{1}{d+1}} - 1 \right)^{-d}}{1 - \rho \cdot \frac{2}{2^{\frac{1}{d+1}}} + \rho}\\
	& =  \frac{2 \cdot \rho \left( 2^{\frac{1}{d+1}} - 1\right)^{-d}}{2^{\frac{1}{d+1}} \cdot (1 + \rho) - 2 \cdot \rho}
	= \frac{\rho \cdot (2^{\frac{1}{d+1}} - 1)^{-d}}{2^{-\frac{d}{d+1}} \cdot( 1+ \rho) - \rho}.
	\end{align*}
	\hfill 
\end{proof}
\vspace{3mm}

\begin{proof}[Proof of Lemma~\ref{lemma:approx-stretch}]
Let $P$ be a $\rho$-approximate equilibrium, $P^*$ the optimal outcome and $\hat{P}=\min_{P'\in\CP}\Phi(P')$ the minimizer of the potential which is by definition a pure Nash equilibrium. Then the $\rho$-approximate price of anarchy equals to
\begin{align*}
	\rho\text{-PoA} = \underset{P \in \rho\text{-PNE}}{\max} \frac{SC(P)}{SC(P^*)}
	\ge \underset{P \in \rho\text{-PNE}}{\max} \frac{SC(P)}{SC(\hat{P})}
	\overset{\text{Def.}~\Phi}{\ge} \underset{P \in \rho\text{-PNE}}{\max} \frac{\Phi(P)}{SC(\hat{P})}.
\end{align*}
By  Lemma~\ref{lemma:approx-PoA} and Corollary~\ref{cor:potential_sum_costs} for $A=N$, the $\rho$-PoA is bounded as follows
\begin{align*}
	\underset{P \in \rho\text{-PNE}}{\max} \frac{\Phi(P)}{(d+1)\cdot \Phi(\hat{P})}
	\leq \rho\text{-PoA}
	\leq \frac{\rho \cdot (2^{\frac{1}{d+1}} - 1)^{-d}}{2^{\frac{-d}{d+1}} \cdot( 1+ \rho) - \rho}.
\end{align*}
Rearranging the terms gives the desired upper bound of the $\rho$-stretch,
\begin{align*}
	\rho\text{-}\Omega=\underset{P \in \rho\text{-PNE}}{\max} \frac{\Phi(P)}{\Phi(\hat{P})}
	\leq \frac{\rho \cdot (2^{\frac{1}{d+1}} - 1)^{-d}\cdot (d+1)}{2^{-\frac{d}{d+1}} \cdot( 1+ \rho) - \rho}.
\end{align*}
\hfill 
\end{proof}
\vspace{3mm}

\begin{proof}[Proof of Lemma~\ref{lemma:approx-limited-stretch-social-costs}]
	To show the lemma we lower and upper bound the $D$-partial potential.
	Let $e$ be an arbitrary resource. By using Lemma~\ref{lemma:resource_potential_sum_costs} and Lemma~\ref{lemma:shatoprop}, we get
	\begin{align}
	\Phi_{e, D} (P)
	\leq \sum_{i \in D} \chi_{ie}(P)
	\leq \frac{d+1}{2} \cdot  \sum_{i \in D} \chi_{ie}^{\text{Prop}}(P). \label{0}
	\end{align}
By definition of the proportional share $\chi_{ie}^{\text{Prop}}$, \eqref{0} becomes
	\begin{align}
	\Phi_{e, D} (P)
	\leq & ~\frac{d+1}{2} \cdot \sum_{i \in D} w_i \cdot c_e(f_e(P))
	= \frac{d+1}{2} \cdot  f_e^D(P)  \cdot c_e(f_e(P)) \nonumber\\
	&= \frac{d+1}{2} \cdot \frac{f_e^D(P)}{f_e(P)} \cdot f_e(P) \cdot c_e(f_e(P))
	= \frac{d+1}{2} \cdot \frac{f_e^D(P)}{f_e(P)} \cdot \sum_{i\in N}\chi_{ie}(P). \label{00}
	\end{align}
Rearranging \eqref{00} gives a relation of the per unit contribution to $\Phi_D$ and $\Phi$,
	\begin{align*}
	\frac{\Phi_{e, D} (P)}{f_e^D(P)}
	\leq \frac{d+1}{2}\cdot \frac{\sum_{i\in N}\chi_{ie}(P)}{f_e(P)},
	\end{align*}
	and by summing up over all resources $e$, we get
	\begin{align}
	\frac{\Phi_{D} (P)}{W_D}
	\leq \frac{d+1}{2}\cdot \frac{SC(P)}{W},\label{eq:stretch-upper}
	\end{align}
	where $W= \sum_{i \in N} w_i = \sum_{e \in E} f_e(P)$ and $W_D = \sum_{i \in D} w_i = \sum_{e \in E} f_e^D(P)$.

	Similar to \eqref{00}, we lower bound the $D$-partial potential with
	\begin{align*}
	\Phi_{e, D} (P)
	\geq \frac{1}{d+1} \cdot  \sum_{i \in D} \chi_{ie}(P)
	\geq& ~\frac{4}{(d+1) \cdot (d+3)} \cdot \sum_{i \in D} w_i \cdot c_e(f_e(P))  \\
	&= \frac{4}{(d+1)\cdot (d+3)} \cdot \frac{f_e^D(P)}{f_e(P)} \cdot \sum_{i\in N}\chi_{ie}(P).
	\end{align*}
The first inequality uses Lemma~\ref{lemma:resource_potential_sum_costs} and the second uses Lemma~\ref{lemma:shatoprop}.
	Again we get a per unit contribution to $\Phi_D$ and $\Phi$ on one resource and in the whole game,
	\begin{align}
\frac{\Phi_{e, D} (P)}{f_e^D(P)} &\geq  \frac{4}{(d+1)\cdot(d+3)} \cdot \frac{\sum_{i\in N}\chi_{ie}(P)}{f_e(P)} \notag \\
	 \Leftrightarrow ~\frac{\Phi_{D} (P)}{W_D} &\geq   \frac{4}{(d+1)\cdot (d+3)} \cdot  \frac{SC(P)}{W}.\label{eq:stretch-lower}
	\end{align}
Combining \eqref{eq:stretch-upper} with \eqref{eq:stretch-lower} and rearranging  the terms completes Lemma's \ref{lemma:approx-limited-stretch-social-costs} proof,
	\begin{align*}
	\frac{\Phi_D(P)}{\Phi_D(\hat{P})}
	&\leq \frac{d+1}{2} \cdot \frac{SC(P)}{W} \cdot \frac{W_D}{1}  \cdot \frac{(d+1) \cdot(d+3)}{4}  \cdot \frac{W}{SC(\hat{P})}\cdot  \frac{1}{W_D} \notag \\
	& =  \frac{(d+1)^2 \cdot (d+3)}{8} \cdot  \frac{SC(P)}{SC(\hat{P})}.  
	\end{align*}
	
	\vspace{-8mm}
	\hfill 
\end{proof}

\section{Proofs for the Computation in Section~\ref{section:computation}}

\begin{proof}[Proof of Lemma~\ref{lemma:potential_costs}]
Let $D_r^i \subseteq D_r$ the set of players who still have to perform their last move after player $i$ in phase $r$. Then by definition of the partial potential \ref{pot}, $\potentialDPhase{\statePhase}$ equals to
\begin{align}\label{part1}
	\potentialAll{\statePhase} - \potentialFixed{\statePhase}
	= \sum_{i=1}^{\left|\movingPlayersPhase\right|}\left(\Phi^{N \backslash D_r^i}(\statePhase) - \Phi^{N \backslash D_r^{i-1}}(\statePhase)\right)
	= \sum_{i=1}^{\left|\movingPlayersPhase\right|} \Phi_i^{N \backslash D_r^i}(\statePhase).
\end{align}
For each player $i$, her strategy in state $\statePhase$ is identical to her strategy in $\state^{r,i}$. By Proposition~\ref{prop:potentials}~(\ref{prop:partialtofull}), ~\ref{prop:potentials}~(\ref{prop:sameStrategies}) and~\ref{prop:potentials}~(\ref{prop:potentialcosts}), we  upperbound \eqref{part1} by
\begin{align*}
	\sum_{i=1}^{\left|\movingPlayersPhase\right|} \Phi_i^{N \backslash D_r^i}(\statePhase)
	= \sum_{i=1}^{\left|\movingPlayersPhase\right|} \Phi_i^{N \backslash D_r^i}(\state^{r,i})
	\leq \sum_{i=1}^{\left|\movingPlayersPhase\right|} \Phi_i(\state^{r,i})
	= \sum_{i=1}^{\left|\movingPlayersPhase\right|} \playerCosts{\state^{r,i}}.
\end{align*}
\hfill 
\end{proof}
\vspace{3mm}

\begin{proof}[Proof of Lemma~\ref{lemma:potential_block}]
We show the lemma by contradiction. Thus, assume that $\potentialDPhase{\state^{r-1}} > \frac{n}{\constant} \cdot b_r$.
Let $S_r$, $T_r \subseteq \movingPlayersPhase$, be the set of players whose last move is an \sMove \ and a \tMove, accordingly, such that $S_r \cup T_r = \movingPlayersPhase$.
First, we focus on the players in $S_r$. Let $i \in S_r$ be an arbitrary player. By  definition of an \sMove, player $i$ decreases her costs in her last move during phase $r$ by at least $(s-1) \cdot \playerCosts{\state^{r,i}}$. By Proposition~\ref{prop:diffpotentialcosts}, any such improvement step also decreases the $i$-partial potential by the same amount.  Summing up over all players $i \in S_r$, we get a lower bound on the total decrease of the $D_r$-partial potential between states $P^{r-1}$ and $P^r$: $\Phi_{\movingPlayersPhase}(P^{r-1}) - \Phi_{\movingPlayersPhase}(P^{r}) \geq (s-1) \cdot \sum_{i \in S_r} X_i(P^{r,i})$.
Rearranging, we upper bound the partial potential as follows,
\begin{align}
	\Phi_{\movingPlayersPhase}(P^{r})
	&\leq \Phi_{\movingPlayersPhase}(P^{r-1}) - (s-1) \cdot \sum_{i \in S_r} X_i(P^{r,i})\nonumber\\
	&\leq \Phi_{\movingPlayersPhase}(P^{r-1}) - (s-1) \cdot\left( \sum_{i \in D_r} X_i(P^{r,i}) - \sum_{i \in T_r} X_i(P^{r,i}) \right)\nonumber\\
	&\leq \Phi_{\movingPlayersPhase}(P^{r-1}) - (s-1) \cdot\left( \sum_{i \in D_r} X_i(P^{r,i}) - n \cdot b_r \right)\nonumber\\
	&\leq \Phi_{\movingPlayersPhase}(P^{r-1}) - (s-1) \cdot\left( \potentialDPhase{\statePhase} - n \cdot b_r \right)\nonumber\\
	&\leq \Phi_{\movingPlayersPhase}(P^{r-1}) - (s-1) \cdot\left(\potentialDPhase{\statePhase} - \constant \cdot \potentialDPhase{\state^{r-1}} \right)\nonumber\\		
	&\leq (1 + (s-1) \cdot \constant) \cdot \Phi_{\movingPlayersPhase}(P^{r-1}) - (s-1)\cdot \potentialDPhase{\statePhase}, \nonumber
\end{align}
where the third inequality follows from the fact that the cost of a player $i \in T_r$ is upper bounded by the block border $b_r$, the fourth inequality by Lemma~\ref{lemma:potential_costs} and the fifth one by the assumption. Rearranging the terms gives
\begin{align}
	\Phi_{\movingPlayersPhase}(P^{r}) \leq \frac{1+(s-1) \cdot \constant}{s} \cdot \Phi_{\movingPlayersPhase}(P^{r-1}).\label{eq:potential_phase}
\end{align}

Let $\bar{P}$ be an intermediate state between $P^{r-1}$ and $P^r$ such that all players in $S_r$ have already finished their $s$-move and play their strategies in $P^r$, while the  moving players in $T_r$ play their strategies in $P^{r-1}$. Consider a player $i \in T_r$. The difference in her cost after her $t$-move is at most $b_r$. This is due to the fact that her initial cost is at most $b_r$ (by the block construction) and the minimum cost she can improve to is zero. Then, by Proposition~\ref{prop:diffpotentialcosts}, the difference in the cost of player $i$ equals to the difference in the $i$-partial potential, that is, $ \Phi_i(\bar{P}) - \Phi_i(P^r)= X_i(P) - X_i(P') \leq b_r$. Summing up over all players in $T_r$, we get that the difference in the $D_r$-partial potential among states $\bar{P}$ and $P^r$ can be at most $n \cdot b_r$. Then, we get the following upper bound on the partial potential in state $\bar{P}$,
\begin{align} 
	\Phi_{\movingPlayersPhase}(\bar{P})
	&\leq \Phi_{\movingPlayersPhase}(P^r) + n \cdot b_r\nonumber
	\leq \frac{1+(s-1) \cdot \constant}{s} \cdot \Phi_{\movingPlayersPhase}(P^{r-1}) + \constant \cdot \potentialDPhase{\state^{r-1}}\nonumber\\
	&= \left(\frac{1 - \constant}{s} + 2  \cdot \constant \right)\cdot \Phi_{\movingPlayersPhase}(P^{r-1})\nonumber
	< \left(\frac{1 }{s} + 2 \cdot \constant \right)\cdot \Phi_{\movingPlayersPhase}(P^{r-1})\nonumber,
\end{align}
where the second inequality holds by \eqref{eq:potential_phase} and our assumption. Substituting $s$, we get
\begin{align*}
	\Phi_{\movingPlayersPhase}(\bar{P})
	< \frac{1}{\tStretch} \cdot \Phi_{\movingPlayersPhase}(P^{r-1}),
\end{align*}
which contradicts Corollary~\ref{stretch:upperbound}.
\hfill 
\end{proof}
\vspace{3mm}

\begin{proof}[Proof of Lemma~\ref{lemma:running_time}]
At the beginning of the algorithm's execution, the sum of all players' costs is at most $n \cdot X_{\max}$. By Corollary~\ref{cor:potential_sum_costs}, the potential is also upper bounded by the same amount. In the initial phase, each deviating player makes a $t$-move, therefore her cost improves by at least $(t-1) \cdot b_1$ (since her cost is at most $b_1$). The potential function also decreases by at least $(t-1) \cdot b_1$ in each step. Using the definition of $b_1$, we get that $(t-1)\cdot b_1 = \gamma \cdot g^{-1}\cdot X_{\max}$. Using both observations, we can compute the maximum number of improvement steps in the first phase,
\vspace{-.4cm}
\begin{align*}
	\frac{n \cdot X_{\max}}{\gamma\cdot g^{-1} \cdot X_{\max}}=n\cdot \gamma^{-1} \cdot g =n\cdot \gamma^{-1} \cdot \frac{2 \cdot n \cdot (d+1)}{\constant^3} = 2 \cdot n^2 \cdot (d+1) \cdot \gamma^{-4}.
\end{align*}
Consider an arbitrary phase $r \geq 1$. By Lemma~\ref{lemma:potential_block}, $\Phi_{\movingPlayersPhase}(P^{r-1}) \leq \frac{n}{\gamma} \cdot b_r$. Again, we look at the possible cost improvement in a deviation which equals to the potential decrease in this step. In this case, the cost improvement is at least $(t-1) \cdot b_{r+1}$. By definition of $b_{r+1}$, we have that $(t-1) \cdot b_{r+1} = b_r \cdot g^{-1} \cdot \gamma$. Similar,  the maximum number of improvement moves in this phase is
\vspace{-.4cm}
\begin{align*}
	\frac{ \frac{n}{\gamma} \cdot b_r}{b_r \cdot g^{-1}\cdot \gamma}=\frac{n \cdot g}{\gamma^{2}} =\frac{2 \cdot n^2 \cdot (d+1) \cdot \gamma^{-3}}{\gamma^{2}} = 2 \cdot n^2 \cdot (d+1) \cdot \gamma^{-5}.
\end{align*}
In total, we have at most ~$2 \cdot n^2 \cdot (d+1) \cdot \gamma^{-4} +\log\left(\frac{\playerCostsMax}{\playerCostsMin}\right)\cdot 2 \cdot n^2 \cdot (d+1) \cdot \gamma^{-5} = \left(1+\log\left(\frac{\playerCostsMax}{\playerCostsMin}\right)\right) \cdot 2 \cdot n^2 \cdot (d+1) \cdot \gamma^{-9}$~ improvement steps.
\hfill 
\end{proof}
\vspace{3mm}

\begin{proof}[Proof of Lemma~\ref{lemma:costs_end}]
We first show by contradiction the following. For $j \geq r$,  the increase in the cost of player $i$ from an arbitrary state $\state^j$  to state $\state^{j+1}$ is upper bounded by $\frac{n \cdot (d+1)}{\constant} \cdot b_{j+1}$. Thus, assume that $\playerCosts{\state^{j+1}} - \playerCosts{\state^j} > \frac{n \cdot (d+1)}{\constant} \cdot b_{j+1}$. Since  player $i$ does not deviate during phase $j+1$, the increase in her cost is caused by other players deviating to the resources she uses. Thus, there exists a set of resources $E' \subseteq E$ such that each resource in $E'$ is used by player $i$ and by at least one player in $D_{j+1}$ at state $\state^{j+1}$. This yields to
\begin{align*}
	& \sum_{e \in E'} \costshare{\state^{j+1}} > \frac{n \cdot (d+1)}{\constant} \cdot b_{j+1} \\
	&\Rightarrow
	\frac{\sum_{e \in E'}f_e(\state^{j+1}) \cdot c_e (f_e(\state^{j+1}))}{d+1}> \frac{n}{\constant} \cdot b_{j+1}\nonumber\\
	&\Leftrightarrow \frac{SC_{D_{j+1}}(\state^{j+1})}{d+1} > \frac{n}{\constant} \cdot b_{j+1}\\
	&\Rightarrow \Phi_{D_{j+1}}(\state^{j+1}) > \frac{n}{\constant} \cdot b_{j+1}.
\end{align*}
The last step uses Corollary~\ref{cor:potential_sum_costs}. Since the potential decreases during the execution of the algorithm, we get $\Phi_{D_{j+1}}(\state^j) \geq \Phi_{D_{j+1}}(\state^{j+1}) > \frac{n}{\constant} \cdot b_{j+1}$, which contradicts Lemma~\ref{lemma:potential_block}. Therefore $\playerCosts{\state^{j+1}} \leq \playerCosts{\state^j} + \frac{n(d+1)}{\constant} \cdot b_{j+1}$ and we use this  to show the lemma as follows,
\begin{align*}
	\playerCosts{\state^{m-1}}
	&\leq \playerCosts{\state^{m-2}} + \frac{n\cdot (d+1)}{\constant} \cdot b_{m-1}\\
	&\leq \playerCosts{\state^r} + \frac{n \cdot (d+1)}{\constant} \sum_{j=r+1}^{m-1}b_j\nonumber\\
	&= \playerCosts{\state^r} + \frac{n \cdot(d+1)}{\constant} \sum_{j=r+1}^{m-1}\playerCostsMax \cdot g^{-j}\nonumber\\
	&= \playerCosts{\state^r} + \frac{n\cdot(d+1)}{\constant} \sum_{j=r+1}^{m-1}b_r \cdot g^{r-j}\nonumber\\
	&\leq \playerCosts{\state^r} + \frac{n \cdot (d+1)}{\constant} \cdot 2  \cdot b_r \cdot g^{-1}\nonumber\\
	&\leq \playerCosts{\state^r} + \frac{2\cdot n \cdot (d+1)}{\constant \cdot  g} \cdot \playerCosts{\state^r}\nonumber\\
	&= \left( 1 + \frac{2\cdot n \cdot (d+1)}{\constant \cdot g} \right) \cdot \playerCosts{\state^r}
	= \left( 1 + \constant^2 \right) \cdot \playerCosts{\state^r}.
\end{align*}
\hfill 
\end{proof}
\vspace{3mm}

\begin{proof}[Proof of Lemma~\ref{lemma:costs_otherstrategy}]
Similarly to previous lemma, we first show  by contradiction the following. For two arbitrary successive phases $j$ and $j+1$ and an arbitrary alternative strategy $P'_i$ of player $i$,  $\playerCosts{\state_{-i}^{j+1}, \state'_i} \geq \playerCosts{\state_{-i}^j, \state'_i} - \frac{n\cdot (d+1)}{\constant}\cdot b_{j+1}$.
Thus, assume that  $\playerCosts{\state_{-i}^j, \state'_i} - \playerCosts{\state_{-i}^{j+1}, \state'_i} > \frac{n\cdot (d+1)}{\constant} \cdot b_{j+1}$. Since player $i$ does not deviate during phase $j+1$, the increase in her costs is caused by other players deviating to the resources she uses. Thus, there exists a set of resources $E' \subseteq E$ such that each resource in $E'$ is used by player $i$ and by at least one player in $D_{j+1}$ at state $\state^{j+1}$. Therefore 
\begin{align*}
	\sum_{e \in E'} \costshare{\state_{-i}^j, \state'_i} > \frac{n\cdot (d+1)}{\constant}\cdot b_{j+1} & \Rightarrow
	\sum_{e \in E'} \costshare{\state_{-i}^j, \state_i} > \frac{n \cdot (d+1)}{\constant}\cdot b_{j+1}.
\end{align*}
Following exactly the same steps as in proof of Lemma \ref{lemma:costs_end}, the previous yields to a  contradiction of Lemma \ref{lemma:potential_block}.
Thus,  $\playerCosts{\state_{-i}^{j+1}, \state'_i} \geq \playerCosts{\state_{-i}^j, \state'_i} - \frac{n \cdot (d+1)}{\constant} \cdot b_{j+1}$, which we use to show the lemma's statement as follows,
\begin{align*}
	\playerCosts{\state_{-i}^{m-1}, \state'_i}
	&\geq \playerCosts{\state_{-i}^{m-2}, \state'_i} - \frac{n\cdot(d+1)}{\constant}\cdot b_{m-1}\nonumber\\
	&\geq \playerCosts{\state_{-i}^{r}, \state'_i} - \frac{n \cdot (d+1)}{\constant}\cdot \sum_{j=r+1}^{m-1}b_j\nonumber\\
	&= \playerCosts{\state_{-i}^{r}, \state'_i} - \frac{n \cdot (d+1)}{\constant} \cdot \sum_{j=r+1}^{m-1}\playerCostsMax \cdot g^{-j}\nonumber\\
	&= \playerCosts{\state_{-i}^{r}, \state'_i} - \frac{n\cdot (d+1)}{\constant} \cdot \sum_{j=r+1}^{m-1}b_r \cdot g^{r-j}\nonumber\\
	&\geq \playerCosts{\state_{-i}^{r}, \state'_i} - \frac{n\cdot (d+1)}{\constant} \cdot 2  \cdot b_r \cdot g^{-1}\nonumber\\
	&\overset{b_r}{=} \playerCosts{\state_{-i}^{r}, \state'_i} - \frac{2 \cdot n \cdot (d+1)}{\constant \cdot g} \cdot \playerCosts{\state^r}\nonumber\\
	&\overset{g}{=} \playerCosts{\state_{-i}^{r}, \state'_i} - \constant^2 \cdot \playerCosts{\state^r}\nonumber\\
	&\overset{\gamma \leq \frac{1}{s}}{\geq} \playerCosts{\state_{-i}^{r}, \state'_i} - \frac{\constant}{s} \cdot \playerCosts{\state^r}\nonumber\\
	&\geq \playerCosts{\state_{-i}^{r}, \state'_i} - \constant \cdot \playerCosts{\state_{-i}^r, \state'}
	= \left(1-\constant\right) \cdot \playerCosts{\state_{-i}^{r}, \state'_i}.
\end{align*}
The second last inequality holds due to the $s$-approximate equilibrium for player $i$ in  $\state^r$.	
\hfill 
\end{proof}
\vspace{3mm}

\begin{proof}[Proof of Lemma~\ref{lemma:approximation_factor}]
Let $i$ be an arbitrary player who took her last move in phase $r$ and let $\state'_i$ be an arbitrary other strategy of player $i$. We use Lemma~\ref{lemma:costs_end} and Lemma~\ref{lemma:costs_otherstrategy} and the fact that player $i$ has no incentive to make a $s$-move in phase $r$ (by definition of the algorithm):
\begin{align}
	\frac{\playerCosts{\state^{m-1}}}{\playerCosts{\state_{-i}^{m-1}, \state'_i}}
	&\leq \frac{(1+\constant^2) \cdot \playerCosts{\statePhase}}{(1-\constant) \cdot \playerCosts{\state_{-i}^r, \state'_i}}\nonumber\\
	&\leq \left(\frac{1+\constant^2}{1-\constant}\right) \cdot \left(\frac{1}{\tStretch} -2\constant \right)^{-1}\nonumber\\
	&\leq \left(\frac{1+\constant^2}{1-\constant}\right) \cdot \left(\frac{1}{\tStretch} -2 \constant \right)^{-1}\nonumber
\end{align}

By minimizing the first part, we can get arbitrary close to $1$. For the second part, we need to fix a $\constant$ with $\constant < \frac{1}{2\tStretch}$. Therefore, the expression can be simplified to $\alpha = (1+O(\gamma)) \cdot \tStretch$.
\end{proof}
\vspace{3mm}

\begin{proof}[Proof of Lemma~\ref{alpha_order}] By Lemma ~\ref{lemma:approximation_factor} and Corolarry \ref{stretch:upperbound}, we get that  our main factor $\alpha$ (from Lemma~\ref{lemma:approximation_factor}) equals to 
$$ (1+O(\gamma)) \cdot \frac{(d+1)^2 \cdot (d+3)}{8} \cdot \frac{t \cdot (2^{\frac{1}{d+1}} - 1)^{-d}}{2^{-\frac{d}{d+1}} \cdot( 1+ t) - t},$$
where $\gamma$ is a small positive constant and $t=1+ \gamma$. Observe that factor $\alpha$ is essentially in the order of
$$\Theta(d^3) \cdot \left( \frac{1}{2^{\frac{1}{d+1}} -1}\right)^d.$$
We now claim that the order of the above is $\left(\frac{d}{\ln 2}\right)^d \cdot poly(d)$. To prove this, it is enough to show that $ \frac{1}{2^{\frac{1}{d+1}} -1} $  is assymptotically similar to $\frac{d}{\ln(2)}$.  
Applying L'Hospital's rule, this follows from the fact that
$$\lim_{d\to \infty} \frac{\frac{1}{d}}{2^{\frac{1}{d+1}} - 1} = 
\lim_{d\to \infty} \frac{- \frac{1}{d^2}}{- \frac{2^{\frac{1}{d+1}} \cdot \ln(2)}{(d+1)^2}} = \frac{1}{\ln(2)},$$
which completes the proof.
\hfill 
\end{proof}
\vspace{3mm}

\section{Proofs for the Sampling in Section~\ref{section:sampling}}

\begin{proof}[Proof of Lemma~\ref{lemma:sampling_shapley}]
The beginning of the proofs follows from the analysis in~\cite{DBLP:conf/cocoon/Liben-NowellSWW12}.
Let $X$ be the marginal contribution of player $i$ in a random permutation. Since $C_e$ is a polynomial of degree $d$ and monotone, we have $X \geq 0$. By the definition of the Shapley value, $\chi_{ie}(P) = E[X]$. By the definition of the cost functions, the maximum possible value of $X$ is achieved when $i$ is the last player in the ordering, this happens in $1/|S_e(P)|$ fraction of the permutations. $X$ achieves the maximum value with probability at least $1/|S_e(P)|$ and the maximum value is at most $|S_e(P)| \cdot \chi_{ie}(P)$ because of the expectation and the bounds of the values.

To upper bound the variance of $X$ we define a second random variable $Y$ which is $|S_e(P)| \cdot \chi_{ie}(P)$ with probability $1/n$ and $0$ otherwise. Then,
\begin{align*}
	Var(X) \leq Var(Y) = E[Y^2] - E[Y]^2
	= (|S_e(P)|-1) \cdot \chi_{ie}(P)^2 
\end{align*}
Since $\overline{MC}_{ie}(P) = \frac{1}{k} \sum_{j=1}^k MC_{ie}^j(P)$, $E[\overline{MC}_{ie}(P)] = E[X] = \chi_{ie}(P)$ and the single permutations are independent of each other, we get
$Var(\overline{MC}_{ie}(P)) = \frac{Var(X)}{k} \leq \frac{1}{k} (|S_e(P)|-1) \cdot \chi_{ie}(P)^2$.
Using Chebyshev's inequality, we get
\begin{align*}
	Pr[|\overline{MC}_{ie}(P) -  \chi_{ie}(P)| \geq \mu \chi_{ie}(P)]
	&\leq \frac{Var(\overline{MC}_{ie}(P))}{\chi_{ie}(P)^2 \mu^2}\\
	&\leq \frac{ (|S_e(P)|-1) \cdot \chi_{ie}(P)^2 }{k  \chi_{ie}(P)^2 \mu^2}
	= \frac{ |S_e(P)|-1}{k \cdot \mu^2}
\end{align*}

Let $k=\frac{4(|S_e(P)|-1)}{\mu^2}$, then $\overline{MC}_{ie}(P)$ is a $\mu$-approximation for $\chi_{ie}(P)$ with probability at least $3/4$.	
If we repeat this 
$$\log\left(2 n^{c+3} \cdot \max_{i \in N}\CP_i \cdot |E| \cdot \left(1+\log\left(\frac{\playerCostsMax}{\playerCostsMin}\right)\right) \cdot  (d+1) \cdot \gamma^{-9}\right)$$ times, using the median value of all runs and applying Chernoff bounds, we directly get a result with failure probability at most $$\frac{1}{n^c \cdot n \cdot \max_{i \in N}\CP_i \cdot |E| \cdot \left(1+\log\left(\frac{\playerCostsMax}{\playerCostsMin}\right)\right) \cdot 2 \cdot n^2 \cdot (d+1) \cdot \gamma^{-9}}.$$
\hfill 
\end{proof}
\vspace{3mm}

\begin{proof}[Proof of Lemma~\ref{lemma:sampling_improvement_step}]
The result follows directly by applying the union bound:
\begin{align*}
	&Pr[\exists i \in N:
	\exists P'_i \in \CP_i:
	\exists e \in P'_i :
	|\overline{MC}_{ie}(P_{-i}, P'_i) -  \chi_{ie}(P_{-i}, P'_i)|
	\geq \mu \cdot \chi_{ie}(P_{-i}, P'_i)]\\
	&\leq n \cdot \max_{i \in N}\CP_i \cdot |E| \cdot \frac{1}{n^c \cdot n \cdot \max_{i \in N}\CP_i \cdot |E| \cdot \left(1+\log\left(\frac{\playerCostsMax}{\playerCostsMin}\right)\right) \cdot 2 \cdot n^2 \cdot (d+1) \cdot \gamma^{-9}}\\
	&\leq \frac{1}{n^c \cdot \left(1+\log\left(\frac{\playerCostsMax}{\playerCostsMin}\right)\right) \cdot 2 \cdot n^2 \cdot (d+1) \cdot \gamma^{-9}}
\end{align*}
\hfill 
\end{proof}
\vspace{3mm}

\begin{proof}[Proof of Lemma~\ref{lemma:sampling_total}]
The result follows directly by applying the union bound:
\begin{align*}
	&Pr[\exists \text{ an improvement step in which the sampling fails}]\leq \\
 &\leq \frac{ \left(1+\log\left(\frac{\playerCostsMax}{\playerCostsMin}\right)\right) \cdot 2 \cdot n^2 \cdot (d+1) \cdot \gamma^{-9} }{n^c \cdot \left(1+\log\left(\frac{\playerCostsMax}{\playerCostsMin}\right)\right) \cdot 2 \cdot n^2 \cdot (d+1) \cdot \gamma^{-9}}\\
 & \leq \frac{1}{n^c}.
\end{align*}
\hfill 
\end{proof}
\vspace{3mm}

\end{document}